\documentclass[11pt]{article} 
\usepackage{booktabs}
\usepackage{mathrsfs}
\usepackage{natbib,graphicx,setspace,lscape,longtable}
\usepackage{mathrsfs,amsmath,amsthm,amssymb,color}
\usepackage{natbib,epsfig,graphicx,pdfpages}
\usepackage{rotating}
\usepackage{subfigure}
\usepackage{subcaption}
\usepackage{graphicx}
\usepackage{float}
\usepackage{xurl}
\usepackage{threeparttable}
\usepackage{appendix}
\usepackage{tabularx} 
\usepackage{placeins}
\usepackage[margin=1in]{geometry}
\usepackage{setspace}
\usepackage[linesnumbered, ruled]{algorithm2e}

\bibpunct{(}{)}{;}{a}{,}{,}

\usepackage{hyperref} 
\hypersetup{
    colorlinks=true,
    linkcolor=blue,
    citecolor=blue,
    urlcolor=cyan,
    pdfborder={0 0 0}
}

\newtheorem{theorem}{Theorem}
\newtheorem{lemma}{Lemma}
\newtheorem{proposition}{Proposition}
\newtheorem{remark}{Remark}
\usepackage{amsthm}

\newtheorem{assumption}{{Assumption}}

\def\beq{\begin{equation}}
\def\eeq{\end{equation}}
\def\beqr{\begin{eqnarray}}
\def\eeqr{\end{eqnarray}}
\def\beqrs{\begin{eqnarray*}}
\def\eeqrs{\end{eqnarray*}}
\def\bet{\begin{theorem}}
\def\eet{\end{theorem}}
\def\bel{\begin{lemma}}
\def\eel{\end{lemma}}
\def\bep{\begin{proposition}}
\def\eep{\end{proposition}}
\def\bg{\begin{figure}[tbph]\begin{center}}
\def\eg{\end{center}\end{figure}}

\def\bc{\begin{center}}
\def\ec{\end{center}}

\def\wt{\widetilde}
\def\wh{\widehat}

\def\rank{\mbox{rank}}

\numberwithin{equation}{section}

\newcommand{\Cov}{\textnormal{Cov}}

\newcommand{\bA}{{\mathbf A}}
\newcommand{\bB}{{\mathbf B}}

\newcommand{\bH}{{\mathbf H}}
\newcommand{\bI}{{\mathbf I}}

\newcommand{\bL}{{\mathbf L}}
\newcommand{\bM}{{\mathbf M}}
\newcommand{\bQ}{{\mathbf Q}}
\newcommand{\bP}{{\mathbf P}}
\newcommand{\bR}{{\mathbf R}}

\newcommand{\bT}{{\mathbf T}}

\newcommand{\bb}{{\mathbf b}}

\newcommand{\bff}{{\mathbf f}}

\newcommand{\bu}{{\mathbf u}}
\newcommand{\bv}{{\mathbf v}}
\newcommand{\bw}{{\mathbf w}}
\newcommand{\bx}{{\mathbf x}}
\newcommand{\by}{{\mathbf y}}

\newcommand{\bfeta}  {\boldsymbol{\eta}}

\newcommand{\bSigma}{\boldsymbol{\Sigma}}

\newcommand{\bve}{\mbox{\boldmath$\varepsilon$}}

\newcommand{\bPhi} {\boldsymbol{\Phi}}

\newcommand{\bGamma} {\boldsymbol{\Gamma}}

\newcommand{\bC}{{\mathbf C}}

\newcommand{\bzero}{{\mathbf 0}}

\newcommand{\ve}{{\varepsilon}}
\renewcommand{\epsilon}{{\ve}}
\renewcommand{\hat}{\widehat}
\newcommand{\T}{{\rm T}}
\def\wt{\widetilde}
\renewcommand{\tilde}{\wt}
\newcommand{\tr}{\mbox{tr}}

\begin{document}

\title{ Structural Change Detection in High-Dimensional Transformed Factor Models via Canonical Correlation Analysis}

\author{Lei Jia, Shouri Hu, and Zhaoxing Gao\thanks{ Corresponding author: \texttt{zhaoxing.gao@uestc.edu.cn} (Z. Gao),  School of Mathematical Sciences, University of Electronic Science and Technology of China, Chengdu, 611731 P.R. China.} \\
School of Mathematical Sciences,
University of Electronic Science and Technology of China
}

\date{}

\maketitle
\begin{abstract}
This paper develops a canonical-correlation-based method for detecting structural changes in high-dimensional transformed factor models. 
The proposed approach exploits the low-rank canonical-correlation structure induced by dynamically dependent common factors, while serially uncorrelated idiosyncratic components correspond to a noise subspace with zero canonical correlations. 
We construct an eigenvalue-ratio criterion that measures residual dynamic dependence in the estimated noise subspace and identifies the true change point under sufficient separation of the regime-specific loading spaces or dynamic canonical correlation structures. 
Since the change-point location and the regime-specific factor numbers are both unknown, we further propose an alternating iterative estimation procedure that updates them sequentially until convergence. 
Under suitable mixing and moment conditions, we establish asymptotic properties of the proposed estimators, with convergence rates depending explicitly on factor strength, cross-sectional dimension, and sample size. 
Monte Carlo experiments and empirical applications to intraday stock returns and U.S. temperature series demonstrate the finite-sample performance and practical usefulness of the proposed method.
\end{abstract}

\noindent {\sl Keywords}: Structural change; High-dimensional time series; Transformed factor model; Canonical correlation analysis; Factor number estimation

\newpage

\section{Introduction}

High-dimensional time series are now routinely encountered in economics, finance, environmental studies, and many other empirical fields, where a large number of variables are observed over time and often exhibit both temporal dynamics and strong cross-sectional dependence. Although classical statistical models provide a foundation for time-series analysis, many of them were developed primarily for low-dimensional settings. For instance, the vector autoregressive moving-average (VARMA) model is a canonical parametric framework for multivariate time-series analysis. In high-dimensional settings, however, unrestricted VARMA models quickly become overparameterized and may suffer from identification difficulties. One way to mitigate these difficulties is to impose simplified dynamic structures, such as the scalar component model (SCM) proposed by \cite{TiaoTsay_1989}, or sparsity and regularization restrictions; see, for example, \cite{ShojaieMichailidis_2010}, \cite{HanLuLiu_2015}, and \cite{Davis2016}.

An alternative and widely used strategy is dimension reduction, which reduces both the statistical and computational complexity of high-dimensional time-series modeling. Prominent dimension-reduction approaches include the canonical correlation analysis (CCA) framework of \cite{BoxTiao_1977} and principal component analysis (PCA)-based methods developed in, for example, \cite{StockWatson_2002}. Factor models have become especially influential because they represent temporal and cross-sectional dependence through a small number of unobserved common factors; see, among others, \cite{ChamberainRothschild_1983}, \cite{BaiNg_Econometrica_2002}, \cite{bai2003}, \cite{StockWatson_2005}, \cite{panyao2008}, \cite{LamYaoBathia_Biometrika_2011}, \cite{lamyao2012}, and \cite{changguoyao2015}. Within this literature, \cite{LamYaoBathia_Biometrika_2011} and \cite{lamyao2012} estimate factor loadings and latent factors through the eigen-decomposition of sample cross-autocovariance matrices. By contrast, \cite{gaotsay_2019}, building on \cite{TiaoTsay_1989}, proposed a transformed factor approach that uses CCA of time-lagged moment matrices to separate dynamically dependent common factors from serially uncorrelated idiosyncratic components. A key feature of this framework is that, after a nonsingular linear transformation of the original high-dimensional series, the dynamically dependent factor component can be distinguished from the noise component through nonzero canonical correlations between the current observations and their lagged values.

Most classical high-dimensional factor models assume that the loading matrix is time invariant. In empirical applications, however, the underlying data-generating mechanism may undergo structural changes that alter model parameters. For example, major policy adjustments, financial crises, technological changes, or shifts in market conditions may induce abrupt changes in factor loadings, thereby changing the effects of latent factors on observed variables. Ignoring such unobserved structural changes may lead to substantial estimation bias, spurious inflation in the estimated number of factors, and degraded forecasting or inferential performance. Consequently, detecting structural changes is important for reliable factor modeling and subsequent empirical analysis.

A growing literature has investigated structural changes in high-dimensional factor models; see, among others, \cite{BreEick_2011}, \cite{Chenetal_2014}, \cite{LiuChen_D_2016}, and \cite{MaSu_2018}. \cite{LiuChen_D_2016} proposed a regime-switching dynamic factor model in which regime transitions are governed by an unobserved Markov chain, so that the transition times effectively serve as change points. Subsequently, \cite{BaiHanShi_2020} developed a least-squares-based estimation framework for change-point detection in high-dimensional time series. Their method assumes that a piecewise factor model with a structural change point can be recast as an equivalent pseudo-factor model without an explicit change point. Their procedure first estimates the pseudo-factor sequence and then selects the time point that minimizes the residual sum of squares. \cite{LiuChen_D_2020} further studied a threshold-variable approach to identifying change points within the factor estimation procedure. Building on this line of research, \cite{LiuZhang_2022} proposed a data-driven projection method for sequentially detecting change points in latent factor models. Their method projects a second-order sample cross-moment matrix, constructed under a candidate partition, onto the estimated orthogonal noise space associated with the true change-point partition, and then minimizes the squared norm of the projected matrix.

Estimating the number of latent factors is another fundamental model-selection problem in high-dimensional time-series analysis. \cite{Onatski_2010} and \cite{Onatski_2012} proposed thresholding methods for determining the number of factors based on the empirical spectral distribution of the sample covariance or autocovariance matrix. \cite{lamyao2012} estimated the number of factors by maximizing the ratio of adjacent eigenvalues derived from the cross-autocovariance matrix. Although eigenvalue-ratio methods are computationally convenient, they may be sensitive to dominant factors and can suffer from a masking effect, leading to substantial underestimation of the true number of factors. To address this issue, \cite{Wu_2016} proposed applying a monotone nonlinear transformation, such as the standard normal cumulative distribution function, to the eigenvalues before forming adjacent ratios. Such a transformation reduces the scale disparity between strong and weak factors and mitigates the masking effect induced by dominant factors. Following this idea, \cite{Xia_TCR_2017} introduced the transformed contribution ratio (TCR) method, which incorporates cumulative contribution information to improve finite-sample accuracy. In the transformed factor setting, \cite{gaotsay_2019} estimated the number of factors through a sequential hypothesis test for zero canonical correlations and established the asymptotic chi-square distribution of the proposed test statistic under the null hypothesis.

In the presence of structural change, however, factor-number estimation and change-point estimation are intrinsically connected. A misspecified break location may mix observations from different regimes and thereby distort the estimated factor dimension. Conversely, an inaccurate factor-number estimate may affect the construction of the noise subspace and distort the change-point criterion. This circular dependence is particularly relevant in transformed factor models, where the dynamic signal is identified through canonical correlations and the noise space is determined by the number of zero canonical correlations. Therefore, a change-point procedure for transformed factor models should jointly account for the unknown break location and the regime-specific factor numbers.

Motivated by the transformed factor framework of \cite{gaotsay_2019} and the structural change literature for high-dimensional factor models, this paper develops a CCA-based procedure for estimating a single structural change point in high-dimensional transformed factor models. The proposed method exploits the low-rank canonical-correlation structure induced by dynamically dependent common factors. At the true change point, the regime-specific canonical correlation matrix has a signal subspace associated with the common factors and a complementary noise subspace associated with zero canonical correlations. We therefore construct an eigenvalue-based criterion that measures the residual canonical dependence in the estimated noise subspace. The population counterpart of this criterion is minimized at the true change point, which provides the basis for identification and estimation.

The contribution of this paper is threefold. First, we introduce a canonical-correlation noise-space criterion for structural change detection in transformed factor models. Unlike methods based directly on covariance or cross-moment matrices, the proposed criterion targets the dynamic dependence structure captured by canonical correlations and is therefore tailored to the transformed factor framework. Second, we develop an alternating iterative estimation (AIE) procedure to handle the mutual dependence between the unknown change point and the regime-specific factor numbers. The procedure alternates between factor-number estimation and change-point estimation until convergence, thereby reducing the sensitivity of the change-point estimator to initial factor-number choices. Third, we establish the asymptotic properties of the proposed estimators under suitable mixing and moment conditions. The resulting convergence rates explicitly reflect the effects of factor strength, cross-sectional dimension, and sample size. Monte Carlo experiments and empirical applications further illustrate the finite-sample performance and practical usefulness of the proposed method.

The remainder of this paper is organized as follows. Section \ref{sec2} introduces the single-change-point transformed factor model and develops the proposed CCA-based change-point estimation procedure. It also presents the alternating iterative estimation algorithm for jointly estimating the change point and the regime-specific factor numbers. Section \ref{sec3} establishes the theoretical properties of the proposed estimators. Section \ref{sec4} reports Monte Carlo simulation results. Section \ref{sec5} applies the proposed method to intraday stock returns and daily temperature series. Section \ref{sec6} concludes. Technical proofs are collected in the Appendix.

We conclude this section by introducing notation used throughout the paper. For a matrix $\bH=(h_{ij})$, $||\bH||_2=\sqrt{\lambda_{\max} (\bH^{T} \bH ) }$ and $||\bH||_F=\sqrt{\tr(\bH^{T} \bH)}$ denote, respectively, the $L_2$ norm and the Frobenius norm, where $\lambda_{\max} (\bH) $ is the largest eigenvalue of $\bH$, and $\tr(\cdot)$ denotes the trace operator. $\text{Diag}[\bH_{1},\bH_{2},\ldots]$ denotes a block diagonal matrix, and $\mathcal{M}(\bH)$ denotes the column space spanned by the matrix $\bH$. $\sigma_{\min}^{+}$ denotes the minimum positive singular value. The superscript $T$ denotes the transpose of a vector or matrix. In addition, we use the notation $a \asymp b$ to denote $a=O(b)$ and $b=O(a)$. Finally, $\left \lfloor x \right \rfloor $ and $\left \lceil x \right \rceil $ denote the greatest integer less than or equal to $x$ and the smallest integer greater than or equal to $x$, respectively.

\section{Methodology}\label{sec2}
\subsection{Model framework}
Let $\boldsymbol{Y}=(\by_{1},\by_{2},\ldots,\by_{T})$ be a $p$-dimensional time series, where each observation at time $t$ is a column vector $\by_{t}=(y_{1t},y_{2t},\ldots,y_{pt})^{T}$. We allow the latent factor structure to undergo a single structural change point, which partitions the sample into two temporal regimes. For any $1\le t \le T$, we consider a structural transformed factor model
with a single change point:
\begin{equation}\label{cpf-transform}
    \by_t=\left\{
    \begin{array}{ll}
    \wt\bL_{1}^{(1)}\bff_{t}^{(1)} + \wt\bL_{2}^{(1)}\bve_{t}^{(1)}, 
    & 1\le t\le k_0, \\[2mm]
    \wt\bL_{1}^{(2)}\bff_{t}^{(2)} + \wt\bL_{2}^{(2)}\bve_{t}^{(2)}, 
    & k_0 < t \le T.
    \end{array}
    \right.
\end{equation}
Here, $\bff_{t}^{(i)}\in\mathbb R^{r_i}$, $i=1,2$, denotes the latent
common factor vector in the pre- and post-change regimes, respectively, and
the positive integer $k_0$ is the true but unknown location of the change
point. Furthermore, $\bve_{t}^{(i)} \in \mathbb R^{v_i}$ denotes a serially
uncorrelated idiosyncratic component with mean zero and covariance matrix
$\Cov(\bve_{t}^{(i)})$. For $i=1,2$, the matrices
$\wt\bL_{1}^{(i)}\in \mathbb R^{p\times r_i}$ denote the unknown
regime-specific loading matrices associated with the common factors, whereas
$\wt\bL_{2}^{(i)}\in \mathbb R^{p\times v_i}$ are the corresponding loading
matrices associated with the idiosyncratic components.

When $r_i+v_i=p$ and the matrix
$[\wt\bL_{1}^{(i)},\wt\bL_{2}^{(i)}]$ is nonsingular, Model
\eqref{cpf-transform} can be regarded as a transformed factor model. In
particular, by defining
$\bT_i=[\wt\bL_{1}^{(i)},\wt\bL_{2}^{(i)}]^{-1}$, we have
$\bT_i\by_t=(
        \bff_t^{(i)}{^T},
        \bve_t^{(i)}{^T})^T$
within regime $i$. Hence, after the regime-specific linear transformation,
the observed vector is decomposed into a signal component and an innovation
component. This formulation is consistent with the scalar component model of
\cite{TiaoTsay_1989} and \cite{gaotsay_2019}. For each $i=1,2$, we assume $\bff_{t}^{(i)}$ and $\bve_{t}^{(i)}$ are mutually independent. To model the temporal dependence, we further assume that $\bff_{t}^{(i)}$ follows a stationary VAR($d$) process, 
\begin{equation}\label{var-d}
 \bff_{t}^{(i)} = \sum_{j=1}^d \bPhi_{j}^{(i)} \bff_{t-j}^{(i)} + \bu_{t}^{(i)},\quad i=1,2,
 \end{equation}
where $\{\bu_{t}^{(i)}\in\mathbb{R}^{r_i},t=1,2,\ldots\}$ is the innovation series with diagonal covariance matrix, and $\bPhi_{j}^{(i)}$ denotes the autoregressive coefficient matrix at lag $j$.
 
Model (\ref{cpf-transform}) can be equivalently reformulated into the following compact form:
\begin{equation}\label{cpf-transform-simple}
    \by_t=\left\{\begin{matrix}
    \wt\bL^{(1)}\bfeta_{t}^{(1)},\quad 1\le t\le k_0 
    \\
    \wt\bL^{(2)}\bfeta_{t}^{(2)},\quad k_0 < t \le T,
    \end{matrix}\right.
\end{equation}
where $\wt\bL^{(i)}=(\wt\bL_1^{(i)},\wt\bL_2^{(i)})$ is a  nonsingular $p \times p$ real-valued matrix, and $\bfeta_{t}^{(i)}=[\bff_{t}^{(i)^T},\bve_{t}^{(i)^T}]^{T}$ is a $p$-dimensional vector whose covariance matrix is block diagonal, namely 
\[\mathrm{Diag}[\Cov(\bff_t^{(i)}),\Cov(\bve_t^{(i)})].\] 
Note that any nonsingular linear transformation of the original series $\by_{t}$ does not alter the canonical correlation between $\by_{t}$ and its lagged values; consequently, the representation in $(\ref{cpf-transform-simple})$ is not unique. In light of this rotational non-uniqueness, we consider the following standardized transformed factor model with a change point:
\begin{equation}\label{cpf-transform-simple2}
    \bSigma_\by^{-1/2}\by_t=\left\{\begin{matrix}
    \bL^{(1)}\bfeta_{t}^{(1)},\quad 1\le t\le k_0 
    \\
    \bL^{(2)}\bfeta_{t}^{(2)},\quad k_0 < t \le T,
    \end{matrix}\right.
\end{equation}
where $\bSigma_\by=\Cov(\by_t)$ is the covariance matrix used for standardization, and $\bL^{(i)}=(\bL_{1}^{(i)},\bL_{2}^{(i)})=\bSigma_{\by}^{-1/2} \wt\bL^{(i)} $ is a $p\times p$ orthogonal matrix satisfying $\bL^{(i)}\bL^{(i)^{T}}=\bL^{(i)^{T}}\bL^{(i)}=\bI_{p}$. 
Let $\gamma_1$ and $\gamma_2$ with $0<\gamma_1<\gamma_2<1$ denote the lower and upper truncation constants for the change-point search interval, respectively, so that the effective search-window length is $n=\left \lfloor \gamma_{2}T \right \rfloor-\left \lfloor \gamma_{1}T \right \rfloor$. For any candidate change point $k$ within the interval  $ [\left \lfloor \gamma_{1}T \right \rfloor,\left \lfloor \gamma_{2}T \right \rfloor]$, let $\mathbb{I}_{t,1}(k)$ and $\mathbb{I}_{t,2}(k)$ denote the corresponding indicator functions such that $\mathbb{I}_{t,1}(k)=1$ if $1\le t \le k$ and $\mathbb{I}_{t,2}(k)=1$ if $k+1\le t \le T$. For notational convenience, we use \(\mathbb{I}_{t,i}(k)\) to indicate the segment induced by a candidate split \(k\). All covariance and cross-covariance matrices indexed by \(i\) and \(k\) are understood as segment-restricted quantities computed within the corresponding segment. Thus, for example,
\[
\Cov(\by_t\mathbb{I}_{t,i}(k))
\equiv
\Cov(\by_t\mid \mathbb{I}_{t,i}(k)=1),
\]
and the same convention applies to cross-covariance matrices such as
\(\Cov(\by_t\mathbb{I}_{t,i}(k),\by_{t,m}\mathbb{I}_{t,i}(k))\) below. Under the partition induced by the candidate change point $k$, the latent factors can be explicitly recovered as follows:
\begin{equation}\label{trans_factorepsilon}
    [\bL^{(i)}]^{-1}[\bSigma_{\by,i}(k)]^{-1/2}\by_t \mathbb{I}_{t,i}(k)=
    \left[\begin{array}{c}\bff_{t}^{(i)} \\ \bve_{t}^{(i)}\end{array}\right],\quad i=1,2,
\end{equation}  
where \(\bSigma_{\by,i}(k)=\Cov(\by_t\mathbb{I}_{t,i}(k))\) denotes the segment-restricted covariance matrix of \(\by_t\) in the \(i\)-th segment induced by the candidate split \(k\), following the notational convention above.

Let $\by_{t,m}=(\by_{t-1}^T ,\by_{t-2}^T,\ldots,\by_{t-m}^T)^T$ be the $mp$-dimensional vector of past $m$ lagged observations of $\by_t$ where the integer m satisfies $1\le m \le t-1$. We then define the cross-covariance matrix $\bSigma_{\by\by_m,i}(k)=\Cov(\by_{t}\mathbb{I}_{t,i}(k),\by_{t,m}\mathbb{I}_{t,i}(k))$ and the autocovariance matrix of the lagged vector $\bSigma_{\by_m,i}(k)=\Cov(\by_{t,m}\mathbb{I}_{t,i}(k))$. To estimate the parameters $\{\bL^{(1)},\bL^{(2)},\bff_{t}^{(1)},\bff_{t}^{(2)},k_0\}$ under model (\ref{cpf-transform-simple2}), we compare the canonical correlation structures between $\by_t$ and its lagged vector $\by_{t,m}$ across the two regimes induced by the candidate change point $k$. From a statistical perspective, applying CCA to maximize the linear association between $\by_t$ and $\by_{t,m}$ is equivalent to eigen-decomposition on the following target matrix:
\begin{equation}\label{M}
    \bM_i(k)=[\bSigma_{\by,i}(k)]^{-1/2}\bSigma_{\by\by_m,i}(k)[\bSigma_{\by_m,i}(k)]^{-1}\bSigma_{\by_m\by,i}(k)[\bSigma_{\by,i}(k)]^{-1/2},\quad i=1,2.
\end{equation}
Let $\bfeta_{t,m}^{(i)}=(\bfeta_{t-1}^{(i)\ ^T} ,\bfeta_{t-2}^{(i)\ ^T},...,\bfeta_{t-m}^{(i)\ ^T})^T$. We define $\bSigma_{\bfeta\bfeta_m,i}(k)$, $\bSigma_{\bfeta,i}(k)$ and $\bSigma_{\bfeta_m,i}(k)$ analogously as the covariance matrices of the corresponding stochastic vectors. When $k=k_0$, the matrix admits the following representation:
\begin{equation}\label{M_yita}
    \bM_i(k_0)=\bL^{(i)}\bSigma_{\bfeta\bfeta_m,i}(k_0)[\bSigma_{\bfeta_m,i}(k_0)]^{-1}\bSigma_{\bfeta_m\bfeta,i}(k_0)\bL^{(i)\ T},\quad i=1,2.
\end{equation}
Under the assumption that the idiosyncratic components are serially uncorrelated, it follows that $\Cov(\bve_{t}^{(i)}\mathbb{I}_{t,i}(k),\bfeta_{t,m}^{(i)}\mathbb{I}_{t,i}(k))=0$ for $i=1,2$. Therefore, we have the simplified form of the matrix $\bM_{i}(k_0)$:
\begin{equation}\label{M_yita_simple}
    \bM_i(k_0)=\bL_{1}^{(i)}\bSigma_{\bff\bfeta_m,i}(k_0)[\bSigma_{\bfeta_m,i}(k_0)]^{-1}\bSigma_{\bfeta_m\bff,i}(k_0)\bL_{1}^{(i)\ T},\quad i=1,2,
\end{equation}
where $\bSigma_{\bff\bfeta_m,i}(k_0)=\Cov(\bff_{t}^{(i)}\mathbb{I}_{t,i}(k_0),\bfeta_{t,m}^{(i)}\mathbb{I}_{t,i}(k_0))$.

From (\ref{M_yita_simple}), at the true change point $k_0$, the CCA matrix $\bM_i(k_0)$ is driven solely by the common factor component, while the idiosyncratic component contributes no serial dependence under the serial-uncorrelated noise assumption. As a consequence, the rank of $\bM_i(k_0)$ is determined by the factor dimension $r_i$, and the remaining eigenvalues associated with the orthogonal noise space are theoretically equal to zero. This structure provides the foundation for change-point detection: when the candidate point coincides with the true change-point location, the residual canonical dependence lying in the estimated noise subspace is minimized.

\subsection{Estimation of the change point and factor loadings}
We first consider the case in which the regime-specific factor numbers $r_i$ in both regimes are known. For each regime $i \in \{1,2\}$, let $\lambda_{j}^{(i)}(k)$ for $j=1,2,\ldots,p,$ denote the $j$-th largest eigenvalue of $\bM_i(k)$, and let $\bb_{j}^{(i)}(k)$ be the corresponding unit eigenvector. Define the signal-subspace eigenvector matrix $\bP^{(i)}(k)=(\bb_{1}^{(i)}(k),\bb_{2}^{(i)}(k),...,
\bb_{r_i}^{(i)}(k))$ and the complementary noise-subspace eigenvector matrix $\bQ^{(i)}(k)=(\bb_{r_{i}+1}^{(i)}(k),\bb_{r_{i}+2}^{(i)}(k),\ldots,\bb_{p}^{(i)}(k))$, such that $(\bP^{(i)}(k),\bQ^{(i)}(k))$ forms a complete orthogonal basis of $\mathbb{R}^p$. From (\ref{M_yita_simple}), it is straightforward to verify that the factor loading space $\mathcal{M}(\bL_1^{(i)})$ is equivalent to the eigenspace $\mathcal{M}(\bP^{(i)}(k_0))$. Therefore, strict orthogonality holds between the loading space and the noise subspace, i.e., $\mathcal{M}(\bL_1^{(i)})\ \bot \ \mathcal{M}(\bQ^{(i)}(k_0))$. For any $i=1,2$, we obtain the following relation:
\begin{equation}\label{Qt@M@Q_k0}
\begin{aligned}
    &\bQ^{(i)}(k_0)^T\bM_{i}(k_0)\bQ^{(i)}(k_0)\\
    &\quad\quad =\bQ^{(i)}(k_0)^T\bL_{1}^{(i)}\bSigma_{\bff\bfeta_m,i}(k_0)[\bSigma_{\bfeta_m,i}(k_0)]^{-1}\bSigma_{\bfeta_m\bff,i}(k_0)\bL_{1}^{(i)\ T}\bQ^{(i)}(k_0)\\
    &\quad\quad=\mathbf{O}_{v_i}.
\end{aligned}
\end{equation}
Furthermore, we have $\rank(\bM_i(k_0))=r_i$, which implies that all subsequent eigenvalues are theoretically equal to zero, that is, $\lambda_{l}^{(i)}(k_0)=0$ for any $l=r_i+1,r_i+2,\ldots,p$. Thus, equation (\ref{Qt@M@Q_k0}) admits an alternative spectral representation: 
\begin{equation}\label{Qt@M@Q_k0_simple}
   \bQ^{(i)}(k_0)^T\bM_{i}(k_0)\bQ^{(i)}(k_0)
   = \text{diag}(\lambda_{r_i+1}^{(i)}(k_0),\lambda_{r_i+2}^{(i)}(k_0),...,\lambda_{p}^{(i)}(k_0))=\mathbf{O}_{v_i}.
\end{equation}

Motivated by the projection of a second-order cross-moment matrix onto its orthogonal noise subspace in \cite{LiuZhang_2022}, we define the following criterion for estimating the change point:
\begin{equation}\label{G_score}
    G(k)=\sum_{i=1}^{2}\frac{\|\bQ^{(i)}(k)^T\bM_{i}(k)\bQ^{(i)}(k)\|_2}{\|\bM_{i}(k)\|_2},
\end{equation}
where the criterion function $G(k) \ge 0$ for all admissible $k$, and equality holds if and only if the candidate point matches the true change point $k=k_0$, as implied by (\ref{Qt@M@Q_k0}). Since $\bQ^{(i)}(k)$ is constructed from the eigendecomposition of $\bM_i(k)$ under the candidate partition, the criterion in (\ref{G_score}) is equivalently a candidate-specific residual eigenvalue criterion. It measures the deviation of $\bM_i(k)$ from the ideal rank-$r_i$ canonical correlation structure through the magnitude of its noise-space eigenvalues. The normalization by $\|\bM_{i}(k)\|_2$ is introduced to reduce the impact of regime-dependent scale variation across candidate partitions. In this way, the criterion function focuses on the relative magnitude of the residual component in the estimated noise subspace, rather than on the overall size of the canonical correlation matrix itself. 

By spectral decomposition, the proposed criterion function admits a simpler representation in terms of eigenvalue ratios:
\begin{equation}\label{G_score_simple}
    G(k)=\sum_{i=1}^{2}\frac{\lambda_{r_i+1}^{(i)}(k)}{\lambda_{1}^{(i)}(k)}.
\end{equation}
By (\ref{Qt@M@Q_k0_simple}), it follows immediately that $G(k_0)=0$. The condition $G(k_0)=0$ characterizes the ideal low-rank canonical correlation structure under the true partition. To see the identification mechanism more clearly, consider a candidate split $k \neq k_0$. If $k < k_0$, the second segment contains observations generated from both regimes; if $k > k_0$, the first segment is similarly contaminated by observations from the two regimes. In either case, the regime-specific CCA matrix computed under the candidate partition is no longer generated by a single homogeneous transformed factor structure. Provided that the factor loading spaces, or the corresponding dynamic canonical correlation structures, differ sufficiently across the two regimes, such a mixed segment induces additional nonzero canonical-correlation components outside the regime-specific signal space. Consequently, the residual eigenvalues in the candidate-specific noise subspace become positive, leading to $G(k)>G(k_0)$. This explains why the proposed criterion identifies the true change point at the population level.

Furthermore, replacing the matrix $L_2$ norm in (\ref{G_score}) by the Frobenius norm yields,
\begin{equation}\label{G_Fnorm}
    G_F(k)=\sum_{i=1}^{2}\frac{\|\bQ^{(i)}(k)^T\bM_{i}(k)\bQ^{(i)}(k)\|_F}{\|\bM_{i}(k)\|_F}.
\end{equation}
By definition of the Frobenius norm, (\ref{G_Fnorm}) can be rewritten as the ratio of cumulative squared noise eigenvalues to cumulative squared total eigenvalues, as shown in (\ref{G_Fnorm_lambda}),
\begin{equation}\label{G_Fnorm_lambda}
    G_{F}(k)=\sum_{i=1}^{2}\sqrt{\frac{\sum_{j=1}^{v_i}[\lambda_{r_i+j}^{(i)}(k)]^2}{\sum_{j=1}^{p}[\lambda_{j}^{(i)}(k)]^2}}.
\end{equation}
The corresponding change-point estimator under the Frobenius norm is defined analogously. The two criterion functions emphasize different aspects of the residual structure in the noise subspace. The $L_2$-norm-based criterion is more sensitive to the largest deviation from the ideal low-rank factor structure, whereas the Frobenius-norm-based criterion captures the cumulative contribution of all residual noise-space eigen-components. For this reason, both versions are retained and compared in the subsequent simulation study.

In empirical implementation, we compute the following sample covariance matrices before and after the candidate change point $k$,
\[
    \hat\bSigma_{\by,1}(k)=\frac{1}{k}\sum_{t=1}^{k}(\by_{t}-\bar\by^{(1)})(\by_{t}-\bar\by^{(1)})^T,\quad\text{and}\quad
    \hat\bSigma_{\by,2}(k)=\frac{1}{T-k}\sum_{t=k+1}^{T}(\by_{t}-\bar\by^{(2)})(\by_{t}-\bar\by^{(2)})^T,
\]
where $\bar\by^{(1)}=\dfrac{1}{k}\sum_{t=1}^{k}\by_{t}$ and $\bar\by^{(2)}=\dfrac{1}{T-k}\sum_{t=k+1}^{T}\by_{t}$. Analogous definitions are used for other sample covariance matrices. It follows that
\begin{equation}\label{M_hat}
    \hat\bM_i(k)=[\hat\bSigma_{\by,i}(k)]^{-1/2}\hat\bSigma_{\by\by_m,i}(k)[\hat\bSigma_{\by_m,i}(k)]^{-1}\hat\bSigma_{\by_m\by,i}(k)[\hat\bSigma_{\by,i}(k)]^{-1/2},\quad i=1,2.
\end{equation}
For each $i=1,2$ and $j=1,2,\ldots,p$, let $\hat\lambda_j^{(i)}(k)$ denote the estimate of the $j$-th largest eigenvalue of $\hat\bM_{i}(k)$, and let $\wh\bb_{j}^{(i)}(k)$ be the associated unit eigenvector. Then, if the number of factors $r_i$ is known, the estimator of the factor loading matrix in regime $i$ is given by $\wh\bL_1^{(i)}(k)=(\wh\bb_{1}^{(i)}(k),\wh\bb_{2}^{(i)}(k),...,\wh\bb_{r_i}^{(i)}(k))$. Finally, the factor vector can be estimated by (\ref{trans_factorepsilon}) as follows: 
\begin{equation}\label{factor_hat}
    \wh\bff_{t}^{(i)}(k)=[\wh\bL_1^{(i)}(k)]^T[\wh\bSigma_{\by,i}(k)]^{-1/2}\by_t \mathbb{I}_{t,i}(k), \quad i=1,2.
\end{equation} 
Next, we need to estimate the change point. Based on the preceding discussion, the sample criterion functions under the $L_2$ norm and the Frobenius norm are given by (\ref{G_simple_hat}) and (\ref{G_Fnorm_lambda_hat}), respectively:
\begin{equation}\label{G_simple_hat}
    \hat{G}(k)=\sum_{i=1}^{2}\frac{\wh\lambda_{r_i+1}^{(i)}(k)}{\wh\lambda_{1}^{(i)}(k)}.
\end{equation}
\begin{equation}\label{G_Fnorm_lambda_hat}
    \hat{G}_{F}(k)=\sum_{i=1}^{2}\sqrt{\frac{\sum_{j=1}^{v_i}[\wh\lambda_{r_i+j}^{(i)}(k)]^2}{\sum_{j=1}^{p}[\wh\lambda_{j}^{(i)}(k)]^2}}.
\end{equation}
Since $G(k_0)=0$ characterizes the true change point, we estimate its location by minimizing the empirical criterion function over the admissible search domain:
\begin{equation}\label{minG_hat}
    \hat{k} = \arg\min_{\left \lfloor \gamma_{1}T \right \rfloor \le k \le \left \lfloor \gamma_{2}T \right \rfloor}\hat G(k).
\end{equation}
Similarly, the change-point estimator under the Frobenius norm is defined by:
\begin{equation}\label{minG_F_hat}
    \hat{k} = \arg\min_{\left \lfloor \gamma_{1}T \right \rfloor \le k \le \left \lfloor \gamma_{2}T \right \rfloor}\hat G_F(k).
\end{equation}

A key identifying condition in the proposed procedure is that the idiosyncratic components are serially uncorrelated. This assumption ensures that, at the true change point $k_0$, the residual canonical correlation matrix in the noise subspace has exact zero eigenvalues, which forms the basis of the criterion function $G(k)$. This assumption is stronger than the weak-dependence conditions often imposed in approximate factor models, where idiosyncratic components may exhibit limited temporal or cross-sectional dependence. Its role here is to deliver a sharp zero-eigenvalue structure in the population criterion, which in turn yields a tractable identification argument for the change point.

\subsection{Estimation of the Number of Factors}
Estimating the regime-specific factor numbers $r_1$ and $r_2$ is an important component of the proposed procedure. We next present the one-step estimation method. For $i=1,2$, we can estimate the number of zero canonical correlations $v_i$ (and hence $r_i$) by testing the null hypothesis $H_0:\lambda_{p-v_{i}+1}=...=\lambda_p=0$ and $\lambda_{p-v_i}\neq0$ versus the alternative hypothesis $H_a:\lambda_{p-v_i}=0$. Let $n_1=\left \lfloor \gamma_{1}T \right \rfloor$ and $n_2=T-\left \lfloor \gamma_{2}T \right \rfloor$, which represent the sample sizes of the two boundary regions that exclude the possible change-point neighborhood, respectively. For simplicity, let $\wh\lambda_{i,j}(\gamma_i)$ be the $j$-th largest eigenvalue of $\wh\bM_i(\gamma_i)$, where $\wh\bM_i(\gamma_i)$ is the canonical correlation matrix computed from the boundary interval $[1,\left \lfloor\gamma_1T\right \rfloor]$ or $[\left \lfloor\gamma_2T\right \rfloor,T]$, respectively. A test statistic for testing the hypothesis of zero canonical correlations is given by 
\begin{equation}\label{HT_statistic}
    S_{n_i}(v)=-(n_i-m+1)\sum_{j=1}^{v}\log(1-\wh\lambda_{i,p-j+1}(\gamma_{i})),
\end{equation}
where the statistic $S_{n_i}(v_i)$ asymptotically follows the chi-square distribution with degrees of freedom $df_i=v_i[(m-1)p+v_i]$ under the null hypothesis. The above procedure conducts the test sequentially from $v_i=1$ until the null hypothesis is accepted, leading to the estimator
\begin{equation}\label{HT_method}
    \hat{r}_{i,HT}=p-\min_{1 \le v \le p-1}\{v:S_{n_i}(v) \le \chi^2_{\alpha}(df_i)\},
\end{equation}
where $\chi^2_{\alpha}(df_i)$ is the $\alpha$-quantile of the chi-square distribution.

Furthermore, we consider the transformed contribution ratio (TCR) estimator of \cite{Xia_TCR_2017} for comparison.


\begin{equation}\label{TCR}
    \hat{r}_{i,TCR}=\arg\min_{1 \le j \le r_{\max}}\frac{\log(1+\wh\lambda_{i,j+1}(\gamma_{i})/V_{i,j})}{\log(1+\wh\lambda_{i,j}(\gamma_{i})/V_{i,j-1})},
\end{equation}
where $V_{i,j}=\sum_{s=j+1}^{p}\wh\lambda_{i,s}(\gamma_i)$.

\begin{algorithm}[H]
\caption{Alternating Iterative Estimation (AIE)}
\label{algorithm1}
\KwIn{Data matrix \(Y\), convergence thresholds \(\varepsilon_1, \varepsilon_2\), maximum number of iterations \(K_{\max}\)}
\KwOut{Estimated change-point \(\hat{k}^*\) and regime-specific factor numbers \(\hat{r}_1^*\), \(\hat{r}_2^*\)}

Compute initial factor-number estimates from boundary data:
\[
(\hat{r}_1^{(0)}, \hat{r}_2^{(0)}) 
\gets 
\mathcal F(\lfloor \gamma_1 T \rfloor, \lfloor \gamma_2 T \rfloor).
\]

Compute the initial change-point estimate:
\[
\hat{k}^{(0)} 
\gets 
\arg \min_{\lfloor \gamma_1 T \rfloor \le k \le \lfloor \gamma_2 T \rfloor} 
\hat G(k; \hat{r}_1^{(0)}, \hat{r}_2^{(0)}).
\]

Set \(t \gets 0\).

\While{\(t < K_{\max}\)}{

Update factor-number estimates based on the current change-point estimate:
\[
(\hat{r}_1^{(t+1)}, \hat{r}_2^{(t+1)}) 
\gets 
\mathcal F(\hat{k}^{(t)}, \hat{k}^{(t)}+1).
\]

Update the change-point estimate based on the updated factor numbers:
\[
\hat{k}^{(t+1)} 
\gets 
\arg \min_{\lfloor \gamma_1 T \rfloor \le k \le \lfloor \gamma_2 T \rfloor} 
\hat G(k; \hat{r}_1^{(t+1)}, \hat{r}_2^{(t+1)}).
\]

Set \(t \gets t+1\).

\If{
\[
|\hat{k}^{(t)}-\hat{k}^{(t-1)}| \le \varepsilon_1
\]
\[
\text{and}\quad
\frac{1}{2}
\left(
|\hat{r}_1^{(t)}-\hat{r}_1^{(t-1)}|
+
|\hat{r}_2^{(t)}-\hat{r}_2^{(t-1)}|
\right)
\le \varepsilon_2
\]
}{
\textbf{break}
}
}

Set
\[
\hat{r}_1^* \gets \hat{r}_1^{(t)},\qquad
\hat{r}_2^* \gets \hat{r}_2^{(t)},\qquad
\hat{k}^* \gets \hat{k}^{(t)}.
\]

\end{algorithm}

The use of boundary data serves as a stable starting point for factor-number estimation. Since the two boundary segments are far away from the central search region for the change point, they are less likely to be contaminated by observations near the structural change point and therefore more likely to reflect relatively homogeneous factor regimes. This feature makes them suitable for constructing initial estimators of the factor dimension. Nevertheless, relying only on boundary data does not fully exploit the information contained in the entire sample. Therefore, the one-step method for estimating the number of factors described above has certain limitations. Conversely, using a larger portion of the sample requires reasonably accurate knowledge of the change-point location, creating an intrinsic circular dependency between change-point detection and factor-number estimation. To address this issue, 
we propose the Alternating Iterative Estimation (AIE) algorithm, as shown in Algorithm~\ref{algorithm1}, where \(\mathcal F(a,b)\) denotes a generic factor-number estimation rule based on the two subsamples \([1,a]\) and \([b,T]\). The AIE algorithm alternates between change-point estimation and factor-number selection until the two estimates stabilize. We first obtain initial estimates of the factor numbers from the boundary data and then estimate the change point based on these preliminary values. Given the updated change-point estimate, the sample is repartitioned and the factor numbers in the two regimes are re-estimated. Repeating these two steps allows the procedure to gradually reduce the circular dependence between the unknown change-point location and the unknown factor numbers. 

In this AIE algorithm, the initial estimators $\hat{r}^{(0)}_1$ and $\hat{r}^{(0)}_2$ are obtained by the same method (e.g., TCR or HT) using the boundary data, and the initial change-point estimate $\hat{k}^{(0)}$ is then computed by treating $\hat{r}^{(0)}_1$ and $\hat{r}^{(0)}_2$  as the number of factors. The iteration terminates when the successive changes in the factor-number estimates and the change-point estimate fall below the thresholds $\epsilon_2$ and $\epsilon_1$, respectively, or when the number of iterations reaches $K_{\max}$. Upon termination, the resulting estimators are reported.

\section{Theoretical Properties}\label{sec3}
We next establish the theoretical properties of the proposed estimators. The analysis is conducted under an $\alpha$-mixing condition on the joint process $\{(\by_t,\bff_t^{(1)},\bff_t^{(2)}):1\leq t\leq T\}$. The corresponding mixing coefficients are defined as follows:
\begin{equation}\label{amix}
\alpha_p(k)=\sup_{i}\sup_{A\in\mathcal{F}_{-\infty}^i,B\in \mathcal{F}_{i+k}^\infty}|P(A\cap B)-P(A)P(B)|,
\end{equation}
where $\mathcal{F}_i^j$ is the $\sigma$-field generated by $\{(\by_t,\bff_t^{(1)},\bff_t^{(2)}):i\leq t\leq j\}$.

\begin{assumption}\label{a1}
The joint process $\{(\by_t,\bff_t^{(1)},\bff_t^{(2)}):1\leq t\leq T\}$ is $\alpha$-mixing with mixing coefficients satisfying $\sum_{k=1}^\infty\alpha_p(k)^{1-2/\xi}=\alpha<\infty$ for some $\xi>2$, where $\alpha_p(k)$ is defined in (\ref{amix}).
\end{assumption}

\begin{assumption}\label{a2}
For any $i=1,2$, $j=1,...,p$, and $1 \le t \le T$, the observables satisfy the moment conditions $\mathbb{E}(y_{jt})=0$ and $\mathbb{E}|y_{jt}|^{2\xi}\leq c$, alongside the covariance matrix $\Cov(\bfeta_t^{(i)})=\bI_p$, where $c>0$ is a constant and $\xi$ is given in Assumption~\ref{a1}.
\end{assumption}

\begin{assumption}\label{a3} 
    The factor processes $\bff_t^{(1)}$ and $\bff_t^{(2)}$ before and after the change point $k_0$ are non-degenerate and stationary VAR($d$) processes.
\end{assumption}

Assumption~\ref{a1} is a standard regularity condition for weakly dependent time series, and its validity for VAR-type models is discussed in \cite{gaoetal2017}. For each $i=1,2$, the moment conditions $\mathbb{E}(y_{jt})=0$ and $\mathbb{E}|y_{jt}|^{2\xi}\leq c$ in Assumption~\ref{a2} can be justified under suitable conditions on the series $\bfeta_t^{(i)}$ and the corresponding loading matrix $\widetilde\bL^{(i)}$ in model (\ref{cpf-transform-simple}). For instance, these conditions hold if $\mathbb{E}(\eta_{jt}^{(i)})=0$, $\mathbb{E}|\eta_{jt}^{(i)}|^{2\xi}< \infty$, and $\|\widetilde\bL^{(i)}\|_{\infty}<\infty$, where $\|\cdot\|_{\infty}$ denotes the row norm of a matrix. 

\begin{assumption}\label{a4}
For each $i=1,2$, there exist positive constants $c_1$, $c_2$, $c_3$, $c_4$, $\kappa_1$ and $\kappa_2$ such that 
\[ c_1 \leq \sigma_{\min}^{+}\left(\bSigma_{\bff{\bfeta_m,i}}(k_0)\right)\leq \|\bSigma_{\bff{\bfeta_m,i}}(k_0)\|_{2}\leq  c_2, \] and 
\[c_3 \leq \lambda_{\min}(\bSigma_{\bfeta_m,i}(k_0))\leq \|\bSigma_{\bfeta_m,i}(k_0)\|_{2}\leq c_4.\] 
Furthermore, the loading matrix $\widetilde \bL^{(i)}$ and its factor-loading block $\widetilde \bL_1^{(i)}$
\[\kappa_1\leq \lambda_{\min}(\wt\bL^{(i)}\wt\bL^{(i)^T})\leq\lambda_{\max}(\wt\bL^{(i)}\wt\bL^{(i)^T})\leq \kappa_2,
\]
\[
\sigma_{\min}^{+}(\widetilde \bL_1^{(i)}) \geq \kappa_1^{1/2},
\qquad
\|\widetilde \bL_1^{(i)}\|_2 \leq \kappa_2^{1/2},
\]
where $\kappa_1$ and $\kappa_2$ are permitted to diverge with the dimension $p$. 
\end{assumption}

\begin{assumption}\label{a5}
The matrix $\bM_i(k_0)$ in each regime $i\in\{1,2\}$ admits $r_i$ distinct positive eigenvalues, such that $\lambda_1^{(i)}(k_0)>\cdots>\lambda_{r_i}^{(i)}(k_0)>0$.
\end{assumption}

\begin{assumption}\label{a6} 
There exists a lower threshold $\delta_0$ ensuring that $D\left(\mathcal{M}({\bf
L}_1^{(1)}),\mathcal{M}({{\bf
L}_1^{(2)}})\right) > \delta_0$ as both $p$ and $T$ go to infinity, where $D(\cdot,\cdot)$ denotes the discrepancy measure between the factor loading space $\mathcal{M}({\bf
L}_1^{(1)})$ before $k_0$ and the factor loading space $\mathcal{M}({\bf
L}_1^{(2)})$ after $k_0$. The measure is defined as: 
\begin{equation}\label{D_r1r2}
    D\left(\mathcal{M}({\bf
L}_1^{(1)}),\mathcal{M}({{\bf
L}_1^{(2)}})\right) = \sqrt{1-\frac{1}{\min(r_1,r_2)}\textrm{tr}({\bf
L}^{(1)}_1{\bf L}_1^{(1)^\T}{\bf L}^{(2)}_1{\bf L}_1^{(2)^\T})},
\end{equation}
where $\rank(\bL_1^{(1)})=r_1$ and  $\rank(\bL_1^{(2)})=r_2$. The measure (\ref{D_r1r2}) was first introduced in \cite{LiuChen_D_2020}. 
\end{assumption}

Under the normalization condition $\Cov(\bfeta_t^{(i)})=\bI_p$ in Assumption~\ref{a2} and the stationarity condition in Assumption~\ref{a3}, it follows as a mathematical corollary that the eigenvalues of $\bSigma_{\bff{\bfeta_m,i}}(k_0)$ and $\bSigma_{\bfeta_m,i}(k_0)$ are bounded in Assumption~\ref{a4}. The constants $\kappa_1$ and $\kappa_2$ of Assumption~\ref{a4} control the strength of the matrix $\wt\bL^{(1)}$ and $\wt\bL^{(2)}$, respectively. Assumption~\ref{a5} guarantees the identifiability of the eigenspaces associated with the positive eigenvalues of $\bM_i(k_0)$ for $i=1,2$. This allows Theorem 2 below to establish the convergence rates of the estimator for $\bL_1^{(i)}$ directly. Assumption~\ref{a6} requires the factor loading spaces before and after the change point to be sufficiently separated, thereby ensuring the identifiability of the structural change point.

\begin{theorem}
\label{tm1} Suppose that Assumptions 1-6 hold and that
$r_1$ and $r_2$ are known and fixed. Then
\[
\mathbb{P}\{\wh k \geq k_0+\epsilon T\}\leq~\left\{ 
\begin{array}{ll}
    CT^{-1/2}, & {\rm if\; \;}  p \text{\rm \;\; is fixed};\\[1ex]
    C(\epsilon^{-2}\kappa_1^{-2}\kappa_2+\epsilon^{-1}\kappa_1^{-4}\kappa_2^3)pT^{-1/2}, &{\rm if\;} p=o\{\min(T^{1/2},(\frac{\epsilon^{2}\kappa_1^{2}}{\kappa_2}+\frac{\epsilon\kappa_1^{4}}{\kappa_2^3})T^{1/2})\},
\end{array}
\right.
\] 
where 
$\frac{m+1}{T}<\epsilon<\gamma_2-\frac{k_0}{T}$, and 
$(\kappa_2\kappa_1^{-2}+\kappa_2^2\kappa_1^{-3})T^{-1}=o(\epsilon)$. A symmetric bound holds for the left-tail probability 
\(\mathbb P\{\widehat{k}<k_0-\epsilon T\}\) under the corresponding condition 
\((m+1)/T<\epsilon<k_0/T-\gamma_1\).

\end{theorem}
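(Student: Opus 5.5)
The argument is the usual one for minimum-contrast change-point estimators: the population criterion has a gap away from $k_0$, and the sample criterion fluctuates uniformly by less than that gap with high probability. Since $\hat k$ minimizes $\hat G$, the event $\{\hat k\ge k_0+\epsilon T\}$ implies that some $k\ge k_0+\epsilon T$ has $\hat G(k)\le \hat G(k_0)$. Hence
\[
\{\hat k\ge k_0+\epsilon T\}\subseteq\Big\{\min_{k\ge k_0+\epsilon T}[G(k)-G(k_0)]\le 2\max_{k}|\hat G(k)-G(k)|\Big\}.
\]
Because $G(k_0)=0$ by (\ref{Qt@M@Q_k0_simple}), it suffices to show two things:
\[
\text{(a)}\quad \inf_{k\ge k_0+\epsilon T}G(k)\ge c\,g(\epsilon),
\qquad
\text{(b)}\quad \mathbb P\Big\{\max_k|\hat G(k)-G(k)|>c\,g(\epsilon)/2\Big\}\le \text{the stated bound}.
\]
Here $g(\epsilon)$ is an explicit separation function.

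\textbf{Step (a), population separation.} For $k>k_0$, segment 2 is homogeneous from regime 2, so $\lambda^{(2)}_{r_2+1}(k)=0$. Segment 1 mixes $k_0$ regime-1 observations with $k-k_0$ regime-2 observations. Its segment-restricted covariances are weighted averages with weights $w=(k-k_0)/k\ge\epsilon$, up to $O(m/T)$ boundary terms from lags straddling $k_0$; this is why $\epsilon>(m+1)/T$ is needed.

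The mixed cross-covariance $\bSigma_{\by\by_m,1}(k)$ then has column space containing both $\wt\bL_1^{(1)}$ and $\wt\bL_1^{(2)}$ directions, weighted by $1-w$ and $w$. By Assumption~\ref{a6}, the component of $\mathcal M(\bL_1^{(2)})$ orthogonal to $\mathcal M(\bL_1^{(1)})$ has size at least $\delta_0$. Combining this with the singular-value bounds of Assumption~\ref{a4}, the $(r_1+1)$-th singular value of $\bSigma_{\by\by_m,1}(k)$ is at least $c\,\epsilon\kappa_1\delta_0$. After normalizing by $\bSigma_{\by,1}(k)^{-1/2}$ and $\bSigma_{\by_m,1}(k)^{-1}$, whose norms are controlled by $\kappa_1,\kappa_2$, this gives
\[
\lambda^{(1)}_{r_1+1}(k)\gtrsim \epsilon^2\kappa_1^2/\kappa_2 \;+\;\text{a cross term of order }\epsilon\kappa_1^4/\kappa_2^3.
\]
The cross term comes from Weyl-type interlacing between the rank-$r_1$ regime-1 block and the regime-2 perturbation. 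Since $\lambda_1^{(1)}(k)\le1$ for canonical correlations,
\[
G(k)\ge g(\epsilon)\asymp \epsilon^2\kappa_1^2\kappa_2^{-1}+\epsilon\kappa_1^4\kappa_2^{-3}.
\]
This explains the form of the rate in the theorem. The condition $(\kappa_2\kappa_1^{-2}+\kappa_2^2\kappa_1^{-3})T^{-1}=o(\epsilon)$ ensures that the $O(m/T)$ boundary contamination stays below $g(\epsilon)$.

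\textbf{Step (b), uniform perturbation.} I would write
\[
|\hat G(k)-G(k)|\le \sum_i\frac{|\hat\lambda^{(i)}_{r_i+1}-\lambda^{(i)}_{r_i+1}|+|\hat\lambda^{(i)}_1-\lambda^{(i)}_1|}{\lambda^{(i)}_1(k)}
\]
and use Weyl's inequality to bound each eigenvalue error by $\|\hat\bM_i(k)-\bM_i(k)\|_2$. Expanding the product in (\ref{M_hat}), this norm is bounded by a constant times the errors $\|\hat\bSigma-\bSigma\|_2$ of the three covariance blocks, multiplied by inverse-norm factors of order $\kappa_1^{-1}$ and $\kappa_2$. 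Note that $\lambda_1^{(i)}(k)$ stays bounded away from zero by Assumptions~\ref{a4}--\ref{a5}.

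Under Assumptions~\ref{a1}--\ref{a2}, the mixing moment inequality gives $\mathbb E\|\hat\bSigma_{\cdot,i}(k)-\bSigma_{\cdot,i}(k)\|_F^2\le Cp^2/k$. To make this uniform over $k$, I would handle the partial sums $\sum_{t\le k}$ with a maximal inequality for mixingales, a Hájek–Rényi type bound, rather than a union bound, obtaining
\[
\mathbb E\max_k\|\hat\bSigma_{\cdot,i}(k)-\bSigma_{\cdot,i}(k)\|_2\le C\,pT^{-1/2}.
\]
Markov's inequality at threshold $g(\epsilon)/2$ then yields $C\,pT^{-1/2}/g(\epsilon)$, which is bounded by
\[
C(\epsilon^{-2}\kappa_1^{-2}\kappa_2+\epsilon^{-1}\kappa_1^{-4}\kappa_2^3)\,pT^{-1/2}.
\]
The restriction on $p$ in the theorem keeps the sample covariance matrices invertible with high probability, so the inverses in $\hat\bM_i$ are controlled. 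For fixed $p$ all the $\kappa$'s are constants and the bound reduces to $CT^{-1/2}$. The left tail $\{\hat k<k_0-\epsilon T\}$ is symmetric, with the roles of the two segments swapped.

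\textbf{Main obstacle.} I expect step (a), the quantitative lower bound on $\lambda_{r_1+1}^{(1)}(k)$ for the mixed segment, to be the hardest part. It must convert the subspace discrepancy $D>\delta_0$ into an explicit eigenvalue gap with the correct $\epsilon$ and $\kappa$ dependence, through the nonlinear normalizations $\bSigma^{-1/2}$ and $\bSigma_m^{-1}$. My first attempt would be to lower bound the $(r_1+1)$-th singular value of the unnormalized mixed cross-covariance by projecting onto $\mathcal M(\bL_1^{(1)})^\perp$. I would then transfer this bound to $\bM_1(k)$ using $\sigma_{j}(AXB)\ge\sigma_{\min}(A)\,\sigma_j(X)\,\sigma_{\min}(B)$.
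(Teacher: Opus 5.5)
\paragraph{Overall.} Your architecture matches the paper's $I_1+I_2+I_3$ decomposition: a deterministic gap for $G$ on $\{k\ge k_0+\epsilon T\}$, a stochastic bound on $\hat G-G$, then Markov. Controlling $\max_k|\hat G(k)-G(k)|$ by a maximal inequality is more careful than the paper, which applies its fixed-$k$ error bound at the random $\hat k$. The genuine gap is step (a), and it is both qualitative and quantitative.

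\paragraph{Step (a) fails qualitatively.} Projecting onto $\mathcal M(\bL_1^{(1)})^\perp$ shows only that $\bSigma_{\by\by_m,1}(k)$ acquires a left direction outside the regime-1 loading space, with weight of order $w=(k-k_0)/k$. It does not give $\sigma_{r_1+1}(\bSigma_{\by\by_m,1}(k))\gtrsim\epsilon\kappa_1\delta_0$, because the new direction is absorbed whenever the lagged (right) singular space of the regime-2 block overlaps that of the regime-1 block.

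Concretely, take $p=3$, $m=d=1$, $r_1=r_2=2$, with Gaussian unit-variance VAR(1) factors whose innovation covariances $\bI_2-\bPhi^{(i)}\bPhi^{(i)T}$ are diagonal:
\begin{itemize}
\item before $k_0$, $\by_t=(f_{1t},f_{2t},e_t)^T$, where $\bPhi^{(1)}$ has zero diagonal and off-diagonal entries $0.5$ and $0.3$;
\item after $k_0$, $\by_t=(g_{1t},e_t,g_{2t})^T$, where $\bPhi^{(2)}$ has zero diagonal and off-diagonal entries $0.4$ and $0.6$.
\end{itemize}
Then $\bSigma_{\by,1}(k)=\bSigma_{\by_m,1}(k)=\bI_3$, $\kappa_1=\kappa_2=1$ and $D=2^{-1/2}$, so Assumptions 1--6 hold. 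Moreover
\[
\bSigma_{\by\by_m,1}(k)=\begin{pmatrix}0&0.5(1-w)&0.4w\\0.3(1-w)&0&0\\0.6w&0&0\end{pmatrix}+O(T^{-1}),
\]
whose last two columns are parallel. Hence $\lambda^{(1)}_3(k)=O(T^{-1})$ and $G(k)=O(T^{-1})$ for every $k>k_0$, so no separation function exists under the stated assumptions.

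Step (a) therefore needs an extra identification condition on the mixed segment. For instance, write $\bSigma_{\by\by_m,i}(k_0)=\wt\bL_1^{(i)}\bB_i^T$ and require $[\wt\bL_1^{(1)},\wt\bL_1^{(2)}]$ and $[\bB_1,\bB_2]$ to have full column rank with smallest singular values bounded below. This yields $\sigma_{r_1+1}(\bSigma_{\by\by_m,1}(k))\gtrsim w$.

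The paper's route does not close this either. It lower-bounds $\lambda_{r_1+1}$ of the surrogate $\wt\bM(\beta)=\beta\bM_1(k_0)+(1-\beta)\bM_2(k_0)$ and asserts $\|\bM_1(k)-\wt\bM(\beta)\|_2=O(T^{-1})$. In the example, $\lambda_3(\wt\bM(\beta))=\min\{0.09\beta,0.36(1-\beta)\}>0$: the CCA matrix of a mixed segment is not the mixture of the regime CCA matrices.

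\paragraph{Step (a) fails quantitatively.} Your $g(\epsilon)$ is reverse-engineered from the target rate rather than derived.
\begin{itemize}
\item Your own sandwich, using $\lambda_{\max}(\bSigma_{\by,1}(k))\le\kappa_2$ and $\lambda_{\max}(\bSigma_{\by_m,1}(k))\le c_4\kappa_2$, gives $\epsilon^2\kappa_1^2\kappa_2^{-2}$, not $\epsilon^2\kappa_1^2\kappa_2^{-1}$.
\item The term linear in $\epsilon$ cannot exist. Since $\bSigma_{\by\by_m,1}(k_0)$ has rank $r_1$, Weyl's inequality for singular values gives $\sigma_{r_1+1}(\bSigma_{\by\by_m,1}(k))\le w\|\bSigma_{\by\by_m,2}(k_0)\|_2+O(T^{-1})$. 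So $\lambda^{(1)}_{r_1+1}(k)=O(\epsilon^2)$ at $k=k_0+\lceil\epsilon T\rceil$ for bounded $\kappa_1^{-1},\kappa_2$.
\item The same bound rules out the paper's gap $h(\epsilon)\asymp\epsilon$ for small $\epsilon$.
\end{itemize}
The paper's allocation is the reverse of yours. It uses that $\kappa$-free, order-$\epsilon$ gap and puts all $\kappa$-dependence and the second $\epsilon^{-1}$ into the stochastic term. It does this by dividing $C(\kappa_1^{-2}\kappa_2+\kappa_1^{-1})pT^{-1/2}$ by $\lambda_1^{(1)}(k)\gtrsim\epsilon$ and by $\lambda_1^{(2)}(k)\ge\kappa_1^2\kappa_2^{-2}$.

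\paragraph{Step (b) bookkeeping.} You drop the factor $\kappa_1^{-2}\kappa_2+\kappa_1^{-1}$ that you yourself note arises when expanding $\hat\bM_i(k)-\bM_i(k)$. Keeping it, and granting the honest gap $\epsilon^2\kappa_1^2\kappa_2^{-2}$ under an added identification condition, your (a)+(b) would give $\epsilon^{-2}(\kappa_1^{-4}\kappa_2^{3}+\kappa_1^{-3}\kappa_2^{2})pT^{-1/2}$. That is weaker than the displayed bound unless $\kappa_1\asymp\kappa_2$.
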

\begin{remark}\label{rm1}
Theorem~\ref{tm1} provides probability upper bounds for the deviation of the change-point estimator \(\widehat{k}\) from the true change point \(k_0\). The result shows explicitly how the estimation accuracy depends on the factor strength and the cross-sectional dimension through \(\kappa_1\), \(\kappa_2\), and \(p\). In particular, when \(\kappa_1 \asymp \kappa_2 \asymp \kappa\), Theorem~\ref{tm1} implies that
\[
\mathbb{P}\{\widehat{k} \geq k_0+\epsilon T\}
\leq C(1+\epsilon)\epsilon^{-2}\kappa^{-1}pT^{-1/2},
\]
provided that \(p=o\{\min(\sqrt{T},\epsilon(\epsilon+1)\kappa)\}\). Together with the corresponding left-tail bound, this yields
\[
|\widehat{k}-k_0|/T=o_p(1),
\]
which establishes the consistency of the proposed change-point estimator.
\end{remark}

\begin{theorem}
\label{tm2} Suppose that Assumptions 1-5 hold and that
$r_i$ is known and fixed. Then for $i=1,2$ and $j=1,2$,
\[
\|\widehat{{\bL}}_{j}^{(i)}(k_0)-{\bL}_{j}^{(i)}\|_2 =~\left\{
     \begin{array}{ll}
       O_p(
T^{-1/2}), & {\rm if\; \;}  p \text{\rm \;\; is fixed};
 \\[1ex]
       O_p(\kappa_1^{-2}\kappa_2pT^{-1/2}), \quad &{\rm if\;} p=o\{\min(T^{1/2},\kappa_2^{-1}\kappa_1^2T^{1/2})\}.
     \end{array}
   \right.
\]
\end{theorem}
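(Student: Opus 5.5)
The argument is a perturbation bound followed by a Davis--Kahan step, in the same spirit as \cite{gaotsay_2019} but for the regime-restricted sample at the true split $k_0$. Fix $i$. At $k=k_0$ each segment is a homogeneous transformed factor model, so every segment-restricted sample moment is a sample moment of a stationary $\alpha$-mixing process. I would first bound the moment errors. Under Assumptions~\ref{a1}--\ref{a2}, a Davydov-type covariance inequality applied elementwise and then summed over the $p^2$ (or $mp^2$) entries gives
\[
\|\hat\bSigma_{\by,i}(k_0)-\bSigma_{\by,i}(k_0)\|_2,\;
\|\hat\bSigma_{\by\by_m,i}(k_0)-\bSigma_{\by\by_m,i}(k_0)\|_2,\;
\|\hat\bSigma_{\by_m,i}(k_0)-\bSigma_{\by_m,i}(k_0)\|_2
= O_p(pT^{-1/2}),
\]
since each segment length is proportional to $T$ because $k_0\in[\gamma_1T,\gamma_2T]$. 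The centring by $\bar\by^{(i)}$ contributes only a negligible $O_p(p/T)$ term. For fixed $p$ these errors are simply $O_p(T^{-1/2})$.

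Next I would carry these errors through the CCA matrix $\bM_i$ in (\ref{M}). Because $\by_t=\wt\bL^{(i)}\bfeta_t^{(i)}$ and $\Cov(\bfeta_t^{(i)})=\bI_p$, we have $\bSigma_{\by,i}=\wt\bL^{(i)}\wt\bL^{(i)T}$. Assumption~\ref{a4} then gives $\lambda_{\min}(\bSigma_{\by,i})\ge\kappa_1$, $\|\bSigma_{\by,i}\|_2\le\kappa_2$, and analogous bounds for $\bSigma_{\by_m,i}$, which is block-built from $\wt\bL^{(i)}$ and $\bSigma_{\bfeta_m,i}$ with eigenvalues in $[c_3\kappa_1,c_4\kappa_2]$. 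Writing $\hat\bM_i-\bM_i$ as a telescoping sum of four differences, I would use
\[
\|\hat\bSigma^{-1/2}-\bSigma^{-1/2}\|_2\lesssim \kappa_1^{-3/2}\|\hat\bSigma-\bSigma\|_2,
\qquad
\|\hat\bSigma_{\by_m}^{-1}-\bSigma_{\by_m}^{-1}\|_2\lesssim\kappa_1^{-2}\|\hat\bSigma_{\by_m}-\bSigma_{\by_m}\|_2,
\]
which hold on the event $\|\hat\bSigma-\bSigma\|_2=o_p(\kappa_1)$; that event is guaranteed by the condition on $p$. Collecting the factors yields $\|\hat\bM_i(k_0)-\bM_i(k_0)\|_2=O_p(\kappa_1^{-2}\kappa_2pT^{-1/2})$. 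The leading contributions are of the form $\kappa_1^{-1}\cdot\kappa_2^{1/2}\cdot\kappa_1^{-1}\cdot\kappa_2^{1/2}\cdot pT^{-1/2}$, arising from the normalized cross-covariance, with one factor replaced by its error. The constraint $p=o(\kappa_2^{-1}\kappa_1^2T^{1/2})$ is exactly what makes this quantity $o_p(1)$.

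Then I would apply the spectral step. By (\ref{M_yita_simple}), $\bM_i(k_0)$ has rank $r_i$ with column space $\mathcal M(\bL_1^{(i)})$. Its nonzero eigenvalues are bounded below by a constant: they equal the squared canonical correlations of $\bff_t^{(i)}$ with $\bfeta_{t,m}^{(i)}$, which Assumption~\ref{a4} ($c_1,c_4$) keeps away from zero. They are also distinct by Assumption~\ref{a5}, and the noise eigenvalues are zero. Hence the eigengap separating each of the leading $r_i$ eigenvectors from the rest is bounded below by a fixed constant. The Davis--Kahan $\sin\Theta$ theorem, applied vectorwise using distinctness with a sign alignment for each $\hat\bb_j^{(i)}$, gives $\|\hat\bL_1^{(i)}(k_0)-\bL_1^{(i)}\|_2\lesssim\|\hat\bM_i-\bM_i\|_2$. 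For $j=2$ the noise block is identified only as a subspace, so $\hat\bL_2^{(i)}$ is compared with $\bL_2^{(i)}$ up to rotation. The same $\sin\Theta$ bound applies to the complement because the gap between $\lambda_{r_i}^{(i)}$ and $0$ is bounded below. Both rates then follow, with the fixed-$p$ case reducing to $O_p(T^{-1/2})$.

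The main obstacle is the second step. The bookkeeping of $\kappa_1,\kappa_2$ through the inverse square roots has to land exactly on $\kappa_1^{-2}\kappa_2$ rather than something cruder. Getting this right requires exploiting that $\bSigma_{\by,i}^{-1/2}\wt\bL^{(i)}$ is orthogonal, so that normalized quantities remain $O(1)$ and only the perturbation terms pick up powers of $\kappa$.
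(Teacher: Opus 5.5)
Your proposal is correct and follows essentially the same route as the paper. The paper likewise bounds $\|\hat\bM_i(k_0)-\bM_i(k_0)\|_2$ at the rate $\kappa_1^{-2}\kappa_2pT^{-1/2}$, using the moment-error and inverse-square-root perturbation bounds of Lemmas~\ref{sigma_bound}--\ref{lemma_MhatM}. It then passes to the eigenvectors by dividing by the gap $\lambda_{r_i}^{(i)}(k_0)$, via Lemma 3 of \cite{LamYaoBathia_Biometrika_2011}; that subspace perturbation bound plays the role of your Davis--Kahan step. Your $O_p$ treatment is, if anything, slightly more careful than the paper's expectation bound: you work on the event where $\hat\bSigma_{\by,i}(k_0)$ and $\hat\bSigma_{\by_m,i}(k_0)$ stay well conditioned, and you check that the condition on $p$ makes $\|\hat\bM_i-\bM_i\|_2$ smaller than the eigengap, whereas the paper treats $\|\hat\bSigma_{\by_m,i}^{-1/2}\|_2$ as if it were deterministic.
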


Under a suitable orthogonal transformation of ${\bL}_{j}^{(i)}$, 
Theorem~\ref{tm2} establishes the convergence rate of the estimated loading
matrices at the true change point when the numbers of factors are known. For
each $i=1,2$, the choice of the semi-orthonormal loading matrix
$\bL_1^{(i)}$ in Model~\eqref{cpf-transform-simple2} is generally not unique. Consequently, we measure the estimation error in the column space $\mathcal{M}(\bL_1^{(i)})$ rather than in the matrix $\bL_1^{(i)}$ itself. This is because the column space $\mathcal{M}({{\bf L}}^{(i)}_1)$ is uniquely determined and remains invariant under different choices of $\bL_1^{(i)}$. We therefore define the estimated and true factor loading spaces as $\mathcal{M}(\wh{{\bf L}}^{(i)}_1)$ and $\mathcal{M}({{\bf L}}^{(i)}_1)$, respectively. The discrepancy measure proposed by \cite{panyao2008} is defined as 

\begin{equation}\label{dm}
\widetilde{D}\left(\mathcal{M}(\wh{{\bf L}}^{(i)}_1),\mathcal{M}({\bf
L}^{(i)}_1)\right)=\sqrt{1-\frac{1}{r_i}\textrm{tr}(\wh{{\bf
L}}^{(i)}_1\wh{{\bf L}}_1^{(i)^\T}{\bf L}^{(i)}_1{\bf L}_1^{(i)^\T})}.
\end{equation}
Note that $\widetilde{D}\left(\mathcal{M}(\wh{{\bf L}}^{(i)}_1),\mathcal{M}({\bf
L}^{(i)}_1)\right)$ takes values in $[0,1]$. In particular, it is equal to 0 if $\mathcal{M}(\wh{{\bf L}}^{(i)}_1)=\mathcal{M}({\bf L}^{(i)}_1)$ and is equal to 1 if
$\mathcal{M}(\wh{{\bf L}}^{(i)}_1)$ and $\mathcal{M}({\bf L}^{(i)}_1)$ are orthogonal.
Theorem~\ref{tm3} below establishes the convergence of $\widetilde{D}\left(\mathcal{M}(\wh{{\bf L}}^{(i)}_1),\mathcal{M}({\bf L}^{(i)}_1)\right)$ as $p$ and $T$ go to infinity.

\begin{theorem}
\label{tm3} Suppose Assumptions 1-5 hold. Assume further that
$r_i$ is known and fixed for each $i$. Then for any $i=1, 2$, 
\begin{equation} \label{ratesSP}
\widetilde{D}\left(\mathcal{M}(\widehat{\bf L}_{1}^{(i)}),\mathcal{M}({\bf L}_{1}^{(i)})\right) =~\left\{
     \begin{array}{ll}
       O_p(
T^{-1/2}), & {\rm if\; \;}  p \text{\rm \;\; is fixed}; \\[1ex]
       O_p(\kappa_1^{-2}\kappa_2pT^{-1/2}), \quad &{\rm if\;} p=o\{\min(T^{1/2},\kappa_2^{-1}\kappa_1^2T^{1/2})\}.
     \end{array}
   \right.
\end{equation}
\end{theorem}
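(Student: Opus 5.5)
Theorem~\ref{tm3} follows from Theorem~\ref{tm2} through a direct comparison of projection matrices. Here $\widehat{\bL}_1^{(i)}=\widehat{\bL}_1^{(i)}(k_0)$ is the matrix of the top $r_i$ eigenvectors of $\widehat\bM_i(k_0)$. The key identity is that, for two $p\times r_i$ matrices with orthonormal columns,
\[
1-\frac{1}{r_i}\tr\big(\widehat{\bL}_1^{(i)}\widehat{\bL}_1^{(i)T}\bL_1^{(i)}\bL_1^{(i)T}\big)=\frac{1}{2r_i}\big\|\widehat{\bL}_1^{(i)}\widehat{\bL}_1^{(i)T}-\bL_1^{(i)}\bL_1^{(i)T}\big\|_F^2 .
\]
To verify it, expand the Frobenius norm and use $\tr(\bP\bP^T)=r_i$ for each projection. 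Consequently $\widetilde D\le (2r_i)^{-1/2}\|\widehat{\bL}_1^{(i)}\widehat{\bL}_1^{(i)T}-\bL_1^{(i)}\bL_1^{(i)T}\|_F$. Since the difference of two rank-$r_i$ projections has rank at most $2r_i$, this is at most $(2r_i)^{-1/2}\sqrt{2r_i}\,\|\cdot\|_2=\|\widehat{\bL}_1^{(i)}\widehat{\bL}_1^{(i)T}-\bL_1^{(i)}\bL_1^{(i)T}\|_2$.

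Next, I bound the projection difference by the loading error. Let $\bL_1^{(i)}$ be the orthogonally rotated version for which Theorem~\ref{tm2} holds; the projection is invariant to this rotation. Then
\[
\widehat{\bL}_1\widehat{\bL}_1^T-\bL_1\bL_1^T=(\widehat{\bL}_1-\bL_1)\widehat{\bL}_1^T+\bL_1(\widehat{\bL}_1-\bL_1)^T .
\]
Because $\|\widehat{\bL}_1\|_2=\|\bL_1\|_2=1$, this gives $\|\widehat{\bL}_1\widehat{\bL}_1^T-\bL_1\bL_1^T\|_2\le 2\|\widehat{\bL}_1^{(i)}-\bL_1^{(i)}\|_2$. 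Combining this with the first step yields $\widetilde D\le 2\|\widehat{\bL}_1^{(i)}(k_0)-\bL_1^{(i)}\|_2$, and Theorem~\ref{tm2} with $j=1$ then gives both regimes of (\ref{ratesSP}): $O_p(T^{-1/2})$ for fixed $p$, and $O_p(\kappa_1^{-2}\kappa_2 pT^{-1/2})$ under the stated growth condition on $p$.

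The main point to check is that Theorem~\ref{tm2} is legitimately stated up to an orthogonal rotation, so that a $\bL_1^{(i)}$ achieving the rate exists. If instead one wants a self-contained argument, apply the Davis--Kahan $\sin\Theta$ theorem directly. This gives $\|\widehat{\bL}_1\widehat{\bL}_1^T-\bL_1\bL_1^T\|_2\le 2\|\widehat\bM_i(k_0)-\bM_i(k_0)\|_2/\lambda_{r_i}^{(i)}(k_0)$. The eigengap $\lambda_{r_i}^{(i)}(k_0)$ is bounded away from zero by Assumptions~\ref{a4} and \ref{a5}, because $\bM_i(k_0)$ has exactly $r_i$ positive eigenvalues and zeros elsewhere. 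The remaining input is the bound $\|\widehat\bM_i(k_0)-\bM_i(k_0)\|_2=O_p(\kappa_1^{-2}\kappa_2pT^{-1/2})$, which is the same bound underlying Theorem~\ref{tm2}. It comes from elementwise $O_p(T^{-1/2})$ covariance errors under Assumptions~\ref{a1}--\ref{a2}, inflated by $p$ when passing to the spectral norm. The inverse square roots and the inverse of $\bSigma_{\by_m,i}$ are then controlled through $\kappa_1,\kappa_2$, using the condition $p=o(\kappa_2^{-1}\kappa_1^2T^{1/2})$ so that the perturbed inverses remain well defined.
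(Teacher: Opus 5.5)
Your argument is correct and is essentially the paper's proof: both bound $\widetilde{D}$ deterministically by a constant multiple of $\|\widehat{\bL}_1^{(i)}(k_0)-\bL_1^{(i)}\|_2$, taking $\bL_1^{(i)}$ to be the rotation for which Theorem~\ref{tm2} holds (harmless, since $\widetilde{D}$ is rotation invariant), and then invoke Theorem~\ref{tm2}. The only difference is cosmetic: the paper works with the $r_i\times r_i$ matrix $\widehat{\bL}_1^{(i)T}(\widehat{\bL}_1^{(i)}\widehat{\bL}_1^{(i)T}-\bL_1^{(i)}\bL_1^{(i)T})\widehat{\bL}_1^{(i)}=E^{T}E-\widehat{\bL}_1^{(i)T}EE^{T}\widehat{\bL}_1^{(i)}$ with $E=\widehat{\bL}_1^{(i)}-\bL_1^{(i)}$ and obtains the constant $\sqrt{2}$, whereas your projection-difference bound gives the constant $2$, which does not affect the rates.
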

Theorem~\ref{tm3} shows that, when $p$ is fixed, the estimator $\mathcal{M}(\widehat{\bf L}_{1}^{(i)})$ converges to the true loading space $\mathcal{M}({\bf L}_{1}^{(i)})$ at the standard rate of $T^{-1/2}$. Furthermore, under the additional condition  $\kappa_1\asymp\kappa_2\asymp p^{l}$ for some $0<l<1$ in Assumption~\ref{a4}, the convergence rate can be refined to $O_p(p^{1-l}T^{-1/2})$. This makes explicit how the interplay between factor strength and dimensionality influences loading-space recovery.

\section{Simulation Studies}\label{sec4}
This section investigates the finite-sample behavior of the proposed estimators through Monte Carlo experiments. The simulation design is chosen to reflect the theoretical features discussed in Section 3, with particular attention to the effects of sample size and cross-sectional dimension. We consider both the benchmark case in which the regime-specific factor numbers are known and the more realistic case in which they must be estimated from the data. The data-generating process follows Model (\ref{cpf-transform-simple}). We set the truncation parameters to $\gamma_1=0.1$ and $\gamma_2=0.9$. The true number of factors is set to $r_1=r_2=3$. We consider cross-sectional dimensions $p\in\{10,15,30,50\}$ and sample sizes $T\in\{200,500,1000\}$. For regimes $i=1,2$, the idiosyncratic components are generated from $\bve_{t}^{(i)}\sim \mathcal{N}(\bzero,\bI_{p-r_i})$. Within each regime, the latent factors $\bff_{t}^{(i)}$ are generated from a VAR(1) process of the form $\bff_{t}^{(i)} = \bPhi^{(i)} \bff_{t-1}^{(i)} + \bu_{t}^{(i)}$. The diagonal entries of the coefficient matrix are independently drawn from $U(0.2,0.9)$, and the innovation term $\bu_{t}^{(i)} \sim \mathcal{N}(\bzero,4\bI_{r_i})$. We focus on the setting in which the factor strengths remain unchanged across the change point. The entries of the loading matrices $\wt\bL^{(i)}$ are independently drawn from $U(-1,1)$. For each parameter configuration, we conduct 1,000 Monte Carlo replications to evaluate the performance of change-point estimation and loading-space recovery.

\begin{table}[htbp]
  \centering
  \caption{Average normalized absolute change-point estimation error, $|\widehat{k} - k_0|/T$, for the case where the factor numbers are known and satisfy $r_1 = r_2 = 3$.}
  \label{Table1}
  \begin{tabular}{c c c c c}
    \toprule
    && $T=200$ $(k_0=100)$ & $T=500$ $(k_0=250)$ & $T=1000$ $(k_0=500)$ \\
    \midrule
    $L_2$ norm & $p=10$ & 0.038 & 0.013 & 0.005 \\
    & $p=15$ & 0.040 & 0.011 & 0.003 \\
    & $p=30$ & 0.359 & 0.008 & 0.003 \\
    & $p=50$ & 0.387 & 0.085 & 0.003 \\
    \midrule
    $F$ norm & $p=10$ & 0.040 & 0.014 & 0.005 \\
    & $p=15$ & 0.039 & 0.013 & 0.005 \\
    & $p=30$ & 0.396 & 0.013 & 0.004 \\
    & $p=50$ & 0.395 & 0.063 & 0.007 \\
    \bottomrule
  \end{tabular}
\end{table}

\begin{table}[htbp]
  \centering
  \caption{Average loading-space discrepancy, $\dfrac{1}{2}\sum_{i=1}^{2}\bar{D}\left(\mathcal{M}(\widehat{\bL}^{(i)}_{1}),\mathcal{M}(\bL^{(i)}_{1})\right)$, for the case where the factor numbers are known and satisfy $r_1=r_2=3$.}
  \label{Table2}
  \begin{tabular}{l c c c c}
    \toprule
    && $T=200$ $(k_0=100)$ & $T=500$ $(k_0=250)$ & $T=1000$ $(k_0=500)$ \\
    \midrule
    $L_2$ norm & $p=10$ & 0.643 & 0.464 & 0.335 \\
    & $p=15$ & 0.764 & 0.573 & 0.426 \\
    & $p=30$ & 0.931 & 0.757 & 0.600 \\
    & $p=50$ & 0.975 & 0.879 & 0.732 \\
    \midrule
   $F$ norm & $p=10$ & 0.645 & 0.471 & 0.341 \\
    & $p=15$ & 0.766 & 0.586 & 0.441 \\
    & $p=30$ & 0.936 & 0.773 & 0.624 \\
    & $p=50$ & 0.975 & 0.897 & 0.764 \\
    \bottomrule
  \end{tabular}
\end{table}

\begin{table}[htbp]
  \centering
  \caption{Average normalized absolute change-point estimation error, $|\hat{k}-k_0|/T$, when the factor numbers $r_1$ and $r_2$ are unknown, with $p=10$.} 
  \label{Table3}
  \begin{threeparttable}
    \begin{tabular}{l l l c c c c c c}
      \toprule
      & & & \multicolumn{2}{c}{$T=200$} & \multicolumn{2}{c}{$T=500$} & \multicolumn{2}{c}{$T=1000$}\\
      \cmidrule(lr){4-5} \cmidrule(lr){6-7} \cmidrule(lr){8-9}
      & & & one-step & AIE & one-step & AIE & one-step & AIE \\
      \midrule
      based on $L_2$ norm & TCR & & 0.070 & 0.050 & 0.042 & 0.027 & 0.040 & 0.028 \\
      & HT & & 0.062 & 0.047 & 0.026 & 0.025 & 0.019 & 0.026 \\
      \midrule
      based on $F$ norm & TCR & & 0.043 & 0.039 & 0.028 & 0.015 & 0.014 & 0.007 \\
      & HT & & 0.085 & 0.052 & 0.036 & 0.020 & 0.012 & 0.012 \\
      \bottomrule
    \end{tabular}
    \begin{tablenotes}
      \footnotesize
      \item[1] ``one-step'' indicates that the factor numbers are estimated only once from the boundary data using the TCR or HT method, after which the change point is estimated accordingly.
      \item[2] ``AIE'' denotes the change-point estimator obtained from the alternating iterative estimation procedure, as shown in Algorithm~\ref{algorithm1}.
    \end{tablenotes}
  \end{threeparttable}
\end{table}

\begin{table}[htbp]
  \centering
  \caption{Average loading-space discrepancy, $\dfrac{1}{2}\sum_{i=1}^{2}\bar{D}(\mathcal{M}(\widehat{\bL}_1^{(i)}),\mathcal{M}(\bL_1^{(i)}))$, when the factor numbers $r_1$ and $r_2$ are unknown, with $p=10$.} 
  \label{Table4}
  \begin{tabular}{l l l c c c c c c}
    \toprule
    & & & \multicolumn{2}{c}{$T=200$} & \multicolumn{2}{c}{$T=500$} & \multicolumn{2}{c}{$T=1000$}\\
    \cmidrule(lr){4-5} \cmidrule(lr){6-7} \cmidrule(lr){8-9}
    & & & one-step & AIE & one-step & AIE & one-step & AIE \\
    \midrule
    based on $L_2$ norm & TCR & & 0.476 & 0.517 & 0.358 & 0.335 & 0.253 & 0.245 \\
    & HT & & 0.307 & 0.554 & 0.326 & 0.378 & 0.270 & 0.275 \\
    \midrule
    based on $F$ norm & TCR & & 0.454 & 0.516 & 0.346 & 0.342 & 0.235 & 0.244 \\
    & HT & & 0.309 & 0.561 & 0.331 & 0.382 & 0.270 & 0.268 \\
    \bottomrule
  \end{tabular}
\end{table}

To investigate the estimation performance of the proposed procedure, we first consider the benchmark case in which the factor numbers are known, namely $r_1=r_2=3$. Table~\ref{Table1} reports the average normalized absolute change-point estimation errors, $|\hat{k}-k_0|/T$. As illustrated in Table~\ref{Table1}, for a fixed dimension $p$, the average normalized error $|\hat{k}-k_0|/T$ decreases as the sample size $T$ increases. Conversely, for a fixed $T$, the estimation error tends to increase with the dimension $p$. The two norm-based criteria deliver broadly similar results, although their relative performance differs slightly across specific combinations of $p$ and $T$. For relatively small sample sizes ($T=200,500$), the estimation error exhibits an overall increasing trend as $p$ grows under both norm choices. This suggests that dimensionality has a more pronounced effect in smaller samples, whereas its impact becomes less substantial as the sample size increases. The two norm-based estimators yield broadly comparable performance, with only small differences manifesting under some combinations of $p$ and $T$. Overall, the simulation evidence supports the theoretical findings in Theorem~\ref{tm1} and indicates that the proposed CCA-based procedure becomes more accurate in larger samples.

We then examine the discrepancy between the true loading space and its estimate under the same settings for $T$ and $p$ as in the previous analysis. In this simulation study, we set ${\bf L}^{(i)}_1=\hat\bSigma_\by^{-\frac{1}{2}}\tilde{\bf L}^{(i)}_1$ for $i=1,2$, where $\hat\bSigma_\by$ denotes the sample covariance matrix of $\by_t$. Since $\tilde{\bf L}^{(i)}$ is not constructed to be strictly orthogonal in our simulation design, we modify the discrepancy measure (\ref{dm}) accordingly and use the criterion defined in (\ref{dmeasure}). Let $\bH_j$ be a $p\times r_j$ matrix with $\rank(\bH_j)=r_j$ for $j=1,2$. Define 
\begin{equation}\label{dmeasure}
\bar{D}(\mathcal{M}(\bH_1),\mathcal{M}(\bH_2))=\sqrt{1-
\frac{1}{\min{(r_1,r_2)}}\textrm{tr}(\bH_1(\bH_1^T\bH_1)^{-1} \bH_1^T\bH_2(\bH_2^T\bH_2)^{-1} \bH_2^T)}.
\end{equation}
This measure guarantees that $\bar{D}(\mathcal{M}(\bH_1),\mathcal{M}(\bH_2))=0$ if and only if one subspace is contained in the other (e.g. $\mathcal{M}(\bH_1)\subset \mathcal{M}(\bH_2)$), and $\bar{D}(\mathcal{M}(\bH_1),\mathcal{M}(\bH_2))=1$ if and only if $\mathcal{M}(\bH_1) \perp \mathcal{M}(\bH_2)$. Under the above definition, Table~\ref{Table2} reports the average discrepancy $\frac{1}{2}\sum_{i=1}^{2}\bar{D}(\mathcal{M}(\widehat{\bL}^{(i)}_{1}),\mathcal{M}(\bL^{(i)}_{1}))$ for the case where the factor numbers are known and equal to $r_1=r_2=3$. Table~\ref{Table2} shows that the subspace discrepancy decreases as $T$ increases for fixed $p$; in contrast, for fixed $T$, loading-space estimation becomes more difficult as the dimension $p$ increases. This behavior also agrees with the theoretical analysis, since loading-space recovery relies on accurate estimation of the leading eigenspaces of the canonical correlation matrix, a task that becomes more challenging in relatively high-dimensional settings. First, the subspace is a high-dimensional object, and finite-sample noise can produce substantial discrepancy even when the change point is accurately estimated. Second, the difficulty is exacerbated by the simulation design, particularly when the factor strengths or eigen-gaps are small, which naturally leads to larger deviations in the estimated subspace. Importantly, these discrepancies do not undermine the practical utility of the proposed method for change-point detection. The criterion function relies on minimizing residual canonical correlation in the noise subspace, which remains effective even if the estimated loading spaces deviate from the true ones.

We subsequently evaluate change-point estimation performance when $r_{1}$ and $r_{2}$ are treated as unknown and must be estimated from the data. We compare the one-step procedure with the proposed AIE algorithm for $T\in \{200, 500, 1000\}$, using both the Transformed Contribution Ratio (TCR) and Hypothesis Testing (HT) methods to estimate the unknown factor numbers $r_1$ and $r_2$. As detailed in Table~\ref{Table3}, for a fixed sample size and under the $L_2$-norm criterion, the HT approach generally yields smaller change-point estimation errors than the TCR approach. By contrast, under the Frobenius-norm criterion, the TCR method performs better than the HT method. Across most specifications, the AIE procedure improves upon the corresponding one-step estimator, suggesting that iterative updating helps mitigate the dependence between change-point estimation and factor-number estimation. The improvement is consistent with the purpose of the iterative scheme. By re-estimating the factor numbers after updating the break location, the AIE procedure uses information from more homogeneous regime-specific subsamples and reduces the dependence on the initial boundary-based estimates. The results suggest that the iterative update can be beneficial when the factor numbers are unknown. Taken together, the results in Table~\ref{Table3} suggest using HT with AIE under the $L_2$-norm criterion, and TCR with the AIE algorithm under the Frobenius-norm criterion.

\begin{figure}[htp]
\begin{center}   
\subfigure[]{\includegraphics[width=0.8\textwidth]{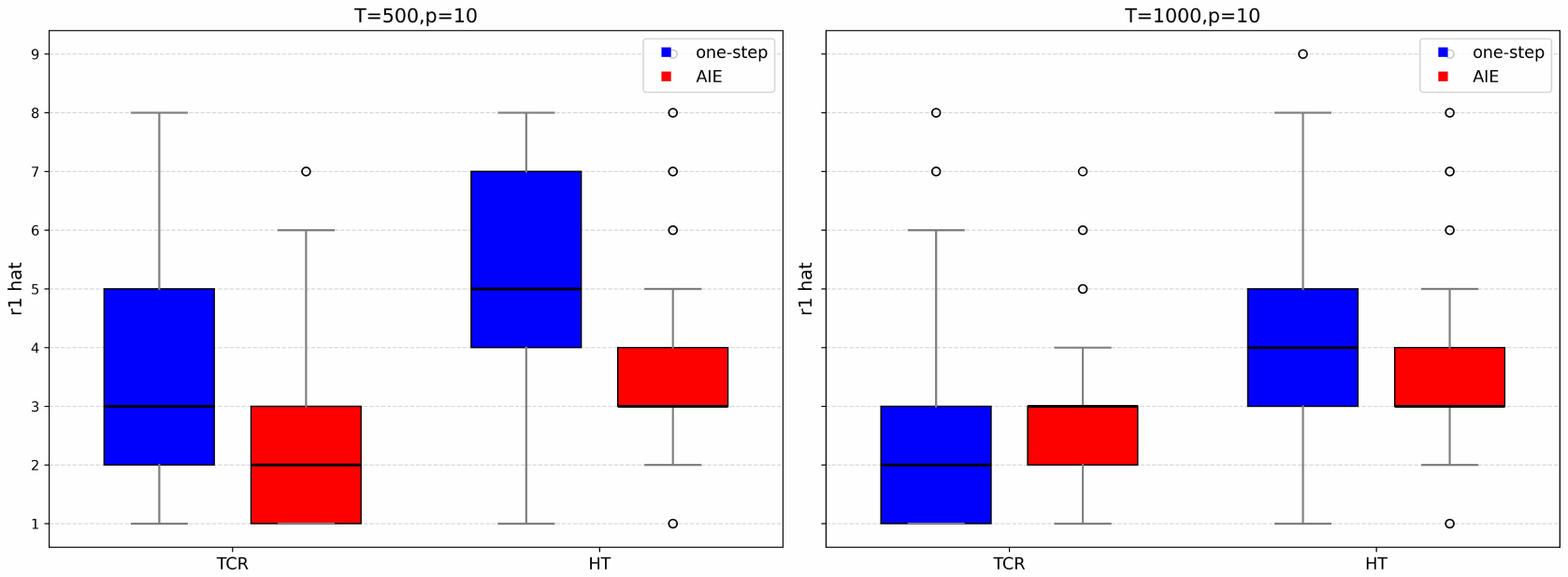}}\\ 
\subfigure[]{\includegraphics[width=0.8\textwidth]{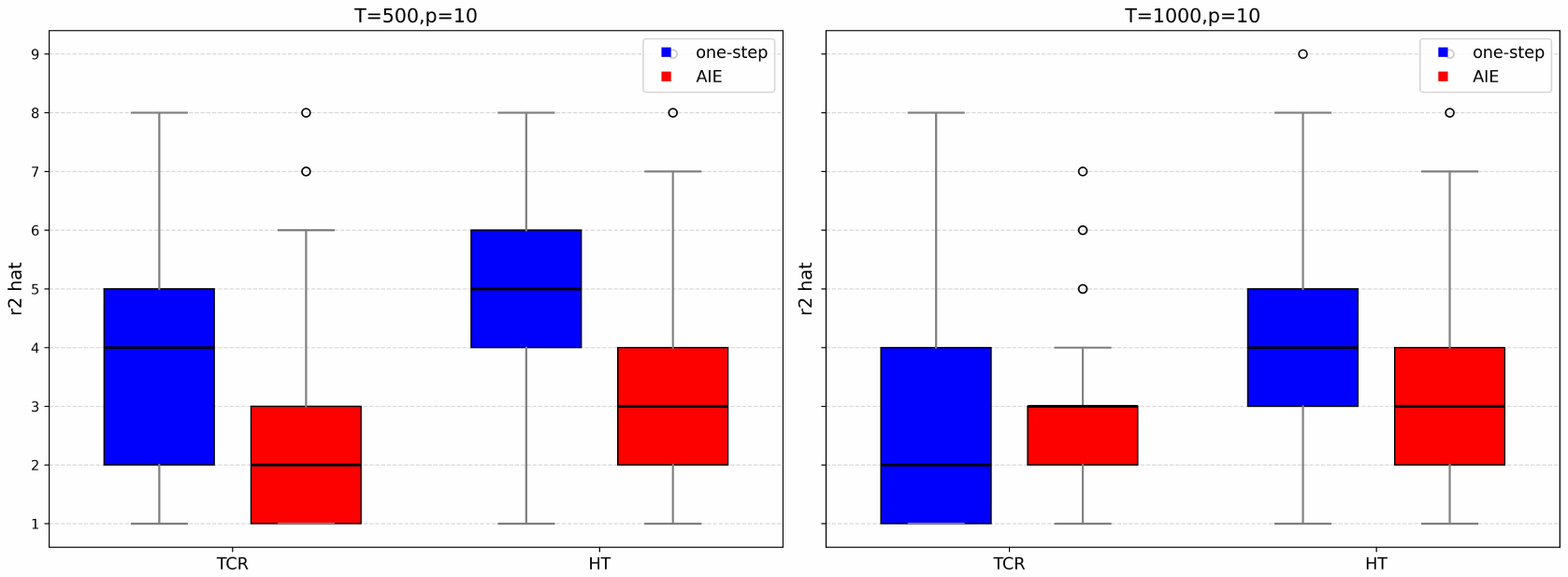}}
\end{center}
\caption{Boxplots of the estimated factor numbers under the one-step and alternating iterative estimation procedures: (a) $\hat{r}_1$; (b) $\hat{r}_2$. ``HT'' and ``TCR'' denote the hypothesis testing and transformed contribution ratio methods, respectively.}\label{r_hat}
\end{figure}

Figure~\ref{r_hat} presents box plots of the estimators $\hat{r}_{1}$ and $\hat{r}_{2}$ under different methods, constraining the settings to $T \in \{500, 1000\}$, $p=10$, and $r_1=r_2=3$. The plots indicate that the one-step estimators exhibit substantial variability and noticeable bias, although such dispersion gradually diminishes as $T$ increases. By contrast, the AIE algorithm produces factor-number estimates that are much more tightly concentrated around the true values, regardless of whether TCR or HT is used. In addition to reducing bias, the boxplots indicate that the AIE procedure substantially improves the stability of factor-number estimation, as reflected by the noticeably smaller dispersion of the resulting estimators.

Finally, Table~\ref{Table4} presents the discrepancy measure between the estimated and true factor loading spaces when the factor numbers are unknown. As $T$ increases, the discrepancy measure associated with the AIE algorithm remains comparable to that of the one-step estimator, indicating that the iterative estimation of the change point and factor numbers does not materially deteriorate loading-space estimation. Note that the loading-space discrepancies are relatively smaller than those
in Table~\ref{Table2}. {This phenomenon can be explained by the fact that the
estimated numbers of factors tend to exceed the true ones in this experiment.
Consequently, the column spaces of the estimated loading matrices are enlarged
and may provide a closer approximation to the true loading spaces, resulting
in smaller loading-space errors.}


\begin{figure}
\begin{center}
\centering
\includegraphics[width=0.7\textwidth]{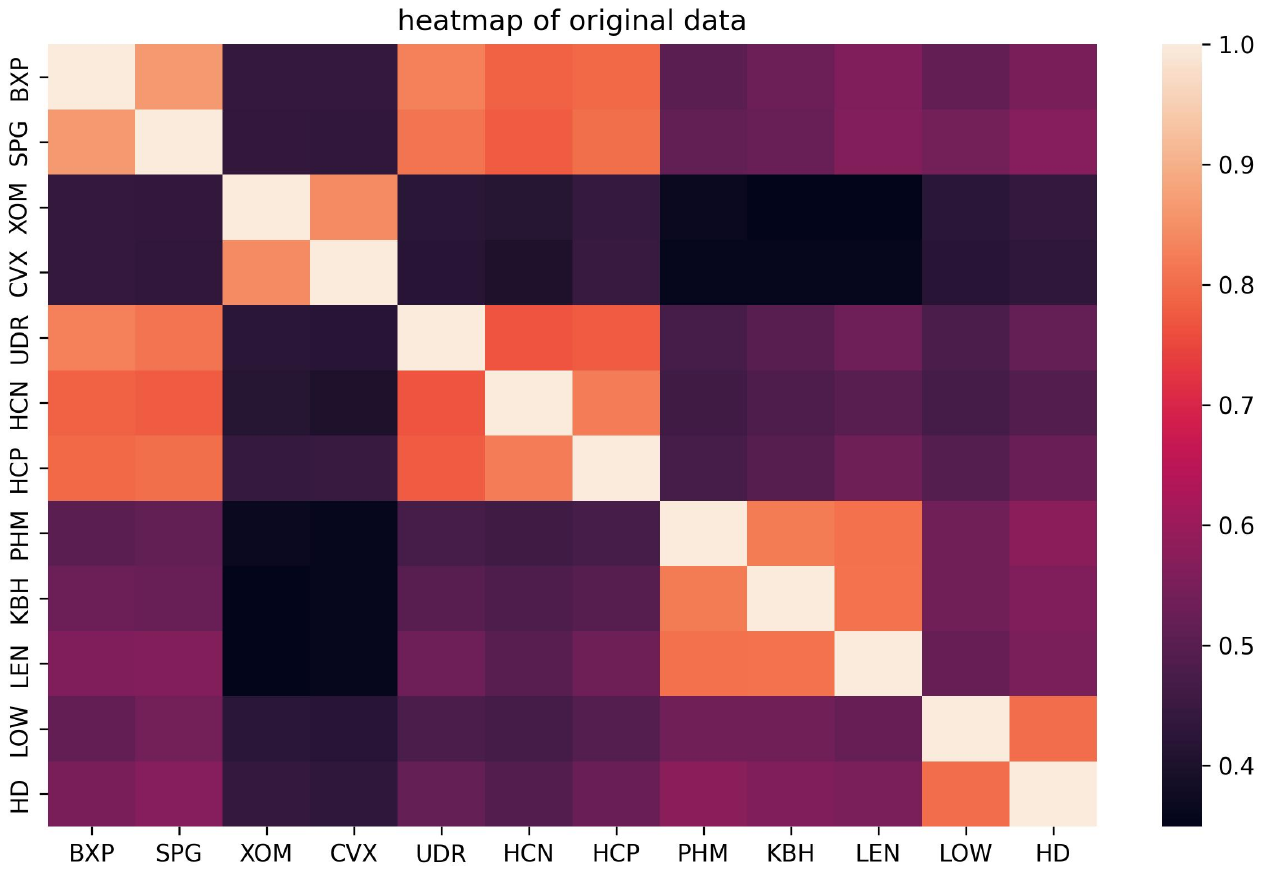}\\
\end{center}
\caption{The correlation heatmap of log returns for the 12 selected firms.}\label{heamap_12firms}
\end{figure}

\section{Empirical Analysis}\label{sec5}
This section applies the proposed method to two empirical datasets. The first application uses intraday log excess returns of S\&P 500 firms, and the second uses daily temperature records from U.S. monitoring stations. These applications are designed to illustrate how the proposed CCA-based procedure identifies interpretable changes in latent dynamic dependence structures in both financial and spatiotemporal panels.

{\noindent\bf Example 1 (Stock Returns).} The first application uses intraday log excess returns of S\&P 500 stocks from \cite{Pelger_2020}. The dataset covers the period from January 2004 to December 2016 and contains a balanced panel of 332 firms that are continuously observed over the full sample period. For illustration, we select 12 firms with relatively strong pairwise correlations and apply the proposed procedure to this subset. The correlation heatmap of these 12 firms is presented in Figure~\ref{heamap_12firms}. This subset is used mainly to provide a transparent visualization of the detected change point.

\begin{figure}[!htbp]
\begin{center}
\centering
\includegraphics[width=1\textwidth]{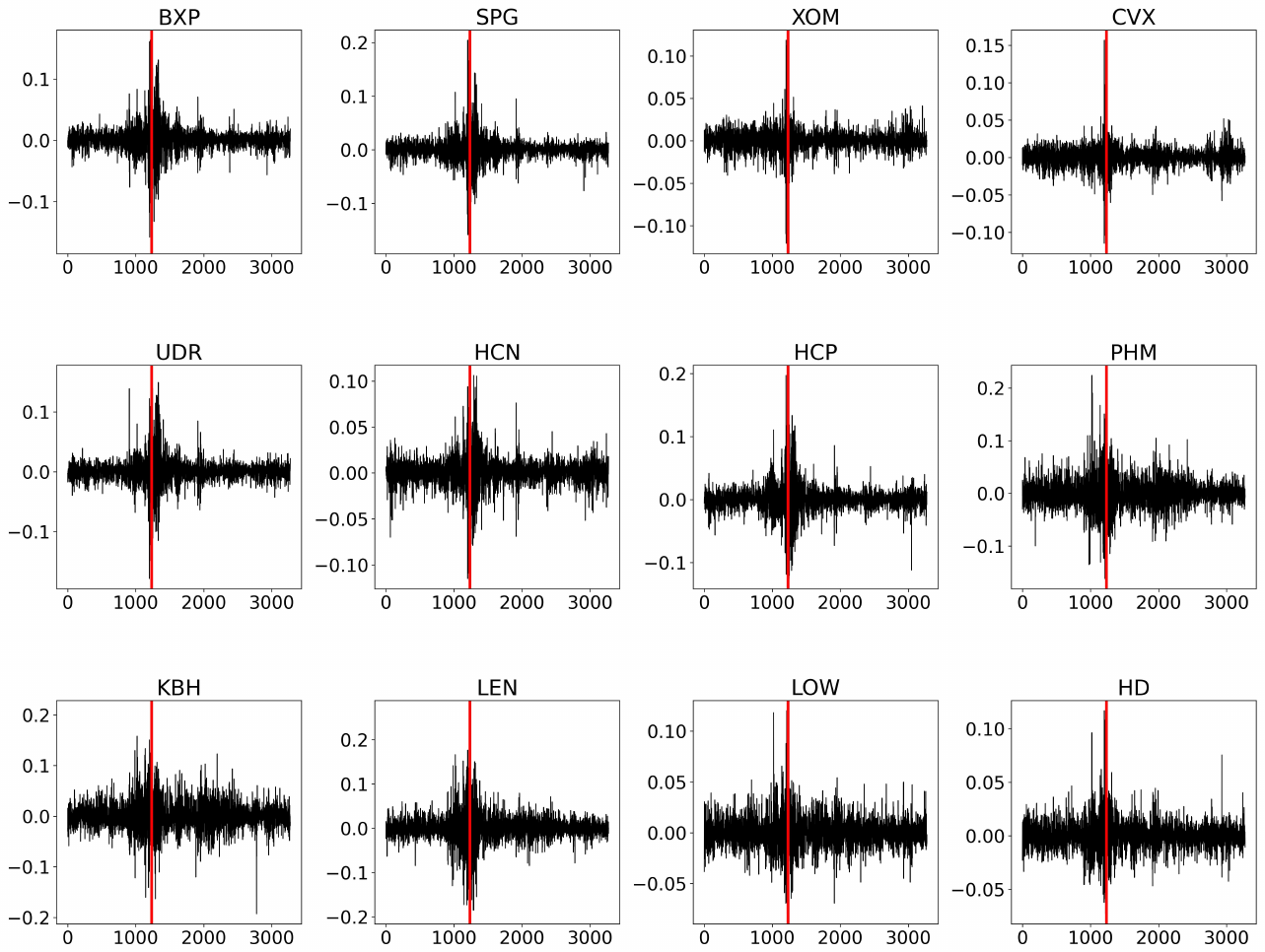}\\
\end{center}
\caption{Intraday return series of the 12 selected S\&P 500 firms, with the vertical red line marking the estimated change point on November 21, 2008.}\label{stock_return}
\end{figure}

For this dataset ($p=12,T=3213$), we set the maximum lag order to $m=2$ and restrict the change-point search interval to $[0.2, 0.8]$. Figure~\ref{stock_return} plots the intraday return series of the 12 selected S\&P 500 firms, with the vertical red line marking the estimated change point on November 21, 2008. Around the estimated change point, the return series exhibit a pronounced increase in volatility and more frequent extreme movements, particularly among real estate and housing-related firms such as BXP, SPG, UDR, HCP, PHM, KBH, and LEN, while the remaining stocks also display notable shifts in the magnitude of fluctuations. These synchronous changes across a broad set of firms indicate that the detected change point is unlikely to be driven by idiosyncratic shocks alone.

The estimated number of factors increases from $\widehat r_1=2$ before the change point to $\widehat r_2=4$ afterward. This increase suggests that the dimension of the latent dependence structure expanded during the crisis period. Economically, the estimated date lies within a period of severe financial stress in the United States, when concerns about major financial institutions, including Citigroup, intensified and shocks were transmitted from the financial sector to the broader economy. The result is therefore consistent with a reconfiguration of systematic risk following the Lehman Brothers collapse.

\begin{figure}[!htbp]
\begin{center}
\centering
\includegraphics[width=0.9\textwidth]{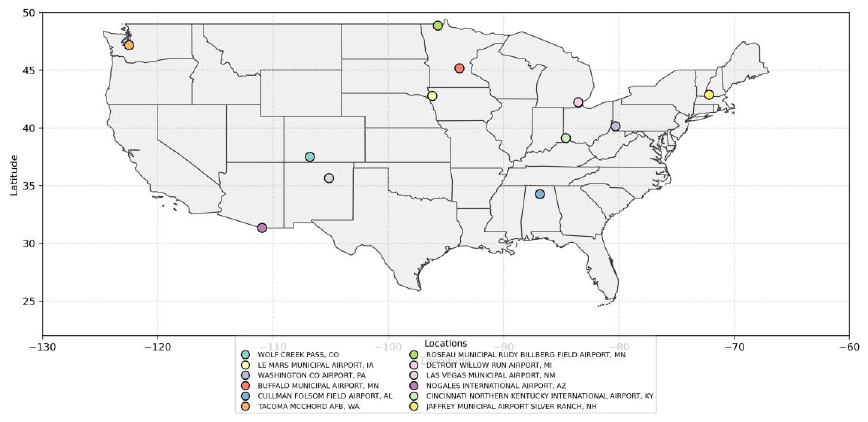}\\
\end{center}
\caption{Geographic distribution of the 12 meteorological stations in the contiguous United States.}\label{Ustemp_map}
\end{figure}

{\noindent\bf Example 2 (U.S. Temperatures).} 
We apply the proposed method to a 12-dimensional spatiotemporal dataset of daily surface temperatures. The data are obtained from the NOAA GSOD archive (2023); this dataset contains daily temperature records (in Fahrenheit) from January 1 to December 31, 2023, for 12 monitoring stations in the contiguous United States. Their geographic locations are shown in Figure~\ref{Ustemp_map}.

These monitoring stations are distributed across 11 states in the contiguous United States. More specifically, the stations cover a broad range of climatic regions across the Western, Midwestern, Eastern, and Southern United States. This broad geographic coverage provides reasonable spatial representation for the temperature dynamics under study. The raw temperature series for the individual stations are shown in Figure~\ref{Original series}. Notably, the series exhibit substantially greater variability from January to May than from June to August. This visual pattern suggests the possible presence of a change point during 2023.

\begin{figure}[h]
\begin{center}
\centering
\includegraphics[width=0.8\textwidth]{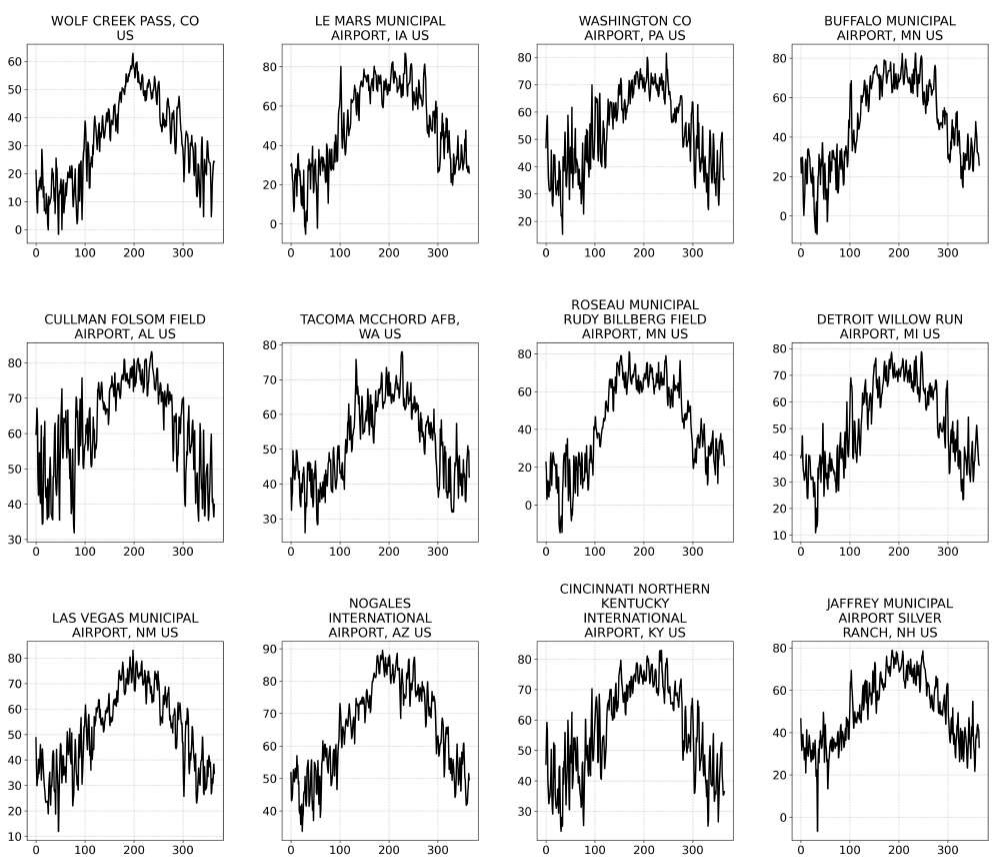}\\
\end{center}
\caption{Daily temperature series at the 12 stations from January 1, 2023, to December 31, 2023.}\label{Original series}
\end{figure}

After applying Seasonal-Trend decomposition using Loess (STL) to the original series, we set the maximum lag order to $m=2$ and restrict the change-point search interval to $[0.2, 0.8]$ in view of typical seasonal meteorological patterns. The estimated change point occurs on April 2, 2023. This estimate is consistent with the preliminary visual impression from the series and is marked in Figure~\ref{temp_cp_series}. It may be interpreted as reflecting a seasonal transition from the cold season to the warm season over mid-latitude North America. The estimation results indicate a reduction in factor numbers across the two regimes, with $\hat{r}_{1}=4$ factors in the more heterogeneous cold-season regime and $\hat{r}_{2}=2$ factors in the comparatively more homogeneous warm-season regime. The sign pattern of the first factor loadings changes little across the estimated change point, suggesting a broadly synchronous temperature anomaly pattern across locations, with the main difference lying in the loading magnitudes across the two regimes. The second post-change factor appears to capture a latitudinal contrast in temperature anomalies, roughly separated around the 42nd parallel north.

\begin{figure}
\begin{center}
\centering
\includegraphics[width=0.7\textwidth]{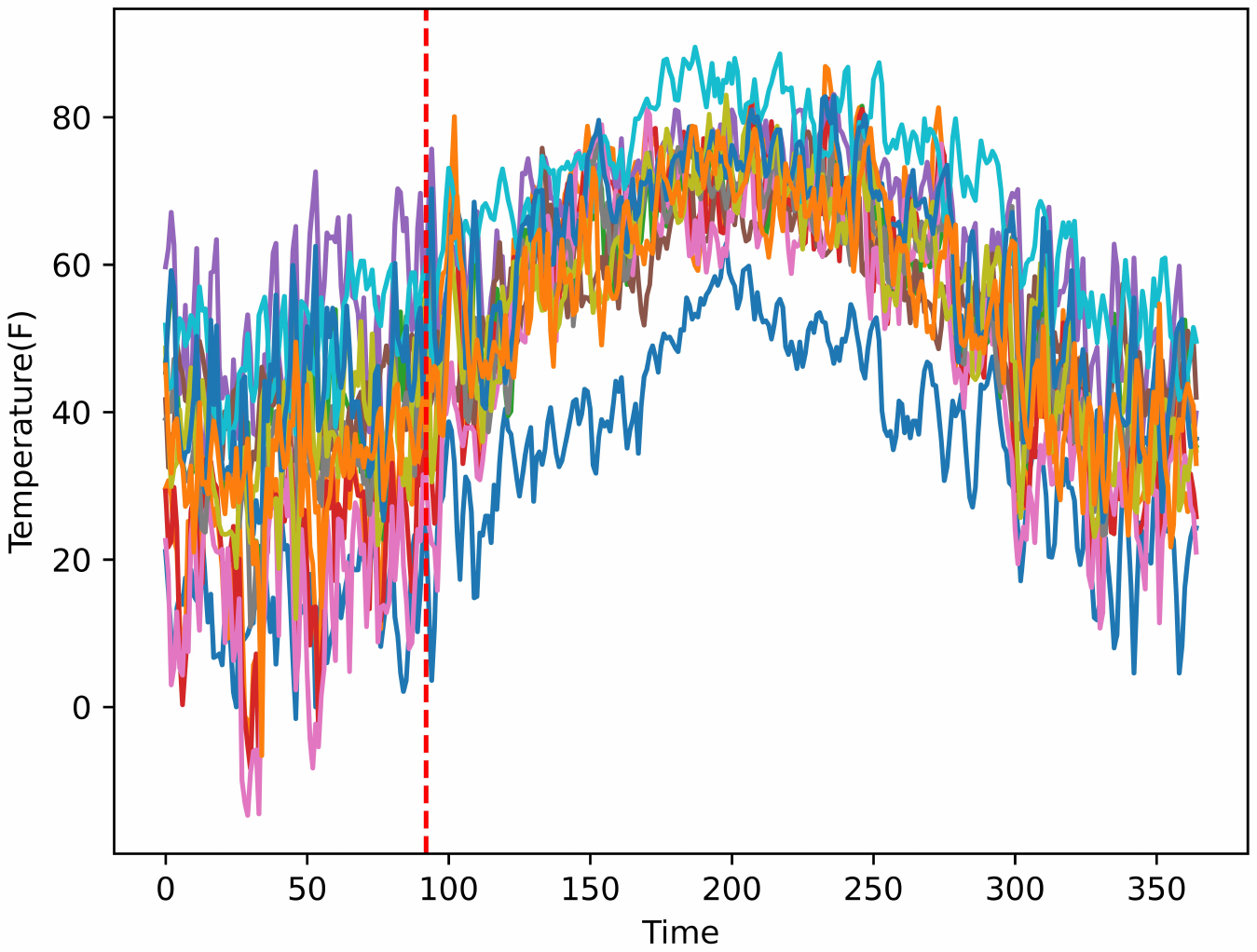}\\
\end{center}
\caption{Estimated change point in the original temperature series for the 12 monitoring stations.}\label{temp_cp_series}
\end{figure}

\section{Conclusion}\label{sec6}
This paper proposes a CCA-based approach to the estimation of a single change point in high-dimensional transformed factor models. The method uses the low-rank structure of regime-specific canonical correlation matrices to construct an eigenvalue-based criterion for identifying changes in the latent factor structure. When the regime-specific factor numbers are unknown, an alternating iterative estimation procedure is introduced to update the factor numbers and the change point sequentially.

The theoretical analysis establishes convergence properties of the proposed estimators under mixing and moment conditions. The results show explicitly how the estimation accuracy depends on factor strength, sample size, and cross-sectional dimension. Monte Carlo experiments provide numerical evidence consistent with these theoretical findings. The empirical applications to intraday stock returns and daily temperature series further illustrate that the method can identify interpretable changes in latent dependence structures.

Several extensions remain possible. One direction is to allow for multiple change points in transformed factor models. Another is to relax the serial uncorrelatedness assumption on the idiosyncratic component and study settings with weakly dependent idiosyncratic errors. These extensions would broaden the applicability of the proposed framework to more general high-dimensional time-series environments.

\section*{Data availability statement}
The datasets used in the real-data analysis of this paper are all publicly available from open-access repositories. Their sources are listed below for the two empirical examples. Example 1 uses the intraday stock returns dataset, which is publicly available at \url{https://mpelger.people.stanford.edu/data-and-code}.  Example 2 uses the global summary of the day dataset, which is accessible from the National Centers for Environmental Information (NCEI) archive at \url{https://www.ncei.noaa.gov/data/global-summary-of-the-day/archive/}. All datasets can be downloaded and used for research purposes subject to the respective access and usage policies of the hosting repositories. 

\section*{Author contributions}
{CRediT:} \textbf{Lei Jia}: Conceptualization, Methodology, Formal analysis, Investigation, Software, Validation, Data curation, Writing -- original draft, Visualization. \textbf{Shouri Hu}: Conceptualization, Methodology, Supervision, Writing -- review \& editing. \textbf{Zhaoxing Gao}: Conceptualization, Methodology, Supervision, Project administration, Writing -- review \& editing, Funding acquisition.

\section*{Disclosure statement}
{No potential conflict of interest was reported by the authors.}

\section*{Funding}
{This work was partially supported by the National Natural Science Foundation of China (NSFC) under Grant Nos. 72573029, U23A2064, 12201558, and 12401336.}
\section*{Appendix: Mathematical Proofs}
We denote $\mathcal{K}_T=\{\left \lfloor \gamma_{1}T \right \rfloor,\ldots,\left \lfloor \gamma_{2}T \right \rfloor\}$. We use $C$ to denote a generic positive constant whose value may change at different places. We begin with several useful lemmas.
\begin{lemma}\label{lemma_mkt}
For any $k\in \mathcal{K}_T$, we have $k_0 \asymp T$, $k-m \asymp T$, and $T-k \asymp T$, where $m\geq d$ is fixed. 
\end{lemma}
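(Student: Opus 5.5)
This is an elementary consequence of how the search window $\mathcal{K}_T$ is defined. The plan is to write each quantity as a constant multiple of $T$ up to an $O(1)$ error, and then use that $\gamma_1,\gamma_2$ and $m$ are fixed.

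\emph{Step 1: $k_0\asymp T$.} Implicit in the setup (and in the ranges for $\epsilon$ in Theorem~\ref{tm1}) is that the true break lies inside the search window, $\lfloor\gamma_1T\rfloor\le k_0\le\lfloor\gamma_2T\rfloor$. Then
\[
\gamma_1T-1<k_0\le\gamma_2T,
\]
so for $T\ge 2/\gamma_1$ we get $\tfrac{\gamma_1}{2}T\le k_0\le T$. For the finitely many smaller $T$ the ratio $k_0/T$ is trivially bounded above and below by positive constants. Hence $k_0\asymp T$. The only point needing care is to state this containment explicitly as the standing convention $k_0/T\in(\gamma_1,\gamma_2)$.

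\emph{Step 2: $k-m\asymp T$.} Take any $k\in\mathcal{K}_T$. Since $m\ge d$ is fixed,
\[
\gamma_1T-1-m<k-m\le\gamma_2T\le T .
\]
Once $T\ge 2(m+1)/\gamma_1$, the left side is at least $\tfrac{\gamma_1}{2}T$. So $\tfrac{\gamma_1}{2}T\le k-m\le T$ uniformly in $k\in\mathcal{K}_T$. The same bound also shows that $k-m\ge1$, so the first segment contains enough observations for the lagged vector $\by_{t,m}$ to be defined; this is what the covariance estimators need.

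\emph{Step 3: $T-k\asymp T$.} For $k\in\mathcal{K}_T$,
\[
(1-\gamma_2)T\le T-\lfloor\gamma_2T\rfloor\le T-k\le T-\lfloor\gamma_1T\rfloor<(1-\gamma_1)T+1\le T+1 .
\]
Since $1-\gamma_2>0$ is fixed, this gives $T-k\asymp T$. The effective post-break sample size $T-k-m$ is handled the same way.

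All constants depend only on $\gamma_1,\gamma_2,m$ and not on $k$, so every relation holds uniformly over $\mathcal{K}_T$. This uniformity is what later allows segment-wise concentration bounds to be stated with a single rate $T^{-1/2}$.
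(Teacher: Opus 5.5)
Your proposal is correct and follows the same route as the paper: every relation comes directly from $k$ and $k_0$ lying in the trimmed window $[\lfloor\gamma_1T\rfloor,\lfloor\gamma_2T\rfloor]$, with $\gamma_1,\gamma_2,m$ fixed. You simply write out the explicit constants, the large-$T$ thresholds, and the uniformity over $\mathcal{K}_T$ that the paper's two-line proof leaves implicit.
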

\textbf{\proofname.} Since the true change point $k_0 \in \mathcal{K}_T$ and $k \asymp T$, it follows that $k_0 \asymp T$. The remaining conclusions follow immediately when $m$ is fixed.

\begin{lemma}\label{sigma_bound}
If Assumption~\ref{a4} holds, then
\[
\kappa_1 \le \lambda_{\min}(\bSigma_{\by,i}(k_0))\le \|\bSigma_{\by,i}(k_0)\|_2\le\kappa_2,
\]
\[
\kappa_1c_3 \le \lambda_{\min}(\bSigma_{\by_m,i}(k_0))\le \|\bSigma_{\by_m,i}(k_0)\|_2\le\kappa_2c_4,
\]
and
\[
\kappa_1c_1 \leq \sigma_{\min}^{+}(\bSigma_{\by\by_m,i}(k_0)) \leq \|\bSigma_{\by\by_m,i}(k_0)\|_2\le\kappa_2c_2.
\]

\end{lemma}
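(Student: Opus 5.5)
The plan is to work directly from the regime-$i$ representation at the true split, $\by_t=\wt\bL^{(i)}\bfeta_t^{(i)}$, with $\Cov(\bfeta_t^{(i)})=\bI_p$ from Assumption~\ref{a2}. Under this normalization,
\[
\bSigma_{\by,i}(k_0)=\wt\bL^{(i)}\wt\bL^{(i)^T},
\]
so the first chain of inequalities is immediate from the bounds $\kappa_1\le\lambda_{\min}(\wt\bL^{(i)}\wt\bL^{(i)^T})\le\lambda_{\max}(\wt\bL^{(i)}\wt\bL^{(i)^T})\le\kappa_2$ in Assumption~\ref{a4}.

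For the lagged vector, stack $\by_{t,m}=(\bI_m\otimes\wt\bL^{(i)})\bfeta_{t,m}^{(i)}$. This gives
\[
\bSigma_{\by_m,i}(k_0)=\bA\,\bSigma_{\bfeta_m,i}(k_0)\,\bA^T,\qquad \bA=\bI_m\otimes\wt\bL^{(i)}.
\]
Since $\bA\bA^T=\bI_m\otimes\wt\bL^{(i)}\wt\bL^{(i)^T}$ has spectrum in $[\kappa_1,\kappa_2]$, the Rayleigh quotient bound $x^T\bA\bSigma\bA^Tx\ge\lambda_{\min}(\bSigma)\|\bA^Tx\|^2\ge c_3\kappa_1\|x\|^2$ gives the lower bound. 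Submultiplicativity gives the upper bound $\kappa_2c_4$.

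For the cross-covariance, write
\[
\bSigma_{\by\by_m,i}(k_0)=\wt\bL^{(i)}\bSigma_{\bfeta\bfeta_m,i}(k_0)\bA^T.
\]
Serial uncorrelatedness of $\bve_t^{(i)}$, together with its independence from the factors, kills the noise rows, so $\bSigma_{\bfeta\bfeta_m,i}(k_0)=\bigl(\bSigma_{\bff\bfeta_m,i}(k_0)^T,\mathbf{0}\bigr)^T$. Hence
\[
\bSigma_{\by\by_m,i}(k_0)=\wt\bL_1^{(i)}\bSigma_{\bff\bfeta_m,i}(k_0)\bA^T.
\]
The upper bound follows from $\|\wt\bL_1^{(i)}\|_2\le\kappa_2^{1/2}$, $\|\bSigma_{\bff\bfeta_m,i}\|_2\le c_2$ and $\|\bA\|_2\le\kappa_2^{1/2}$.

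The lower bound on $\sigma_{\min}^+$ is the main obstacle, since products of rank-deficient matrices need not preserve minimum positive singular values. I will handle it in three steps.

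\begin{enumerate}
\item $\bA^T$ is square and nonsingular with $\sigma_{\min}(\bA)\ge\kappa_1^{1/2}$, so right-multiplication by it preserves rank. Its positive singular values are then bounded below by $\sigma_{\min}^+(\bSigma_{\bff\bfeta_m,i})\,\kappa_1^{1/2}\ge c_1\kappa_1^{1/2}$; this can be seen via $\bB\bA^T\bA\bB^T\succeq\kappa_1\bB\bB^T$ on the row space.
\item $\wt\bL_1^{(i)}$ has full column rank $r_i$, because $\wt\bL^{(i)}$ is nonsingular. Writing $\bC=\bSigma_{\bff\bfeta_m,i}\bA^T$, we have $(\wt\bL_1^{(i)}\bC)^T(\wt\bL_1^{(i)}\bC)=\bC^T\wt\bL_1^{(i)T}\wt\bL_1^{(i)}\bC\succeq\kappa_1\bC^T\bC$, using $\sigma_{\min}^+(\wt\bL_1^{(i)})\ge\kappa_1^{1/2}$, which here is $\sigma_{\min}$.
\item The null space of $\wt\bL_1^{(i)}\bC$ equals that of $\bC$. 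Since the positive-eigenvalue supports agree, the nonzero eigenvalues of the product Gram matrix dominate $\kappa_1$ times those of $\bC^T\bC$.
\end{enumerate}

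Combining these yields $\sigma_{\min}^+\ge c_1\kappa_1$. The only care needed is the range/null-space matching argument in step 3, which I would state via the Courant–Fischer min-max characterization restricted to the orthogonal complement of the common null space.
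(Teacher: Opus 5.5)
Your proposal is correct and follows the paper's proof essentially step for step. You use the same factorizations $\bSigma_{\by,i}(k_0)=\wt\bL^{(i)}\wt\bL^{(i)T}$, $\bSigma_{\by_m,i}(k_0)=\bA\bSigma_{\bfeta_m,i}(k_0)\bA^T$ and $\bSigma_{\by\by_m,i}(k_0)=\wt\bL_1^{(i)}\bSigma_{\bff\bfeta_m,i}(k_0)\bA^T$ with $\bA=\bI_m\otimes\wt\bL^{(i)}$, followed by Rayleigh-quotient and submultiplicativity bounds. The one difference is that the paper simply asserts that $\sigma_{\min}^{+}$ of the triple product is at least the product of the three factors' $\sigma_{\min}^{+}$ values, which fails for general rank-deficient products; your rank and null-space argument (full column rank of $\wt\bL_1^{(i)}$, nonsingularity of $\bA$) supplies the justification that step needs.
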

\textbf{\proofname}. We first prove the bounds for $\bSigma_{\by,i}(k_0)$. For $i=1,2$, since $\Cov(\bfeta_t^{(i)})=\bI_p$,
\begin{center}
    $\bSigma_{\by,i}(k_0)=\wt\bL^{(i)}\Cov(\bfeta_t^{(i)})\wt\bL^{(i)^T}=\wt\bL^{(i)}\wt\bL^{(i)^T}$.
\end{center}
Hence, by Assumption~\ref{a4},
\[
\kappa_1 \le \lambda_{\min}(\bSigma_{\by,i}(k_0))\le \|\bSigma_{\by,i}(k_0)\|_2\le \kappa_2.
\]
Next, by the definition of $\by_{t,m}$,
\[
\bSigma_{\by_m,i}(k_0)=\Cov(\by_{t,m}\mathbb{I}_{t,i}(k_0))
=(\bI_m\otimes \wt\bL^{(i)})\bSigma_{\bfeta_m,i}(k_0)(\bI_m\otimes \wt\bL^{(i)})^T.
\]
Therefore,
\begin{align*}
\lambda_{\min}(\bSigma_{\by_m,i}(k_0))
&\ge \lambda_{\min}\{(\bI_m\otimes \wt\bL^{(i)})(\bI_m\otimes \wt\bL^{(i)})^T\}\,\lambda_{\min}(\bSigma_{\bfeta_m,i}(k_0))
\ge \kappa_1 c_3,
\end{align*}
and
\[
\|\bSigma_{\by_m,i}(k_0)\|_2
\le \|\bI_m\otimes \wt\bL^{(i)}\|_2^2\,\|\bSigma_{\bfeta_m,i}(k_0)\|_2
\le \kappa_2 c_4.
\]
Similarly, under the serial-uncorrelated idiosyncratic noise assumption,
\[
\bSigma_{\by\by_m,i}(k_0)=\Cov(\by_t\mathbb{I}_{t,i}(k_0),\by_{t,m}\mathbb{I}_{t,i}(k_0))
=\wt\bL_1^{(i)}\bSigma_{\bff\bfeta_m,i}(k_0)(\bI_m\otimes \wt\bL^{(i)})^T.
\]
Hence, we can obtain:
\begin{align*}
\sigma_{\min}^{+}(\bSigma_{\by\by_m,i}(k_0))
&\ge \sigma_{\min}^{+}(\wt\bL_1^{(i)})\sigma_{\min}^{+}(\bSigma_{\bff\bfeta_m,i}(k_0))\sigma_{\min}^{+}(\bI_m\otimes \wt\bL^{(i)})
\ge \kappa_1 c_1,
\end{align*}
and
\[
\|\bSigma_{\by\by_m,i}(k_0)\|_2
\le \|\wt\bL_1^{(i)}\|_2\,\|\bSigma_{\bff\bfeta_m,i}(k_0)\|_2\,\|\bI_m\otimes \wt\bL^{(i)}\|_2
\le \kappa_2 c_2.
\]
This completes the proof.


\begin{lemma}\label{lemma_sigma}
If Assumptions 1 and 2 hold, for any $k\ge k_0+\epsilon T$ and fixed $m$,
\[
\mathbb{E}\|\hat\bSigma_{\by,1}(k)-\bSigma_{\by,1}(k)\|_2 \le Cp(k-m)^{-1/2},\mathbb{E}\|\hat\bSigma_{\by,2}(k)-\bSigma_{\by,2}(k)\|_2 \le Cp(T-k)^{-1/2},
\]
\[
\mathbb{E}\|\hat\bSigma_{\by_m,2}(k)-\bSigma_{\by_m,2}(k)\|_2 \le Cp(T-k)^{-1/2},\mathbb{E}\|\hat\bSigma_{\by_m,1}(k)-\bSigma_{\by_m,1}(k)\|_2 \le Cp(k-m)^{-1/2},
\]
\[
\mathbb{E}\|\hat\bSigma_{\by\by_m,2}(k)-\bSigma_{\by\by_m,2}(k)\|_2 \le Cp(T-k)^{-1/2},\mathbb{E}\|\hat\bSigma_{\by\by_m,1}(k)-\bSigma_{\by\by_m,1}(k)\|_2 \le Cp(k-m)^{-1/2}.
\]

\end{lemma}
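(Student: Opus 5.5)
The plan is to bound each expected spectral-norm error by the corresponding expected Frobenius-norm error and then control the Frobenius norm entrywise. For any $p\times q$ matrix $\bH$ we have $\|\bH\|_2\le\|\bH\|_F$, and by Jensen's inequality $\mathbb{E}\|\bH\|_F\le(\mathbb{E}\|\bH\|_F^2)^{1/2}=(\sum_{j,l}\mathbb{E}h_{jl}^2)^{1/2}$. It therefore suffices to show that every entry of, for example, $\hat\bSigma_{\by,1}(k)-\bSigma_{\by,1}(k)$ has second moment at most $C/(k-m)$. There are $p^2$ entries for $\hat\bSigma_{\by,i}(k)$ and $mp^2$ entries for $\hat\bSigma_{\by_m,i}(k)$ and $\hat\bSigma_{\by\by_m,i}(k)$. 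Since $m$ is fixed, the square root of the total is $Cp(k-m)^{-1/2}$, and the same argument on the second segment gives $Cp(T-k)^{-1/2}$.

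For a single entry, say $\frac1{n}\sum_{t}(y_{jt}y_{l,t-s}-\mathbb{E}y_{jt}y_{l,t-s})$ with $0\le s\le m$ and $n$ the number of summands in the segment, I would use the standard covariance inequality for $\alpha$-mixing sequences (Davydov's inequality). The product process $z_t=y_{jt}y_{l,t-s}$ is $\alpha$-mixing with coefficients $\alpha_p(k-s)$. By Cauchy--Schwarz and Assumption~\ref{a2}, $\mathbb{E}|z_t|^{\xi}\le(\mathbb{E}|y_{jt}|^{2\xi}\mathbb{E}|y_{l,t-s}|^{2\xi})^{1/2}\le c$. Davydov's inequality then gives $|\Cov(z_t,z_{t+h})|\le C\alpha_p(h-s)^{1-2/\xi}\|z_t\|_\xi\|z_{t+h}\|_\xi$. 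Summing over $t$ and $h$, and using $\sum_h\alpha_p(h)^{1-2/\xi}=\alpha<\infty$ from Assumption~\ref{a1}, gives $\mathrm{Var}(\frac1n\sum z_t)\le C/n$, with the finitely many small lags $h\le s$ absorbed into the constant. Nonstationarity across $k_0$ causes no difficulty: the segment-restricted population matrix is the average of the true second moments over the segment, so each summand is centered, and the moment bound is uniform in $t$.

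The remaining issue is the sample-mean centering in $\hat\bSigma$. The centered estimator differs from the uncentered one by terms like $\bar y_j\bar y_l$. The same mixing argument gives $\mathbb{E}\bar y_j^2\le C/n$, so $\mathbb{E}|\bar y_j\bar y_l|\le C/n$, which is of smaller order. Likewise, for the lagged matrices the normalization $1/n$ versus $1/(n-s)$ and the truncated summation ranges introduce errors of order $m/n$, which is why $k-m$ appears. By Lemma~\ref{lemma_mkt}, $k-m\asymp T$ and $T-k\asymp T$, so all these corrections are $O(n^{-1})\le O(n^{-1/2})$ entrywise and do not affect the rate.

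The main obstacle is making the mixing-variance bound uniform in the presence of the break. When $k\ge k_0+\epsilon T$, the first segment mixes both regimes, so the population target must be interpreted as the time-averaged second moment rather than a stationary covariance. I would handle this by working throughout with the per-$t$ centered products and using only Assumption~\ref{a1} on the joint process. That assumption is stated for the whole sample regardless of regime, so Davydov's inequality applies without any stationarity.
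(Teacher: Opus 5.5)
Your proposal is correct and is essentially the paper's argument: dominate the spectral norm by the Frobenius norm, bound each entry's second moment by Davydov's inequality using the $2\xi$-moment bound of Assumption~2 and the summable mixing coefficients of Assumption~1, and pass to $\mathbb{E}\|\cdot\|_2$ by Jensen. The differences are cosmetic: the paper splits the mixed segment at $k_0$ into two regime-homogeneous pieces weighted by $(k_0-m)/(k-m)$ and $(k-k_0)/(k-m)$ and recombines them, whereas you treat the whole segment at once with per-$t$ centering, which is legitimate since Davydov's inequality needs no stationarity. For the mean correction, the clean route is the rank-one identity $\|\bar\by\bar\by^{T}\|_F=\|\bar\by\|_2^2$, which gives $\mathbb{E}\|\bar\by\bar\by^{T}\|_2\le Cp/n$ directly; summing entrywise first moments as you wrote would only give $Cp^2/n$.
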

\textbf{\proofname.} We define $\bGamma_\by(a,b)=\Cov(\by_t \mathbb{I}_{\{a\leq t \leq b\}})$ as the corresponding cross moment matrix, and let $\wh \bGamma_\by (a,b)$ denote its sample counterpart computed over the interval $[a,b]\subset[1,T]$. Then
\begin{align*}
    \mathbb{E}\|\wh\bSigma_{\by,1}(k)-\bSigma_{\by,1}(k)\|_F^{2} &\leq (\frac{k_0-m}{k-m})^2\mathbb{E}\|\wh\bGamma_\by(m+1,k_0)-\bGamma_\by(m+1,k_0)\|_F^2\\
    &\quad+(\frac{k-k_0}{k-m})^2\mathbb{E}\|\wh\bGamma_\by(k_0+1,k)-\bGamma_\by(k_0+1,k)\|_F^2 \\
    &=(\frac{k_0-m}{k-m})^2 I_1+(\frac{k-k_0}{k-m})^2 I_2.
\end{align*}
By Assumption~\ref{a1}, Assumption~\ref{a2} and the inequality of \cite{Davydov_1968}, we have
\begin{align*}
    I_1&=\mathbb{E}\left\{\sum_{i=1}^p\sum_{j=1}^p|\frac{1}{k_0-m}\sum_{t=m+1}^{k_0}(y_{it}-\bar y_{i.})(y_{jt}-\bar y_{j.})-\mathbb{E}(y_{it}y_{jt})|^2\right\}\\
    &\leq p^2 \max_{1\le i,j\le p}\mathbb{E}|\frac{1}{k_0-m}\sum_{t=m+1}^{k_0}(y_{it}-\bar y_{i.})(y_{jt}-\bar y_{j.})-\mathbb{E}(y_{it}y_{jt})|^2\\
    &\leq p^2 \max_{1\le i,j\le p}\left\{2\mathbb{E}|\frac{1}{k_0-m}\sum_{t=m+1}^{k_0}(y_{it}y_{jt}-\mathbb{E}(y_{it}y_{jt}))|^2+2\mathbb{E}(\bar y_{i.}\bar y_{j.})^2\right\}\\
    &\leq p^2 \max_{1\le i,j\le p}\left\{ \frac{2}{(k_0-m)^2}\sum_{t=m+1}^{k_0}\mathbb{E}(y_{it}y_{jt}-\mathbb{E}(y_{it}y_{jt}))^2\right.\\
    &\quad +\left.\frac{2}{(k_0-m)^2}\sum_{t_1\ne t_2}\mathbb{E}\left(y_{it_1}y_{jt_1}-\mathbb{E}(y_{it_1}y_{jt_1})\right)(y_{it_2}y_{jt_2}-\mathbb{E}(y_{it_2}y_{jt_2}))+O\left((k_0-m)^{-2}\right)\right\}\\   
    &\leq p^2\left(\frac{2c^2}{(k_0-m)}+\frac{2}{(k_0-m)^2}\cdot\sum_{t_1\ne t_2}\alpha_p(|t_1-t_2|)^{1-2/\xi} + O\left((k_0-m)^{-2}\right)\right)\\
    &\leq \frac{2p^2(c^2+c_0\alpha)}{k_0-m} + O\left((k_0-m)^{-2}\right).
\end{align*}
Similarly, $I_2\le \dfrac{2p^2(c^2+c_0\alpha)}{k-k_0} + O\left((k-k_0)^{-2}\right)$. Then
\begin{align*}
    \mathbb{E}\|\wh\bSigma_{\by,1}(k)-\bSigma_{\by,1}(k)\|_2^{2}
    &\leq \mathbb{E}\|\wh\bSigma_{\by,1}(k)-\bSigma_{\by,1}(k)\|_F^{2}\\
    &\leq (\frac{k_0-m}{k-m})^2 I_1+(\frac{k-k_0}{k-m})^2 I_2\\
    &\leq 2p^2(c^2+c_0\alpha)(k-m)^{-1}+O\left((k-m)^{-2}\right).
\end{align*}
Hence, there exists a constant $C$ such that $\mathbb{E}\|\hat\bSigma_{\by,1}(k)-\bSigma_{\by,1}(k)\|_2 \le Cp(k-m)^{-1/2}$. The remaining bounds can be proved similarly.

\begin{lemma}\label{lemma_4}
Let Assumptions 1-4 hold. For $k \ge k_0 + \epsilon T$, then,
\begin{center}
     $(i)\quad \mathbb{E}\|\hat\bSigma_{\by,2}^{-1/2}(k)-\bSigma_{\by,2}^{-1/2}(k)\|_2 \le C\kappa_1^{-\frac{3}{2}}p(T-k)^{-1/2}$,\\
    $(ii)\quad \mathbb{E}\|\hat\bSigma_{\by_m,2}^{-1/2}(k)-\bSigma_{\by_m,2}^{-1/2}(k)\|_2 \le C\kappa_1^{-\frac{3}{2}}p(T-k)^{-1/2}$,
    $(iii)\quad \mathbb{E}\|\hat\bSigma_{\by,2}^{-1/2}(k)\hat\bSigma_{\by\by_m,2}(k)-\bSigma_{\by,2}^{-1/2}(k)\bSigma_{\by\by_m,2}(k)\|_2 \le C(\kappa_1^{-\frac{3}{2}}\kappa_2+\kappa_1^{-\frac{1}{2}})p(T-k)^{-1/2}$.   
\end{center}
\end{lemma}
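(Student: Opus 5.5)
The plan is to deduce all three bounds from the covariance deviation bounds of Lemma~\ref{lemma_sigma}, using a matrix inverse square-root perturbation inequality. The eigenvalue bounds on the population matrices needed along the way come from Lemma~\ref{sigma_bound}.

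One issue must be settled first. For $k\ge k_0+\epsilon T$, the second segment $[k+1,T]$ lies entirely in regime 2. Hence $\bSigma_{\by,2}(k)=\bSigma_{\by,2}(k_0)=\wt\bL^{(2)}\wt\bL^{(2)T}$, and likewise for $\bSigma_{\by_m,2}(k)$ and $\bSigma_{\by\by_m,2}(k)$, apart from at most $m$ boundary lags, which Lemma~\ref{lemma_mkt} makes negligible. So Lemma~\ref{sigma_bound} gives
\[
\lambda_{\min}(\bSigma_{\by,2}(k))\ge\kappa_1,\qquad \lambda_{\min}(\bSigma_{\by_m,2}(k))\ge c_3\kappa_1,\qquad \|\bSigma_{\by\by_m,2}(k)\|_2\le c_2\kappa_2 .
\]

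For (i), write $A=\bSigma_{\by,2}(k)$ and $\hat A=\hat\bSigma_{\by,2}(k)$. I would use the identity
\[
\hat A^{-1/2}-A^{-1/2}=\hat A^{-1/2}\bigl(A^{1/2}-\hat A^{1/2}\bigr)A^{-1/2}
\]
together with the standard bound $\|\hat A^{1/2}-A^{1/2}\|_2\le \|\hat A-A\|_2/\bigl(\lambda_{\min}^{1/2}(A)+\lambda_{\min}^{1/2}(\hat A)\bigr)$. The latter follows from the Sylvester equation $X A^{1/2}+\hat A^{1/2}X=\hat A-A$ with $X=\hat A^{1/2}-A^{1/2}$. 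Together these give
\[
\|\hat A^{-1/2}-A^{-1/2}\|_2\le C\lambda_{\min}^{-3/2}(A)\,\|\hat A-A\|_2 ,
\]
provided $\lambda_{\min}(\hat A)\ge \lambda_{\min}(A)/2$.

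That proviso is the main obstacle, because we need an expectation bound and not just a high-probability one. I would split on the event $E=\{\|\hat A-A\|_2\le\kappa_1/2\}$. On $E$, Weyl's inequality gives $\lambda_{\min}(\hat A)\ge\kappa_1/2$, so the bound above holds with $\lambda_{\min}^{-3/2}(A)\le\kappa_1^{-3/2}$. Taking expectations and applying Lemma~\ref{lemma_sigma} then yields $C\kappa_1^{-3/2}p(T-k)^{-1/2}$. The complement $E^c$ has probability at most $2\kappa_1^{-1}\mathbb E\|\hat A-A\|_2$ by Markov's inequality. Controlling $\mathbb E[\|\hat A^{-1/2}-A^{-1/2}\|_2\mathbb 1_{E^c}]$ requires some control of $\lambda_{\min}(\hat A)$ from below. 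For this I would either invoke the moment conditions of Assumption~\ref{a2} with Hölder's inequality, or, as is customary in this literature (e.g.\ \cite{gaotsay_2019}), interpret the bound on the event $E$, whose probability tends to one when $p(T-k)^{-1/2}=o(\kappa_1)$. Part (ii) is identical, with $\bSigma_{\by_m,2}$ and $c_3\kappa_1$ in place of $\kappa_1$, and the constant absorbs $c_3$.

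For (iii), I would decompose
\[
\hat\bSigma_{\by,2}^{-1/2}\hat\bSigma_{\by\by_m,2}-\bSigma_{\by,2}^{-1/2}\bSigma_{\by\by_m,2}
=(\hat\bSigma_{\by,2}^{-1/2}-\bSigma_{\by,2}^{-1/2})\bSigma_{\by\by_m,2}
+\hat\bSigma_{\by,2}^{-1/2}(\hat\bSigma_{\by\by_m,2}-\bSigma_{\by\by_m,2}).
\]
The first term is bounded by part (i) times $\|\bSigma_{\by\by_m,2}\|_2\le c_2\kappa_2$, which gives $C\kappa_1^{-3/2}\kappa_2p(T-k)^{-1/2}$. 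For the second term, on $E$ we have $\|\hat\bSigma_{\by,2}^{-1/2}\|_2\le(\kappa_1/2)^{-1/2}$. Combined with Lemma~\ref{lemma_sigma} for the cross-covariance, this gives $C\kappa_1^{-1/2}p(T-k)^{-1/2}$. The complement $E^c$ is handled exactly as in (i). Adding the two terms gives the stated rate $C(\kappa_1^{-3/2}\kappa_2+\kappa_1^{-1/2})p(T-k)^{-1/2}$.
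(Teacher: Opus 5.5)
Your argument follows the paper's proof closely. Both use the identity $\hat{A}^{-1/2}-A^{-1/2}=-\hat{A}^{-1/2}(\hat{A}^{1/2}-A^{1/2})A^{-1/2}$ and the square-root perturbation bound; the paper cites Lemma 2.2 of \cite{schmitt1992}, and your Sylvester-equation argument is that lemma. Both also use the eigenvalue bounds of Lemma~\ref{sigma_bound}, the deviation bounds of Lemma~\ref{lemma_sigma}, and a two-term split for (iii).

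Where you differ, you are more careful. The paper writes $\|\hat\bSigma_{\by,2}^{-1/2}(k)\|_2\|\bSigma_{\by,2}^{-1/2}(k)\|_2\le\kappa_1^{-1}$ inside the expectation, treating $\lambda_{\min}(\hat\bSigma_{\by,2}(k))\ge\kappa_1$ as deterministic. Your event $E$ plus Weyl's inequality is the legitimate version of that step.

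In (iii) the paper splits as $(\hat\bSigma_{\by,2}^{-1/2}-\bSigma_{\by,2}^{-1/2})\hat\bSigma_{\by\by_m,2}+\bSigma_{\by,2}^{-1/2}(\hat\bSigma_{\by\by_m,2}-\bSigma_{\by\by_m,2})$. This leaves a product of two random factors in the first term. Your mirror-image split puts the deterministic $\bSigma_{\by\by_m,2}(k)$ in the first term and the $E$-controlled $\hat\bSigma_{\by,2}^{-1/2}(k)$ in the second. So on $E$ every product has a deterministically bounded factor.

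A small simplification: since $\epsilon T>m+1$, every lag $t-j$ with $t\ge k+1$ and $j\le m$ exceeds $k_0$. The second-segment population matrices therefore equal the regime-2 ones exactly, and no boundary correction is needed.

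The one step that does not work is your first option for $E^c$. Hölder's inequality would need a negative moment of $\lambda_{\min}(\hat\bSigma_{\by,2}(k))$. Assumption~\ref{a2} only controls positive moments of $y_{jt}$, and nothing in Assumptions 1--4 stops the sample covariance from being nearly singular, or exactly singular for discrete data, with positive probability.

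What your argument actually proves is $\mathbb{E}[\|\hat\bSigma_{\by,2}^{-1/2}(k)-\bSigma_{\by,2}^{-1/2}(k)\|_2\mathbf{1}_E]\le C\kappa_1^{-3/2}p(T-k)^{-1/2}$, together with $\mathbb{P}(E^c)\le C\kappa_1^{-1}p(T-k)^{-1/2}$. That is a truncated-expectation (or $O_p$) version of (i)--(iii), not the literal unconditional expectation bound. It is still all the later arguments need. In the Markov step of Theorem~\ref{tm1} you can add $\mathbb{P}(E^c)$, which is of no larger order than the rate there, since $\kappa_1\le\kappa_2$ and $\epsilon<1$. The paper's own proof establishes no more than this.

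If you want the statement exactly as written, a clean theoretical device is to floor the eigenvalues of $\hat\bSigma_{\by,2}(k)$ at $\kappa_1/2$ before taking the inverse square root. This changes nothing on $E$ and gives $\|\hat\bSigma_{\by,2}^{-1/2}(k)\|_2\le(\kappa_1/2)^{-1/2}$ deterministically. The $E^c$ contribution is then at most $C\kappa_1^{-1/2}\mathbb{P}(E^c)\le C\kappa_1^{-3/2}p(T-k)^{-1/2}$, the same rate. The same floor handles (ii) and the second term of (iii) without any splitting.
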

\textbf{\proofname.} By Lemma~\ref{sigma_bound}, Lemma~\ref{lemma_sigma} and Lemma 2.2 of \cite{schmitt1992}, 
\begin{align*}
    \mathbb{E}\|\hat\bSigma_{\by,2}^{-1/2}(k)-\bSigma_{\by,2}^{-1/2}(k)\|_2 &=\mathbb{E}\|-\wh \bSigma_{\by,2}^{-1/2}(k)\left(\hat\bSigma_{\by,2}^{1/2}(k)-\bSigma_{\by,2}^{1/2}(k)\right)\bSigma_{\by,2}^{-1/2}(k)\|_2\\
    &\leq \kappa_1^{-1}\mathbb{E}\|\hat\bSigma_{\by,2}^{1/2}(k)-\bSigma_{\by,2}^{1/2}(k)\|_2\\
    &\leq 2^{-1/2}\kappa_1^{-3/2}\mathbb{E}\|\hat\bSigma_{\by,2}(k)-\bSigma_{\by,2}(k)\|_2\\
    &\leq C\kappa_1^{-3/2}p(T-k)^{-1/2}.
\end{align*}
The inequality (ii) is proved similarly, so we have
\begin{center}
    $\mathbb{E}\|\hat\bSigma_{\by_m,2}^{-1/2}(k)-\bSigma_{\by_m,2}^{-1/2}(k)\|_2 \le C\kappa_1^{-\frac{3}{2}}p(T-k)^{-1/2}$.
\end{center}
For the third inequality, note that
\begin{align*}
    \hat\bSigma_{\by,2}^{-1/2}(k)\hat\bSigma_{\by\by_m,2}(k)-\bSigma_{\by,2}^{-1/2}(k)\bSigma_{\by\by_m,2}(k)&=(\hat\bSigma_{\by,2}^{-1/2}(k)-\bSigma_{\by,2}^{-1/2}(k))\hat\bSigma_{\by\by_m,2}(k)\\
    &\quad+\bSigma_{\by,2}^{-1/2}(\hat\bSigma_{\by\by_m,2}(k)-\bSigma_{\by\by_m,2}(k)).
\end{align*}
Hence there exists a constant $C$ such that 
\[\mathbb{E}\|\hat\bSigma_{\by,2}^{-1/2}(k)\hat\bSigma_{\by\by_m,2}(k)-\bSigma_{\by,2}^{-1/2}(k)\bSigma_{\by\by_m,2}(k)\|_2 \le C(\kappa_1^{-\frac{3}{2}}\kappa_2+\kappa_1^{-\frac{1}{2}})p(T-k)^{-1/2}.
\] This completes the proof.

\begin{lemma}\label{lemma_4_addition}
Similar to Lemma~\ref{lemma_4}, we have the following result:
\begin{center}
     $(i)\quad \mathbb{E}\|\hat\bSigma_{\by,1}^{-1/2}(k)-\bSigma_{\by,1}^{-1/2}(k)\|_2 \le C\kappa_1^{-\frac{3}{2}}p(k-m)^{-1/2}$,\\
    $(ii)\quad \mathbb{E}\|\hat\bSigma_{\by_m,1}^{-1/2}(k)-\bSigma_{\by_m,1}^{-1/2}(k)\|_2 \le C\kappa_1^{-\frac{3}{2}}p(k-m)^{-1/2}$,
    $(iii)\quad \mathbb{E}\|\hat\bSigma_{\by,1}^{-1/2}(k)\hat\bSigma_{\by\by_m,1}(k)-\bSigma_{\by,1}^{-1/2}(k)\bSigma_{\by\by_m,1}(k)\|_2 \le C(\kappa_1^{-\frac{3}{2}}\kappa_2+\kappa_1^{-\frac{1}{2}})p(k-m)^{-1/2}$.   
\end{center}
\end{lemma}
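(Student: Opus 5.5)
The argument mirrors the proof of Lemma~\ref{lemma_4}, with the first segment $[1,k]$ in place of the second segment $[k+1,T]$. The effective sample size $T-k$ is replaced by $k-m$, since the lagged vector $\by_{t,m}$ is only available for $t\ge m+1$. Throughout, take $k\ge k_0+\epsilon T$ (or, more generally, $k\in\mathcal K_T$). Lemma~\ref{lemma_mkt} then gives $k-m\asymp T$.

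For (i), start from the identity
\[
\hat\bSigma_{\by,1}^{-1/2}(k)-\bSigma_{\by,1}^{-1/2}(k)=-\hat\bSigma_{\by,1}^{-1/2}(k)\bigl(\hat\bSigma_{\by,1}^{1/2}(k)-\bSigma_{\by,1}^{1/2}(k)\bigr)\bSigma_{\by,1}^{-1/2}(k).
\]
The two outer factors are bounded in operator norm using the lower eigenvalue bound $\kappa_1$.

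Some care is needed here. Lemma~\ref{sigma_bound} is stated at $k_0$. For a contaminated segment with $k>k_0$, however, $\bSigma_{\by,1}(k)$ is a convex combination of $\wt\bL^{(1)}\wt\bL^{(1)T}$ and $\wt\bL^{(2)}\wt\bL^{(2)T}$, with weights $(k_0-m)/(k-m)$ and $(k-k_0)/(k-m)$, exactly as in the decomposition used in Lemma~\ref{lemma_sigma}. Its eigenvalues therefore still lie in $[\kappa_1,\kappa_2]$. The same convex-combination argument applies to $\bSigma_{\by_m,1}(k)$, giving bounds $[\kappa_1c_3,\kappa_2c_4]$.

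For the random factor $\hat\bSigma_{\by,1}^{-1/2}(k)$, use the event on which $\|\hat\bSigma_{\by,1}(k)-\bSigma_{\by,1}(k)\|_2\le\kappa_1/2$, so that $\lambda_{\min}(\hat\bSigma_{\by,1}(k))\ge\kappa_1/2$. On the complement, Markov's inequality together with Lemma~\ref{lemma_sigma} gives probability $O(\kappa_1^{-1}p(k-m)^{-1/2})$. The paper's convention glosses over this step, and I would follow it, absorbing constants into $C$.

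Next apply Lemma~2.2 of \cite{schmitt1992}:
\[
\|A^{1/2}-B^{1/2}\|_2\le\bigl(\lambda_{\min}^{1/2}(A)+\lambda_{\min}^{1/2}(B)\bigr)^{-1}\|A-B\|_2 .
\]
This contributes a further factor $\kappa_1^{-1/2}$. Combining it with the first-segment bound $\mathbb E\|\hat\bSigma_{\by,1}(k)-\bSigma_{\by,1}(k)\|_2\le Cp(k-m)^{-1/2}$ of Lemma~\ref{lemma_sigma} yields (i). Part (ii) is identical once $\bSigma_{\by,1}$ is replaced by $\bSigma_{\by_m,1}$, using the corresponding bound in Lemma~\ref{lemma_sigma}; the constant $c_3$ is absorbed into $C$.

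For (iii), decompose
\[
\hat\bSigma_{\by,1}^{-1/2}\hat\bSigma_{\by\by_m,1}-\bSigma_{\by,1}^{-1/2}\bSigma_{\by\by_m,1}
=(\hat\bSigma_{\by,1}^{-1/2}-\bSigma_{\by,1}^{-1/2})\hat\bSigma_{\by\by_m,1}+\bSigma_{\by,1}^{-1/2}(\hat\bSigma_{\by\by_m,1}-\bSigma_{\by\by_m,1}).
\]
In the second term, $\|\bSigma_{\by,1}^{-1/2}\|_2\le\kappa_1^{-1/2}$, and Lemma~\ref{lemma_sigma} bounds the expectation of the remaining factor. This gives the $\kappa_1^{-1/2}p(k-m)^{-1/2}$ piece.

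In the first term, write $\hat\bSigma_{\by\by_m,1}=\bSigma_{\by\by_m,1}+(\hat\bSigma_{\by\by_m,1}-\bSigma_{\by\by_m,1})$. The population part has norm at most $C\kappa_2$, by the convex-combination argument applied to Lemma~\ref{sigma_bound}'s bound $\kappa_2c_2$. Together with (i) this gives the $\kappa_1^{-3/2}\kappa_2p(k-m)^{-1/2}$ piece. The product of the two estimation errors is of higher order, $O(\kappa_1^{-3/2}p^2(k-m)^{-1})$. It is negligible because $p=o(T^{1/2})$, and can be controlled either by Cauchy--Schwarz with the second-moment bounds proved inside Lemma~\ref{lemma_sigma}, or by working on the high-probability event above.

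The main obstacle is the expectation of the product of random terms: both the random inverse square root $\hat\bSigma_{\by,1}^{-1/2}$ and the cross term in (iii). The cleanest route is to localize on the event where the sample covariances are within $\kappa_1/2$ of their population counterparts, and to state the bounds there, as the paper implicitly does.
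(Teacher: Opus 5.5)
Your proposal matches the paper's proof, which likewise writes the contaminated first-segment covariances as weighted combinations of the regime-specific matrices at $k_0$, deduces $\lambda_{\min}(\bSigma_{\by,1}(k))\ge\kappa_1$, $\lambda_{\min}(\bSigma_{\by_m,1}(k))\ge\kappa_1c_3$ and $\kappa_2$-type upper bounds, and then reruns the argument of Lemma~\ref{lemma_4}; the paper adds $O(T^{-1})$ boundary terms $\bGamma_{1,m},\bGamma_{2,m}$ for the lagged matrices, which you pass over but which only change constants. Your localization on $\{\|\hat\bSigma_{\by,1}(k)-\bSigma_{\by,1}(k)\|_2\le\kappa_1/2\}$ and your treatment of the cross term in (iii) are extra care the paper omits, since it treats $\|\hat\bSigma_{\by,1}^{-1/2}(k)\|_2$ and $\|\hat\bSigma_{\by\by_m,1}(k)\|_2$ as deterministic; as you note, this gives a high-probability rather than a literal in-expectation bound, and the paper's argument does not close that step either.
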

\textbf{\proofname.} When $m+1<\epsilon T<\gamma_2T-k_0$, the mixed covariance matrices before $k$ can be written in the following weighted forms:

\[\bSigma_{\by,1}(k)=\frac{k_0-m}{k-m}\bSigma_{\by,1}(k_0)+\frac{k-k_0}{k-m}\bSigma_{\by,2}(k_0),\]
\[\bSigma_{\by\by_m,1}(k)=\frac{k_0-m}{k-m}\bSigma_{\by\by_m,1}(k_0)+\frac{k-k_0-m}{k-m}\bSigma_{\by\by_m,2}(k_0)+\frac{1}{k-m}\sum_{t=k_0+1}^{k_0+m}\mathbb E(\by_{t}\by_{t,m}^T),\]
\[\bSigma_{\by_m,1}(k)=\frac{k_0-m}{k-m}\bSigma_{\by_m,1}(k_0)+\frac{k-k_0-m}{k-m}\bSigma_{\by_m,2}(k_0)+\frac{1}{k-m}\sum_{t=k_0+1}^{k_0+m}\mathbb E(\by_{t,m}\by_{t,m}^T).\]

Let $\beta=\dfrac{k_0-m}{k-m}$ and $1-\beta=\dfrac{k-k_0}{k-m}=\dfrac{k-k_0-m}{k-m}+\dfrac{m}{k-m}$, then $\bSigma_{\by\by_m,1}(k)$ and $\bSigma_{\by_m,1}(k)$ admit the following representations involving the weight $\beta$:
\begin{center}
    $\bSigma_{\by\by_m,1}(k)=\beta\bSigma_{\by\by_m,1}(k_0)+(1-\beta)\bSigma_{\by\by_m,2}(k_0)+\bGamma_{1,m},$\\
    $\bSigma_{\by_m,1}(k)=\beta\bSigma_{\by_m,1}(k_0)+(1-\beta)\bSigma_{\by_m,2}(k_0)+\bGamma_{2,m}$,
\end{center} where the norm of $\bGamma_{1,m}$ can be made arbitrarily small when $T$ is sufficiently large. By Lemma~\ref{lemma_mkt}, we have 
\begin{align*}
\|\bGamma_{1,m}\|_2 &=\|\frac{1}{k-m}\sum_{t=k_0+1}^{k_0+m}\mathbb E(\by_{t}\by_{t,m}^T)-\frac{m}{k-m}\bSigma_{\by\by_m,2}(k_0)\|_2\\
                   &\leq  \frac{m}{k-m}(\sup_{k_0+1\leq t \leq k_0+m}\|\mathbb E(\by_{t}\by_{t,m}^T)\|_2+\|\bSigma_{\by\by_m,2}(k_0)\|_2)\\
                   &\leq O(\frac{m}{k-m})\\
                   &=O(T^{-1}).
\end{align*}
Similarly, we have $\|\bGamma_{2,m}\|_2 \leq O(T^{-1})$. By Lemma~\ref{sigma_bound}, when $T$ is sufficiently large, we can obtain
\[
\|\bSigma_{\by,1}(k)\|_2\leq\frac{k_0-m}{k-m}\|\bSigma_{\by,1}(k_0)\|_2+\frac{k-k_0}{k-m}\|\bSigma_{\by,2}(k_0)\|_2 \leq \kappa_2 ,
\]
\[
\|\bSigma_{\by\by_m,1}(k)\|_2\leq\beta\|\bSigma_{\by\by_m,1}(k_0)\|_2+(1-\beta)\|\bSigma_{\by\by_m,2}(k_0)\|_2+\|\bGamma_{1,m}\|_2 \leq \kappa_2c_2,
\]
\[
\|\bSigma_{\by_m,1}(k)\|_2\leq\beta\|\bSigma_{\by_m,1}(k_0)\|_2+(1-\beta)\|\bSigma_{\by_m,2}(k_0)\|_2+\|\bGamma_{2,m}\|_2 \leq \kappa_2c_4.
\]
By the definition of the smallest eigenvalue, we have
\begin{align*}
    \lambda_{\min}(\bSigma_{\by,1}(k))&=\min_{\|\bx\|=1}\bx^T\bSigma_{\by,1}(k)\bx \\
    &\geq \min_{\|\bx\|=1}\left(\frac{k_0-m}{k-m}\bx^T\bSigma_{\by,1}(k_0))\bx + \frac{k-k_0}{k-m}\bx^T\bSigma_{\by,2}(k_0))\bx\right)\\
    &= \frac{k_0-m}{k-m}\lambda_{\min}(\bSigma_{\by,1}(k_0)) + \frac{k-k_0}{k-m}\lambda_{\min}(\bSigma_{\by,2}(k_0))\\
    &\geq \kappa_1.
\end{align*}
Similarly, $\lambda_{\min}(\bSigma_{\by_m,1}(k)) \geq \kappa_1c_3$. Following the proof method of Lemma~\ref{lemma_4}, we can similarly complete the proof using the above bound. 

\begin{lemma}\label{lemma_MhatM}
If Assumptions 1-4 hold, for any $k>k_0+\epsilon T$,
\begin{center}
$\mathbb{E}\|\wh\bM_2(k)-\bM_2(k)\|_2 \le C(\kappa_1^{-2}\kappa_2+\kappa_1^{-1})p(T-k)^{-\frac{1}{2}},$\\
$\mathbb{E}\|\wh\bM_1(k)-\bM_1(k)\|_2 \le C(\kappa_1^{-2}\kappa_2+\kappa_1^{-1})p(k-m)^{-\frac{1}{2}}.$
\end{center}
\end{lemma}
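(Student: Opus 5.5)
We write $\bM_2(k)=\bA\bA^T$ with $\bA=\bSigma_{\by,2}^{-1/2}(k)\bSigma_{\by\by_m,2}(k)\bSigma_{\by_m,2}^{-1/2}(k)$, and $\wh\bM_2(k)=\wh\bA\wh\bA^T$ with the hatted counterpart. The argument is a perturbation bound for this product structure. We use the decomposition
\[
\wh\bM_2(k)-\bM_2(k)=(\wh\bA-\bA)\wh\bA^T+\bA(\wh\bA-\bA)^T ,
\]
so that $\|\wh\bM_2(k)-\bM_2(k)\|_2\le(\|\wh\bA\|_2+\|\bA\|_2)\|\wh\bA-\bA\|_2$. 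Both $\bA$ and $\wh\bA$ are canonical-correlation matrices, so their singular values lie in $[0,1]$. Hence $\|\bA\|_2\le1$ and $\|\wh\bA\|_2\le 1$. This avoids any moment bound on the random factor $\wh\bA$ and reduces the whole problem to bounding $\mathbb{E}\|\wh\bA-\bA\|_2$.

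To bound $\wh\bA-\bA$, we write $\bB=\bSigma_{\by,2}^{-1/2}(k)\bSigma_{\by\by_m,2}(k)$ and $\wh\bB$ for its sample version. Then we telescope:
\[
\wh\bA-\bA=(\wh\bB-\bB)\wh\bSigma_{\by_m,2}^{-1/2}(k)+\bB\bigl(\wh\bSigma_{\by_m,2}^{-1/2}(k)-\bSigma_{\by_m,2}^{-1/2}(k)\bigr).
\]

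\textbf{Second term.} For $k>k_0+\epsilon T$ the second segment lies entirely in regime 2, apart from lag-boundary effects. So Lemma~\ref{sigma_bound} gives $\|\bB\|_2\le\kappa_1^{-1/2}\kappa_2c_2$. A sharper bound is better: $\bB\bSigma_{\by_m,2}^{-1/2}(k)=\bA$ has norm at most $1$, so $\|\bB\|_2\le\|\bSigma_{\by_m,2}^{1/2}(k)\|_2\le(\kappa_2c_4)^{1/2}$. Combining this with Lemma~\ref{lemma_4}(ii) bounds the second term by $C\kappa_2^{1/2}\kappa_1^{-3/2}p(T-k)^{-1/2}$, which is at most $C\kappa_1^{-2}\kappa_2 p(T-k)^{-1/2}$ since $\kappa_1\le\kappa_2$.

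\textbf{First term.} We replace $\wh\bSigma_{\by_m,2}^{-1/2}(k)$ by $\bSigma_{\by_m,2}^{-1/2}(k)$ plus its perturbation. The main part is then bounded by Lemma~\ref{lemma_4}(iii) times $(\kappa_1c_3)^{-1/2}$, giving $C(\kappa_1^{-2}\kappa_2+\kappa_1^{-1})p(T-k)^{-1/2}$, which is exactly the target rate. The cross term, a product of two error terms, is of order $p^2(T-k)^{-1}$ up to $\kappa$ factors. Under the regime $p=o(T^{1/2})$ it is of smaller order. Strictly, bounding the expectation of a product requires Cauchy--Schwarz with second-moment versions of Lemmas~\ref{lemma_sigma}--\ref{lemma_4}. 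These are available because the proof of Lemma~\ref{lemma_sigma} actually controls $\mathbb{E}\|\cdot\|_F^2$.

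\textbf{Segment 1 and the main obstacle.} The bound for $\wh\bM_1(k)$ follows identically using Lemma~\ref{lemma_4_addition}, with $T-k$ replaced by $k-m$. The mixed-regime covariances there still satisfy the eigenvalue bounds established in that lemma. The main obstacle is controlling the random inverse square roots inside expectations, for example $\|\wh\bSigma_{\by_m,2}^{-1/2}(k)\|_2$, which is not deterministically bounded. I would handle it first by working on the event where $\|\wh\bSigma-\bSigma\|_2\le\kappa_1c_3/2$, on which all inverses are bounded by twice their population values. The complement event has probability $O(\kappa_1^{-1}p(T-k)^{-1/2})$ by Markov's inequality and Lemma~\ref{lemma_sigma}. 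If necessary, the statement can be read in the $O_p$ sense the paper uses downstream.
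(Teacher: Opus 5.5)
Your proposal is correct and follows the paper's proof: the same factorization $\bM_2(k)=\bR_2\bR_2^T$ (your $\bA\bA^T$), the same telescoping of $\wh\bR_2-\bR_2$ through $\bSigma_{\by,2}^{-1/2}(k)\bSigma_{\by\by_m,2}(k)$ and $\bSigma_{\by_m,2}^{-1/2}(k)$, the same appeal to Lemma~\ref{lemma_4}(ii)--(iii) and Lemma~\ref{sigma_bound}, and segment 1 handled via Lemma~\ref{lemma_4_addition}. Your refinements repair steps the paper glosses over, since it treats $\|\wh\bSigma_{\by_m,2}^{-1/2}(k)\|_2$ and $\|\wh\bR_2\|_2$ as deterministic: one is $\|\bA\|_2,\|\wh\bA\|_2\le 1$, which for the sample version needs the three blocks computed on a common index set with common centering; the other is truncation on the event where the sample covariances are close, and because $\|\wh\bM_2(k)-\bM_2(k)\|_2\le 2$ always, the complement event contributes at most $2\,\mathbb{P}(\cdot)=O(\kappa_1^{-1}p(T-k)^{-1/2})$, so the bound holds in expectation as stated without the $O_p$ fallback, the Cauchy--Schwarz cross term, or $p=o(T^{1/2})$.
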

\textbf{\proofname.} 
Let $\bR_2=\bSigma_{\by,2}^{-1/2}(k)\bSigma_{\by\by_m,2}(k)\bSigma_{\by_m,2}^{-1/2}(k)$ and $\wh\bR_2=\wh\bSigma_{\by,2}^{-1/2}(k)\wh\bSigma_{\by\by_m,2}(k)\wh\bSigma_{\by_m,2}^{-1/2}(k)$. Then $\bM_2(k)=\bR_2\bR_2{^T}$ and $\wh\bM_2(k)=\wh\bR_2\wh\bR_2{^T}$. We first note that the eigenvalue bounds required for the inverse covariance matrices hold uniformly over the trimmed search set. Indeed, by the weighted representations in Lemmas 4 and 5, the candidate-specific covariance matrices $\bSigma_{\by,i}(k)$ and $\bSigma_{\by_m,i}(k)$ can be expressed as weighted combinations of the regime-specific covariance matrices evaluated at the true change point, up to boundary terms involving at most $m$ observations. Since $m$ is fixed and $k$ belongs to the trimmed search interval, these boundary terms are uniformly negligible. Together with the eigenvalue bounds in Lemma~\ref{sigma_bound}, this implies that, uniformly over admissible $k$ and $i=1,2$,
\[
\lambda_{\min}\{\bSigma_{\by,i}(k)\}\geq C\kappa_1,\qquad
\|\bSigma_{\by,i}(k)\|_2\leq C\kappa_2,
\]
and
\[
\lambda_{\min}\{\bSigma_{\by_m,i}(k)\}\geq C\kappa_1,\qquad
\|\bSigma_{\by_m,i}(k)\|_2\leq C\kappa_2,
\]
for some constant $C>0$ independent of $p,T$ and $k$. Hence the inverse covariance matrices and their inverse square roots are uniformly bounded over the trimmed search set. Combining these uniform bounds with Lemmas 3--5 yields the following perturbation bounds.
\begin{align*}
    \mathbb{E}\|\wh\bR_2-\bR_2\|_2&\leq\|\wh\bSigma_{\by_m,2}^{-\frac{1}{2}}(k)\|\cdot \mathbb{E}\|\hat\bSigma_{\by,2}^{-\frac{1}{2}}(k)\hat\bSigma_{\by\by_m,2}(k)-\bSigma_{\by,2}^{-\frac{1}{2}}(k)\bSigma_{\by\by_m,2}(k)\|\\
    &\quad+\|\bSigma_{\by,2}^{-\frac{1}{2}}(k)\bSigma_{\by\by_m,2}(k)\|\cdot \mathbb{E}\|\wh\bSigma_{\by_m}^{-\frac{1}{2}}(k)-\bSigma_{\by_m}^{-\frac{1}{2}}(k)\|\\
    &\leq\kappa_1^{-\frac{1}{2}}C(\kappa_1^{-\frac{3}{2}}\kappa_2+\kappa_1^{-\frac{1}{2}})p(T-k)^{-\frac{1}{2}}+\kappa_1^{-\frac{1}{2}}\kappa_2C\kappa_1^{-\frac{3}{2}}p(T-k)^{-\frac{1}{2}}\\
    &\leq C(\kappa_1^{-2}\kappa_2+\kappa_1^{-1})p(T-k)^{-\frac{1}{2}}.
\end{align*}
By Assumption~\ref{a4}, we have
\begin{center}
    $\|\bR_2\|_2=\|\bL_1\bSigma_{\bff\bfeta_m}(k)\bSigma_{\bfeta_m}^{-\frac{1}{2}}(k)\|_2 \leq c_3^{-\frac{1}{2}}c_2.$
\end{center}
Therefore, there exists a constant $C$ such that 
\begin{align*}
    \mathbb E\|\wh\bM_2(k)-\bM_2(k)\|_2&\leq \|\wh\bR_2\|\cdot E\|\wh\bR_2-\bR_2 \|_2+\|\bR_2\|\cdot E\|\wh\bR_2-\bR_2\|_2\\
    &\leq C(\kappa_1^{-2}\kappa_2+\kappa_1^{-1})p(T-k)^{-\frac{1}{2}}.
\end{align*}
By the same argument, we have $\mathbb E\|\wh\bM_1(k)-\bM_1(k)\|_2 \le C(\kappa_1^{-2}\kappa_2+\kappa_1^{-1})p(k-m)^{-\frac{1}{2}}.$

\begin{lemma}\label{lemma_lambda_Mbeta}
We define the weighted CCA matrix as $\wt\bM(\beta)=\beta\bM_1(k_0)+(1-\beta)\bM_2(k_0)$, and let $\lambda_j(\widetilde\bM(\beta))$ be the $j$-th largest eigenvalue of $\widetilde \bM(\beta)$, where $\beta \in [(\gamma_1-\epsilon)/\gamma_2,1-\epsilon/\gamma_2]$. If Assumptions 4-6 hold and $k>k_0+\epsilon T$, $\dfrac{m+1}{T}<\epsilon<\gamma_2-\dfrac{k_0}{T}$, $d\leq m < \epsilon T-1$ and $m$ is fixed, then
    \begin{center}
           $\lambda_{r_1+1}(\widetilde \bM(\beta))\geq \epsilon \delta_0^2 c_1^{2}c_4^{-1}\gamma_2^{-1}$, $\lambda_{1}(\widetilde \bM(\beta))\leq c_2^2 c_3^{-1}.$
    \end{center}
\end{lemma}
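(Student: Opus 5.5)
The upper bound follows from the factorized form. I will write $\bM_i(k_0)=\bR_i\bR_i^T$ with $\bR_i=\bL_1^{(i)}\bSigma_{\bff\bfeta_m,i}(k_0)[\bSigma_{\bfeta_m,i}(k_0)]^{-1/2}$, using (\ref{M_yita_simple}) and the orthogonality of $\bL^{(i)}$. Assumption~\ref{a4} then gives
\[
\lambda_1(\bM_i(k_0))=\|\bR_i\|_2^2\le c_2^2c_3^{-1}.
\]
Since $\lambda_1$ is convex on symmetric matrices and $\beta\in[0,1]$,
\[
\lambda_1(\widetilde\bM(\beta))\le\beta\lambda_1(\bM_1(k_0))+(1-\beta)\lambda_1(\bM_2(k_0))\le c_2^2c_3^{-1}.
\]

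For the lower bound on $\lambda_{r_1+1}$ I will use the Courant--Fischer characterization
\[
\lambda_{r_1+1}(\widetilde\bM(\beta))=\max_{\dim\mathcal S=r_1+1}\min_{\bx\in\mathcal S,\|\bx\|=1}\bx^T\widetilde\bM(\beta)\bx.
\]
By the same argument applied to any $r_1$-dimensional subspace, it is also at least
\[
\min_{\dim\mathcal V= p-r_1}\lambda_{\max}\big(\bQ_{\mathcal V}^T\widetilde\bM(\beta)\bQ_{\mathcal V}\big),
\]
where $\bQ_{\mathcal V}$ is an orthonormal basis of $\mathcal V$. Both terms are positive semidefinite, so for every unit $\bx$
\[
\bx^T\widetilde\bM(\beta)\bx\ge(1-\beta)\,\bx^T\bM_2(k_0)\bx .
\]
Next I need a lower bound on $\bM_2(k_0)$ along its signal space. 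Since $\bR_2\bR_2^T=\bL_1^{(2)}\bA\bL_1^{(2)T}$ with $\bA=\bSigma_{\bff\bfeta_m,2}\bSigma_{\bfeta_m,2}^{-1}\bSigma_{\bfeta_m\bff,2}$, we get
\[
\lambda_{\min}(\bA)\ge\sigma_{\min}^+(\bSigma_{\bff\bfeta_m,2})^2\,\lambda_{\min}(\bSigma_{\bfeta_m,2}^{-1})\ge c_1^2c_4^{-1}.
\]
Here $\bSigma_{\bff\bfeta_m,2}$ is $r_2\times mp$ of full row rank by Assumption~\ref{a3}. Hence
\[
\bx^T\bM_2(k_0)\bx\ge c_1^2c_4^{-1}\|\bL_1^{(2)T}\bx\|^2 .
\]

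The key step is then to choose the test subspace $\mathcal S=\mathcal M(\bL_1^{(1)})\oplus\mathrm{span}(\bw)$, where $\bw$ is a unit vector in $\mathcal M(\bL_1^{(2)})\cap\mathcal M(\bL_1^{(1)})^\perp$. More robustly, I take the direction of $\mathcal M(\bL_1^{(2)})$ having the largest component orthogonal to $\mathcal M(\bL_1^{(1)})$, and relate the size of that component to $D$ in Assumption~\ref{a6}. The identity
\[
D^2=1-\tfrac{1}{\min(r_1,r_2)}\|\bL_1^{(1)T}\bL_1^{(2)}\|_F^2
\]
shows that some principal angle $\theta$ between the two spaces satisfies $\sin^2\theta\ge D^2>\delta_0^2$. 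Equivalently, there is a unit $\bu\in\mathcal M(\bL_1^{(2)})$ with $\|(\bI-\bL_1^{(1)}\bL_1^{(1)T})\bu\|^2\ge\delta_0^2$.

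Rather than a max-min over $\mathcal S$, it may be cleaner to bound $\lambda_{r_1+1}$ as the $(r_1+1)$-th eigenvalue of a sum and then use Weyl/Ky Fan-type reasoning. Concretely, I would restrict to the $(r_1+1)$-dimensional space spanned by the eigenvectors of $\beta\bM_1$ together with the orthogonal component $\bw=(\bI-\bP_1)\bu/\|(\bI-\bP_1)\bu\|$. On vectors $\bx=\bv+a\bw$ with $\bv\in\mathcal M(\bL_1^{(1)})$, the $\beta\bM_1$ part dominates in $\bv$, and the $(1-\beta)\bM_2$ part gives at least $(1-\beta)c_1^2c_4^{-1}\delta_0^2$ in $a$. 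The cross terms need control, and this is the main obstacle.

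If the cross terms become troublesome, I would fall back on the simpler bound
\[
\lambda_{r_1+1}(\widetilde\bM)\ge\min_{\bQ}\|\bQ^T\widetilde\bM\bQ\|_2\ge(1-\beta)\,\bw^T\bM_2(k_0)\bw ,
\]
taken over $p-r_1$ dimensional complements. This holds because any $(p-r_1)$-dimensional subspace must meet the $(r_1+1)$-dimensional space $\mathcal M(\bL_1^{(1)})+\mathrm{span}(\bu)$; I would then bound the quadratic form there by $\delta_0^2$ up to constants.

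Finally, $1-\beta=(k-k_0)/(k-m)\ge\epsilon T/(\gamma_2T)=\epsilon/\gamma_2$ because $k\le\gamma_2T$. Combining this with the previous bounds gives $\epsilon\delta_0^2c_1^2c_4^{-1}\gamma_2^{-1}$.
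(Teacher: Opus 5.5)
Your upper bound on $\lambda_1(\widetilde\bM(\beta))$ is correct and is essentially the paper's argument. The lower bound on $\lambda_{r_1+1}$, however, is never proved. You pick a sensible test space $\mathcal S=\mathcal M(\bL_1^{(1)})\oplus\mathrm{span}(\bw)$ but stop at the cross terms, and the fallback does not get around them. A $(p-r_1)$-dimensional subspace meets $\mathcal S$ in \emph{some} unit vector $\bx$, not in $\bw$. So all you obtain is $\lambda_{r_1+1}(\widetilde\bM(\beta))\ge\min_{\bx\in\mathcal S,\|\bx\|=1}\bx^T\widetilde\bM(\beta)\bx$, which is exactly the quantity you could not bound. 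Worse, once you discard $\beta\bM_1(k_0)$ via $\bx^T\widetilde\bM\bx\ge(1-\beta)\bx^T\bM_2(k_0)\bx$, this minimum is in general zero. Indeed $\bM_2(k_0)$ has rank $r_2$, so for $r_2\le r_1$ it annihilates a nonzero vector of the $(r_1+1)$-dimensional $\mathcal S$; and if the two loading spaces are orthogonal it vanishes on all of $\mathcal M(\bL_1^{(1)})\subset\mathcal S$. The term $\beta\bM_1(k_0)$ cannot be dropped, so any valid bound must depend on $\beta$ and not only on $1-\beta$.

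Following the paper does not rescue this. Its proof shows $\bv^T\bM_2(k_0)\bv\ge c_1^2c_4^{-1}\delta_0^2$ for one constructed $\bv$, then applies it to every unit $\bv\in\mathcal S\cap\mathcal M(\bL_1^{(1)})^\perp$; it also has min and max interchanged in Courant--Fischer. In fact the stated constant is unattainable. Take $p=2$, $r_1=r_2=1$, $\bM_1(k_0)=\mu\mathbf{e}_1\mathbf{e}_1^T$, $\bM_2(k_0)=\mu\mathbf{e}_2\mathbf{e}_2^T$ with $\mu=c_1^2c_4^{-1}$; a unit-variance AR(1) factor with swapped loadings realizes this. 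Then $D=1$ and $\lambda_2(\widetilde\bM(\beta))=\mu\min(\beta,1-\beta)$. With $\gamma_1=0.1$, $\gamma_2=0.9$, $k_0\approx0.1T$, $k\approx0.9T$, $\epsilon=0.7$, $\delta_0=0.9$, this gives $\lambda_2\approx0.11\mu$ against a claimed $0.63\mu$.

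What is true, and what the later argument needs, comes from keeping both terms through projections. Set $\Pi_i=\bL_1^{(i)}\bL_1^{(i)T}$. Your bound on the middle factor gives $\bM_i(k_0)\succeq c_1^2c_4^{-1}\Pi_i$, hence $\widetilde\bM(\beta)\succeq c_1^2c_4^{-1}\min(\beta,1-\beta)(\Pi_1+\Pi_2)$. The eigenvalues of $\Pi_1+\Pi_2$ are $1\pm\cos\theta_j$ over the principal angles in $(0,\pi/2)$, together with $2$'s, $1$'s and $0$'s. Counting them gives $\lambda_{r_1+1}(\Pi_1+\Pi_2)\ge1-\cos\theta_{\max}\ge\tfrac12\sin^2\theta_{\max}>\tfrac12\delta_0^2$, by your own principal-angle observation. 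Thus $\lambda_{r_1+1}(\widetilde\bM(\beta))\ge\tfrac12c_1^2c_4^{-1}\delta_0^2\min(\beta,1-\beta)$. Since $\beta=(k_0-m)/(k-m)\ge(\lfloor\gamma_1T\rfloor-m)/(\gamma_2T)$ is bounded away from zero while $1-\beta\ge\epsilon/\gamma_2$, this is at least $C\epsilon\delta_0^2c_1^2c_4^{-1}$ with $C$ depending only on $\gamma_1,\gamma_2$. That suffices for Lemma~\ref{lemma_lambda(k)_bound} and Theorem~\ref{tm1} after adjusting constants.
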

\textbf{\proofname.} It is straightforward to verify that $\text{rank}(\bM_i(k_0))=\text{rank}(\bL_i^{(1)})=r_i$, and the  canonical correlation matrix $\bM_i(k_0)$ admits the form:

\[
\bM_i(k_0)=\bL_1^{(i)}\bSigma_{\bff\bfeta_m,i}(k_0)\bSigma_{\bfeta_m,i}^{-1}(k_0)\bSigma_{\bfeta_m\bff,i}(k_0)\bL_1^{(i)^T}; i=1,2.
\]

Hence, we obtain $\mathcal{M}(\bM_i(k_0)) \subset \mathcal{M}(\bL_1^{(i)})$. Let $\mathcal{M}(\bL_1^{(1)})^{\perp}$ be the orthogonal complementary space of $\mathcal{M}(\bL_1^{(1)})$ and $\dim(\mathcal{M}(\bL_1^{(1)})^{\perp})=p-r_1$. It is not difficult to show that there exist unit vectors $\bw=\bL_1^{(2)}\bu \in \mathcal{M}(\bL_1^{(2)}), \bv \in \mathcal{M} (\bL_1^{(1)})^{\perp}$ such that $\bv^T\bw=\bu^T\bL_1^{(2)^T}\bv>\delta_0$. Since $\bL_1^{(2)}$ is a column orthogonal matrix, we can show that $\|\bu\|=1$ and $\|\bL_1^{(2)^T}\bv\|\geq\bv^T\bw>\delta_0$. When Assumption~\ref{a4} holds, we can obtain
\begin{align*}
     \bv^T\bM_2(k_0)\bv&=\bv^T\bL_1^{(2)}\bSigma_{\bff\bfeta_m,2}(k_0)\bSigma_{\bfeta_m,2}^{-1}(k_0)\bSigma_{\bfeta_m\bff,2}(k_0)\bL_1^{(2)^T}\bv\\
     &\geq \lambda_{\min}(\bSigma_{\bff\bfeta_m,2}(k_0)\bSigma_{\bfeta_m,2}^{-1}(k_0)\bSigma_{\bfeta_m\bff,2}(k_0))\|\bL_1^{(2)^T}\bv\|^2 \\
     &\geq c_1^2c_4^{-1}\delta_0^2 .
\end{align*}
By the Courant-Fischer theorem, we can construct a Rayleigh quotient of $\widetilde\bM(\beta)$ to compute $\lambda_{r_1+1}(\widetilde\bM(\beta))$:
\begin{align*}
    \lambda_{r_1+1}(\widetilde\bM(\beta))&=\min_{\dim(\mathcal{S})=r_1+1}\max_{\substack{\bx\in\mathcal{S}\\\|\bx\|=1}}\bx^T\widetilde\bM(\beta)\bx\\
    &\geq \min_{\substack{\dim(\mathcal{S})=r_1+1 \\ \bv\in \mathcal{S}\cap \mathcal{M} (\bL_1^{(1)})^{\perp} \\ \|\bv\|=1
    }}\beta\bv^T\bM_1(k_0)\bv+(1-\beta)\bv^T\bM_2(k_0)\bv\\
    &\geq (1-\beta)c_1^2c_4^{-1}\delta_0^2\\
    &\geq \epsilon \delta_0^2 c_1^{2}c_4^{-1}\gamma_2^{-1}.
\end{align*}
For the maximum eigenvalue of $\widetilde\bM(\beta)$, we can obtain by Assumption~\ref{a4},
\begin{center}
    $\lambda_1(\widetilde\bM(\beta))=\|\widetilde\bM(\beta)\|_2\leq \beta\|\bM_1(k_0)\|_2+(1-\beta)\|\bM_2(k_0)\|_2\leq c_2^2 c_3^{-1}$ .    
\end{center}
This lemma shows that the convex combination of $\bM_1(k_0)$ and $\bM_2(k_0)$ can be bounded under the stated conditions. We next turn to the eigenvalue bounds for the mixed matrix $\bM_1(k)$.

\begin{lemma}\label{lemma_lambda(k)_bound}
If Assumptions 2-6 hold, then
    \begin{center}
      $\dfrac{1}{4}\epsilon\delta_0^2 c_1^{2}c_4^{-1}\gamma_2^{-1}\leq \lambda_{r_1+1}^{(1)}(k) \leq \lambda_{1}^{(1)}(k) \leq c_2^2c_3^{-1}+\dfrac{1}{2}\epsilon \delta_0^2 c_1^{2}c_4^{-1}\gamma_2^{-1},$
    \end{center}
where $k>k_0+\epsilon T$, $\dfrac{m+1}{T}<\epsilon<\gamma_2-\dfrac{k_0}{T}$ and $d\leq m < \epsilon T-1$.
\end{lemma}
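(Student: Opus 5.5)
The approach is to treat the mixed-segment matrix $\bM_1(k)$, for $k>k_0+\epsilon T$, as a small perturbation of the weighted CCA matrix $\wt\bM(\beta)$ of Lemma~\ref{lemma_lambda_Mbeta}, with $\beta=(k_0-m)/(k-m)$. Weyl's inequality then carries the eigenvalue bounds of Lemma~\ref{lemma_lambda_Mbeta} over to $\bM_1(k)$. It gives
\[
|\lambda_j^{(1)}(k)-\lambda_j(\wt\bM(\beta))|\le \|\bM_1(k)-\wt\bM(\beta)\|_2 \quad\text{for every } j .
\]
Since $\lambda_{r_1+1}(\wt\bM(\beta))\ge \epsilon\delta_0^2c_1^2c_4^{-1}\gamma_2^{-1}=:a$ and $\lambda_1(\wt\bM(\beta))\le c_2^2c_3^{-1}$, the lemma follows once we show
\[
\|\bM_1(k)-\wt\bM(\beta)\|_2\le \tfrac{1}{2}a .
\]
This gives the upper bound $c_2^2c_3^{-1}+a/2$ directly, and the lower bound $a/2\ge a/4$ with room to spare. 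The middle inequality $\lambda_{r_1+1}^{(1)}(k)\le\lambda_1^{(1)}(k)$ is trivial.

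First, I would check that $\beta$ lies in the admissible range of Lemma~\ref{lemma_lambda_Mbeta}. Since $k\le\lfloor\gamma_2T\rfloor$ and $k-k_0>\epsilon T$, we get $1-\beta\ge (k-k_0)/(k-m)\ge \epsilon/\gamma_2$ up to $O(T^{-1})$, which is exactly what the proof of Lemma~\ref{lemma_lambda_Mbeta} uses.

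Next, I would write out $\bM_1(k)$ using the weighted representations from the proof of Lemma~\ref{lemma_4_addition}:
\[
\bSigma_{\by,1}(k)=\beta\bSigma_{\by,1}(k_0)+(1-\beta)\bSigma_{\by,2}(k_0),
\]
and similarly for $\bSigma_{\by\by_m,1}(k)$ and $\bSigma_{\by_m,1}(k)$, each up to $\bGamma_{1,m},\bGamma_{2,m}=O(T^{-1})$. To compare with $\wt\bM(\beta)$, I would work in the standardized coordinates of (\ref{cpf-transform-simple2}), where both regimes share the same standardization $\bSigma_\by^{-1/2}$ and $\Cov(\bfeta_t^{(i)})=\bI_p$. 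In those coordinates the pooled covariance is (up to $O(T^{-1})$) a convex combination of identity-type matrices.

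The main obstacle is exactly here. A CCA matrix of a mixture is not the mixture of the CCA matrices: the middle factor $[\bSigma_{\by_m,1}(k)]^{-1}$ and the cross terms $\beta(1-\beta)$ between the regimes do not split linearly. My first attempt would be a quadratic-form argument rather than a full matrix-norm bound, since only the two eigenvalue bounds are needed.

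For the lower bound, I would take the vector $\bv\in\mathcal M(\bL_1^{(1)})^\perp$ constructed in Lemma~\ref{lemma_lambda_Mbeta}. Since $\bv^T\bM_1(k)\bv$ is the squared norm of a regression of $\bv^T\by_t$ on the past, I would lower-bound it by restricting to the regime-2 contribution $(1-\beta)\bv^T\bSigma_{\by\by_m,2}(k_0)$. The mixed inverse covariance is controlled by the uniform eigenvalue bounds from the proof of Lemma~\ref{lemma_MhatM}, i.e.\ $\lambda_{\max}(\bSigma_{\by_m,1}(k))\le C\kappa_2$. This step may cost a constant factor, which the $1/4$ in the statement absorbs. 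For the upper bound, I would use that canonical correlations are at most one and apply Cauchy--Schwarz blockwise.

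If that route fails, the fallback is the direct perturbation route: expand $\bM_1(k)-\wt\bM(\beta)$ term by term, bound each term by $C(1-\beta)\beta$ times regime-difference norms plus $O(T^{-1})$, and use the condition $(\kappa_2\kappa_1^{-2}+\kappa_2^2\kappa_1^{-3})T^{-1}=o(\epsilon)$ from Theorem~\ref{tm1} to make the remainder at most $a/2$ for large $T$.
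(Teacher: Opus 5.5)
There is a genuine gap, and it sits in the reduction you build everything on. The difference $\bM_1(k)-\wt\bM(\beta)$ is not a boundary effect. The nonlinearity you correctly flag (pooled normalization, pooled $[\bSigma_{\by_m,1}(k)]^{-1}$, cross-regime terms) makes it generically of order $\beta(1-\beta)$. Since $1-\beta\ge\epsilon/\gamma_2$, this is at least comparable to $a$, so Weyl cannot deliver $\|\bM_1(k)-\wt\bM(\beta)\|_2\le a/2$.

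Your fallback fails for the same reason. The condition $(\kappa_2\kappa_1^{-2}+\kappa_2^2\kappa_1^{-3})T^{-1}=o(\epsilon)$ absorbs only $\bGamma_{1,m},\bGamma_{2,m}$, not the $C\beta(1-\beta)$ regime-difference terms, which do not shrink with $T$. The paper takes exactly your first route and closes it by asserting $\wt\bM(\beta)=f(\bA,\bB,\bC)$, i.e.\ that $\bM_1(k)$ and $\wt\bM(\beta)$ differ only through $\bGamma_{1,m},\bGamma_{2,m}$. It uses this for both the Courant--Fischer lower bound and the Weyl upper bound. That is the identity you called false, so the paper does not supply the missing step either.

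Your quadratic-form alternative does not repair this, for three reasons:
\begin{itemize}
\item A single direction $\bv$ only bounds $\lambda_1^{(1)}(k)$. Bounding $\lambda_{r_1+1}^{(1)}(k)$ needs an $(r_1+1)$-dimensional subspace.
\item Dropping the regime-1 part of $\bSigma_{\by_m\by,1}(k)\bSigma_{\by,1}(k)^{-1/2}\bv$ is not a valid lower bound. The relation $\bv\perp\mathcal M(\bL_1^{(1)})$ is orthogonality after $\bSigma_\by^{-1/2}$, not after $\bSigma_{\by,1}(k)^{-1/2}$, so the two parts can cancel.
\item Even granting that step, squaring a cross-covariance with weight $1-\beta$ gives $(1-\beta)^2\asymp\epsilon^2$. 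That is a loss of a factor $\epsilon$, not a constant the $1/4$ can absorb.
\end{itemize}

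The last loss is intrinsic, as this example shows. Take $p=2$, $r_1=r_2=m=d=1$, $\wt\bL^{(1)}=\bI_2$, and $\wt\bL^{(2)}$ the coordinate swap. Let the factors be unit-variance AR(1) with coefficient $\phi$, the idiosyncratic parts white noise, and the regimes independent. Assumptions 2--6 hold with $c_1=c_2=\phi$, $c_3=c_4=\kappa_1=\kappa_2=1$ and any $\delta_0<1$. Moreover $\bSigma_{\by,1}(k)=\bSigma_{\by_m,1}(k)=\bI_2$, and up to $O(T^{-1})$
\[
\bM_1(k)=\mathrm{diag}\bigl(\beta^2\phi^2,(1-\beta)^2\phi^2\bigr),\qquad
\wt\bM(\beta)=\mathrm{diag}\bigl(\beta\phi^2,(1-\beta)\phi^2\bigr).
\]
Take $k_0=T/2$, $\gamma_2=0.9$, $\delta_0=0.9$, $\epsilon=0.02$ and $k$ just above $k_0+\epsilon T$, so $1-\beta\approx0.0385$. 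Then $\|\bM_1(k)-\wt\bM(\beta)\|_2=\beta(1-\beta)\phi^2\approx0.037\phi^2$, far above $a/2\approx0.009\phi^2$. Also $\lambda_2^{(1)}(k)\approx(1-\beta)^2\phi^2\approx0.0015\phi^2$, below the claimed $\tfrac14\epsilon\delta_0^2c_1^2c_4^{-1}\gamma_2^{-1}\approx0.0045\phi^2$. Here $\lambda_2^{(1)}(k)\asymp\epsilon^2$, so the lower bound at rate $\epsilon$ cannot be proved by any argument. At best one can hope for order $(1-\beta)^2$, which would also change the rates in Theorem~\ref{tm1}.

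The upper bound, in contrast, is salvageable through your Cauchy--Schwarz idea in its sharp form. For the quotient $(\bx^T\bSigma_{\by\by_m,1}(k)\bw)^2/\{(\bx^T\bSigma_{\by,1}(k)\bx)(\bw^T\bSigma_{\by_m,1}(k)\bw)\}$, whose supremum is $\lambda_1^{(1)}(k)$, proceed as follows:
\begin{itemize}
\item Bound each regime's numerator by $\lambda_1(\bM_i(k_0))^{1/2}\sqrt{d_ie_i}$, where $d_i=\bx^T\bSigma_{\by,i}(k_0)\bx$ and $e_i=\bw^T\bSigma_{\by_m,i}(k_0)\bw$.
\item Then use $\beta\sqrt{d_1e_1}+(1-\beta)\sqrt{d_2e_2}\le\sqrt{(\beta d_1+(1-\beta)d_2)(\beta e_1+(1-\beta)e_2)}$.
\end{itemize}
This gives $\lambda_1^{(1)}(k)\le\max_i\lambda_1(\bM_i(k_0))\le c_2^2c_3^{-1}$ up to the $O(T^{-1})$ boundary contribution, which the $o(\epsilon)$ condition absorbs. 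The cruder fact that canonical correlations are at most one only gives $\lambda_1^{(1)}(k)\le1$. That need not lie below $c_2^2c_3^{-1}$; in the example $c_2^2c_3^{-1}=\phi^2$. This route is sounder than the paper's upper-bound argument, which again rests on the false identity.
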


\textbf{\proofname.} To obtain a rigorous lower bound for $\lambda_{r_1+1}^{(1)}(k)$, we work on the subspace constructed in Lemma~\ref{lemma_lambda_Mbeta} and control only the boundary correction terms, which are of order $O(T^{-1})$. By Lemma~\ref{lemma_lambda_Mbeta}, there exists an $(r_1+1)$ dimensional subspace $\mathcal{U} \subset \mathcal{M}(\bL_1^{(1)})^{\perp}$ such that 
\begin{align*}
    \min_{\substack{\bu\in \mathcal{U}\\ \|\bu\|=1}} \bu^T  \widetilde\bM(\beta) \bu\geq \frac{1}{2}\epsilon \delta_0^2 c_1^{2}c_4^{-1}\gamma_2^{-1}.
\end{align*}
For $k\geq k_0 +\epsilon T$, write
\begin{center}
    $\bM_1(k) = \widetilde\bM(\beta) + \bR_T(k)$,
\end{center}
where $\bR_T(k)$ collects the boundary correction terms induced by $\bGamma_{1,m}$ and $\bGamma_{2,m}$. Since $\|\bGamma_{1,m}\|_2 + \|\bGamma_{2,m}\|_2 \leq O(T^{-1})$, we have
\begin{align*}
    \sup_{\substack{\bu\in \mathcal{U}\\ \|\bu\|=1}} |\bu^T  \bR_T(k) \bu| \leq CT^{-1}.
\end{align*}
Therefore, we have
\begin{align*}
    \min_{\substack{\bu\in \mathcal{U}\\ \|\bu\|=1}} \bu^T  \bM_1(k) \bu\geq \frac{1}{2}\epsilon \delta_0^2 c_1^{2}c_4^{-1}\gamma_2^{-1} -CT^{-1}.
\end{align*}
By the Courant–Fischer theorem,
\begin{align*}
    \lambda_{r_1+1}(\bM_1(k)) \geq  \frac{1}{2}\epsilon \delta_0^2 c_1^{2}c_4^{-1}\gamma_2^{-1} -CT^{-1}.
\end{align*}
If $T^{-1}=o(\varepsilon)$, then, for sufficiently large $T$, we have $\lambda_{r_1+1}(\bM_1(k)) \geq  \dfrac{1}{4}\epsilon \delta_0^2 c_1^{2}c_4^{-1}\gamma_2^{-1}$.

Let $\bA=\bSigma_{\by,1}(k), \bB=\bSigma_{\by\by_m,1}(k)-\bGamma_1(m), \bC=\bSigma_{\by_m,1}(k)-\bGamma_2(m)$, then we define the function $f(\cdot)$ as:
\begin{center}
    $f(\bH_1,\bH_2,\bH_3)=\bH_1^{-1/2}\bH_2\bH_3^{-1}\bH_2^{T}\bH_1^{-1/2}.$
\end{center}
where $\bH_1$ and $\bH_3$ are reversible. By construction, $\widetilde{\bM}(\beta)=f(\bA,\bB,\bC)$, and $\bM_1(k)=f(\bA,\bB+\bGamma_{1,m},\bC+\bGamma_{2,m}).$

Using the perturbation theory of matrix inversion, we have
\begin{align*}
    \|\bM_1(k)-\widetilde{\bM}(\beta)\|_2
    &=
    \|f(\bA,\bB+\bGamma_{1,m},\bC+\bGamma_{2,m})-f(\bA,\bB,\bC)\|_2 \\
    &\leq
    C(\kappa_2\kappa_1^{-2}+\kappa_2^2\kappa_1^{-3})
    \left(\|\bGamma_{1,m}\|_2+\|\bGamma_{2,m}\|_2\right) \\
    &\leq
    C(\kappa_2\kappa_1^{-2}+\kappa_2^2\kappa_1^{-3})T^{-1}.
\end{align*}
By Weyl's inequality, and Lemma~\ref{lemma_lambda_Mbeta},
\begin{align*}
    \lambda_{1}^{(1)}(k)
    &\leq
    \lambda_1(\widetilde{\bM}(\beta))
    +
    \|\bM_1(k)-\widetilde{\bM}(\beta)\|_2\\
    &\leq
    c_2^2c_3^{-1}
    +
    C(\kappa_2\kappa_1^{-2}+\kappa_2^2\kappa_1^{-3})T^{-1}.
\end{align*}
Since $(\kappa_2\kappa_1^{-2}+\kappa_2^2\kappa_1^{-3})T^{-1}=o(\epsilon)$,
we further obtain, for sufficiently large $T$,
\begin{align*}
    \lambda_{1}^{(1)}(k)
    \leq
    c_2^2c_3^{-1}
    +
    \frac{1}{2}\epsilon\delta_0^2 c_1^{2}c_4^{-1}\gamma_2^{-1}.
\end{align*}
Combining the lower and upper bounds completes the proof.

\begin{lemma}\label{lemma_errorG}
Let Assumptions 1-6 hold. Then
\begin{center}
    $(i)\quad \mathbb{E}|\wh G(k)-G(k)|\leq C(\epsilon^{-1}\kappa_1^{-2}\kappa_2+\kappa_1^{-4}\kappa_2^3)pT^{-1/2},$\\
    $(ii)\quad \mathbb{E}|\wh G(k_0)-G(k_0)|\leq C\kappa_1^{-4}\kappa_2^3pT^{-1/2},$   
\end{center}
where $G(\cdot)$ is the criterion function defined with the $L_2$ norm.
\end{lemma}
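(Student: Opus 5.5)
We bound the error term by term: $\wh G(k)$ and $G(k)$ are both sums over $i=1,2$ of ratios $\lambda_{r_i+1}^{(i)}(k)/\lambda_1^{(i)}(k)$. For each $i$ we use the elementary identity
\[
\frac{\hat a}{\hat b}-\frac{a}{b}=\frac{\hat a-a}{\hat b}-\frac{a(\hat b-b)}{b\,\hat b},
\]
with $a=\lambda_{r_i+1}^{(i)}(k)$ and $b=\lambda_1^{(i)}(k)$, and their sample counterparts. Weyl's inequality gives $|\hat a-a|\le\|\wh\bM_i(k)-\bM_i(k)\|_2$ and $|\hat b-b|\le\|\wh\bM_i(k)-\bM_i(k)\|_2$. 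Lemma~\ref{lemma_MhatM} then bounds the expectation of this norm by $C(\kappa_1^{-2}\kappa_2+\kappa_1^{-1})pT^{-1/2}$, since $k-m\asymp T$ and $T-k\asymp T$ by Lemma~\ref{lemma_mkt}. What remains is to control the denominators $b$ and $\hat b$ from below and the numerator $a$ from above.

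For the mixed segment ($i=1$ when $k>k_0+\epsilon T$), Lemma~\ref{lemma_lambda(k)_bound} supplies the needed eigenvalue bounds. It gives $\lambda_1^{(1)}(k)\ge\lambda_{r_1+1}^{(1)}(k)\ge C\epsilon$ and $a\le b\le C$, so $a/b\le 1$. The first term is therefore of order $\epsilon^{-1}\|\wh\bM_1-\bM_1\|_2$. This is the source of the $\epsilon^{-1}\kappa_1^{-2}\kappa_2$ part of (i), after absorbing $\kappa_1^{-1}$ into $\kappa_1^{-2}\kappa_2$ using $\kappa_1\le\kappa_2$.

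For the homogeneous segment, and for both segments at $k_0$ in (ii), we instead need a lower bound on $\lambda_1^{(i)}(k)$ that does not depend on $\epsilon$. We take $\bM_2(k)=\bM_2(k_0)$ up to an $O(T^{-1})$ boundary perturbation. The Rayleigh argument of Lemma~\ref{lemma_lambda_Mbeta} gives $\lambda_1^{(i)}(k_0)\ge c_1^2c_4^{-1}$. The perturbation $f(\cdot)$ bound from the proof of Lemma~\ref{lemma_lambda(k)_bound} gives $\|\bM_i(k)-\bM_i(k_0)\|_2\le C(\kappa_2\kappa_1^{-2}+\kappa_2^2\kappa_1^{-3})T^{-1}$, which is negligible.

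The main obstacle is the random denominator $\hat b$, since the expectation of $1/\hat b$ is not controlled directly. We plan to split on the event
\[
E=\{\|\wh\bM_i-\bM_i\|_2\le b/2\}.
\]
On $E$ we have $\hat b\ge b/2$, and the bounds above apply. On $E^c$ we use that both ratios lie in $[0,1]$, so $|\hat a/\hat b-a/b|\le1$. Markov's inequality gives $\mathbb P(E^c)\le 2b^{-1}\mathbb E\|\wh\bM_i-\bM_i\|_2$, which is of the same order as the main term.

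The extra factor $\kappa_1^{-4}\kappa_2^3$ is where we are least confident; we expect it to arise when the Lemma~\ref{lemma_MhatM} rate is combined with the uniform $\kappa$-dependent normalizations of $\bR_i$. We would carry out the homogeneous-segment bound by inflating the constants through $\|\bR_i\|_2\le C\kappa_2\kappa_1^{-1}$, which yields a bound of the form $C\kappa_1^{-4}\kappa_2^3pT^{-1/2}$. Adding the two segments gives (i). At $k_0$ both segments are homogeneous, so the $\epsilon^{-1}$ term disappears and (ii) follows.
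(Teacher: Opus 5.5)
Your argument is correct and follows the same skeleton as the paper's proof: Weyl's inequality, Lemma~\ref{lemma_MhatM}, and the $\epsilon$-order lower bound of Lemma~\ref{lemma_lambda(k)_bound} on the mixed segment. It differs at two technical points.

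First, the paper adds and subtracts $\hat a/b$ rather than $a/\hat b$:
\[
\frac{\hat a}{\hat b}-\frac{a}{b}=\frac{\hat a}{\hat b}\cdot\frac{b-\hat b}{b}+\frac{\hat a-a}{b}.
\]
Since $0\le\hat a/\hat b\le 1$ always, only the deterministic $b$ ever appears in a denominator, so the random-denominator ``obstacle'' never arises. Your truncation on $E$ with Markov's inequality is valid and yields at most $6b^{-1}\mathbb{E}\|\wh\bM_i(k)-\bM_i(k)\|_2$. It is, however, a detour created by your choice of identity.

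Second, on the homogeneous segments the paper uses $\lambda_1^{(2)}(k)\ge\kappa_1^2\kappa_2^{-2}$, a crude bound of the type Lemma~\ref{sigma_bound} delivers. Multiplying its reciprocal by the Lemma~\ref{lemma_MhatM} rate $(\kappa_1^{-2}\kappa_2+\kappa_1^{-1})$ is exactly where $\kappa_1^{-4}\kappa_2^3$ comes from.

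Your $\kappa$-free bound $\lambda_1^{(i)}(k_0)\ge c_1^2c_4^{-1}$ is legitimate. The nonzero eigenvalues of $\bM_i(k_0)$ are those of $\bSigma_{\bff\bfeta_m,i}(k_0)[\bSigma_{\bfeta_m,i}(k_0)]^{-1}\bSigma_{\bfeta_m\bff,i}(k_0)$, so they do not depend on the loadings. The $O(T^{-1})$ correction you invoke is negligible under the same condition $(\kappa_2\kappa_1^{-2}+\kappa_2^2\kappa_1^{-3})T^{-1}=o(\epsilon)$ used in Lemma~\ref{lemma_lambda(k)_bound}.

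This bound is sharper than the paper's. It makes the homogeneous term $C\kappa_1^{-2}\kappa_2pT^{-1/2}$, so your route actually proves (i) with $C\epsilon^{-1}\kappa_1^{-2}\kappa_2pT^{-1/2}$ and (ii) with $C\kappa_1^{-2}\kappa_2pT^{-1/2}$. Both imply the stated bounds, because $\kappa_1\le\kappa_2$ gives $\kappa_1^{-2}\kappa_2\le\kappa_1^{-4}\kappa_2^3$.

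So drop the plan in your last paragraph to ``inflate'' constants through $\|\bR_i\|_2$. It misidentifies the origin of the $\kappa_1^{-4}\kappa_2^3$ factor and is unnecessary; simply note that your bound is dominated by the stated one.
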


\textbf{\proofname.} By Lemma~\ref{lemma_mkt}, Lemma~\ref{lemma_MhatM}, Lemma~\ref{lemma_lambda(k)_bound} and Weyl's inequality,  when $k\geq k_0+\epsilon T$ we have
\begin{align*}
    \mathbb{E}|\wh G(k)-G(k)| &=\mathbb{E}|\sum_{i=1}^2 \frac{\wh\lambda_{r_i+1}^{(i)}(k)}{\wh\lambda_1^{(i)}(k)}-\sum_{i=1}^2\frac{\lambda_{r_i+1}^{(i)}(k)}
    {\lambda_1^{(i)}(k)}|\\
    &\leq \mathbb{E}\sum_{i=1}^2 |\frac{\wh\lambda_{r_i+1}^{(i)}(k)}{\wh\lambda_1^{(i)}(k)}-\frac{\lambda_{r_i+1}^{(i)}(k)}{\lambda_1^{(i)}(k)}|\\
    &=\mathbb{E}\sum_{i=1}^2 |\frac{\wh\lambda_{r_i+1}^{(i)}(k)}{\wh\lambda_1^{(i)}(k)}-\frac{\wh\lambda_{r_i+1}^{(i)}(k)}{\lambda_1^{(i)}(k)}+\frac{\wh\lambda_{r_i+1}^{(i)}(k)}{\lambda_1^{(i)}(k)}-\frac{\lambda_{r_i+1}^{(i)}(k)}{\lambda_1^{(i)}(k)}|\\
    &\leq \mathbb{E}\sum_{i=1}^2\left(|\frac{\wh\lambda_{r_i+1}^{(i)}(k)}{\wh\lambda_1^{(i)}(k)}|\cdot\frac{|\wh\lambda_1^{(i)}(k)-\lambda_1^{(i)}(k)|}{\lambda_1^{(i)}(k)}+\frac{|\wh\lambda_{r_i+1}^{(i)}(k)-\lambda_{r_i+1}^{(i)}(k)|}{\lambda_1^{(i)}(k)}\right)\\
    &\leq \mathbb{E}\sum_{i=1}^2\frac{|\wh\lambda_1^{(i)}(k)-\lambda_1^{(i)}(k)|+|\wh\lambda_{r_i+1}^{(i)}(k)-\lambda_{r_i+1}^{(i)}(k)|}{\lambda_1^{(i)}(k)}\\
    &\leq 2\left[ \frac{\mathbb{E}\|\wh\bM_1(k)-\bM_1(k)\|_2}{\frac{1}{4}\epsilon \delta_0^2 c_1^{2}c_4^{-1}\gamma_2^{-1}}+\frac{\mathbb{E}\|\wh\bM_2(k)-\bM_2(k)\|_2}{\kappa_1^2\kappa_2^{-2}}\right]\\
    &\leq 2\left[ \frac{C(\kappa_1^{-2}\kappa_2+\kappa_1^{-1})p(T-k)^{-1/2}}{\frac{1}{4}\epsilon \delta_0^2 c_1^{2}c_4^{-1}\gamma_2^{-1}}+\dfrac{C(\kappa_1^{-2}\kappa_2+\kappa_1^{-1})p(k-m)^{-1/2}}{\kappa_1^2\kappa_2^{-2}} \right]\\
    &\leq C(\epsilon^{-1}\kappa_1^{-2}\kappa_2+\kappa_1^{-4}\kappa_2^3)pT^{-1/2}.
\end{align*}
Similarly, there exists a constant $C$ such that $\mathbb{E}|\wh G(k_0)-G(k_0)|\leq C\kappa_1^{-4}\kappa_2^3pT^{-1/2}$.

\textbf{Proof} \textbf{of} \textbf{Theorem~\ref{tm1}.} We prove the result only for diverging $p$, since the fixed-$p$ case is a special case. Since $G(\wh k)\geq 0$ and $G(k_0)=0$, for any fixed $(m+1)/T<\epsilon<\gamma_2-k_0/T$, let $h(\epsilon)=\epsilon \delta_0^2 c_1^{2}c_4^{-1}\gamma_2^{-1}/(2c_2^2 c_3^{-1}+\epsilon \delta_0^2 c_1^{2}c_4^{-1}\gamma_2^{-1})$, then
\begin{align*}
    \mathbb{P}\{\wh k \geq k_0+\epsilon T\} &= \mathbb{P}\{\wh k \geq k_0+\epsilon T,\wh G(k_0)>\wh G(\wh k)\}\\
    &=\mathbb{P}\{\wh G(k_0)-G(k_0)>\wh G(\wh k)-G(\wh k)+G(\wh k),\wh k \geq k_0+\epsilon T\}\\
    &=\mathbb{P}\{\wh G(k_0)-G(k_0)+ G(\wh k)-\wh G(\wh k)+\frac{3}{4}h(\epsilon)-G(\wh k)>\frac{3}{4}h(\epsilon),\wh k \geq k_0+\epsilon T\}\\
    &\leq \mathbb{P}\{|\wh G(k_0)-G(k_0)|>\frac{1}{4}h(\epsilon)\}+\mathbb{P}\{|\wh G(\wh k)-G(\wh k)|>\frac{1}{4}h(\epsilon),\wh k \geq k_0+\epsilon T\}\\
    &\quad +\mathbb{P}\{G(\wh k)<\frac{1}{2}h(\epsilon),\wh k \geq k_0+\epsilon T\}\\
    &=I_1+I_2+I_3.
\end{align*}
By Lemma~\ref{lemma_lambda(k)_bound}, we can easily obtain $G(\wh k)\geq h(\epsilon)$ under the condition $\wh k \geq k_0+\epsilon T$. Hence $I_3=0$. By Lemma~\ref{lemma_errorG} and Markov inequality, we have 
\begin{center}
    $I_1 \leq C\epsilon^{-1}\kappa_1^{-4}\kappa_2^3 pT^{-1/2}$,
    $I_2\leq C(\epsilon^{-2}\kappa_1^{-2}\kappa_2+\epsilon^{-1}\kappa_1^{-4}\kappa_2^3)pT^{-1/2}.$
\end{center}
Hence, under $p=o\{\min(T^{1/2},\epsilon^{2}\kappa_1^{2}\kappa_2^{-1}+\epsilon\kappa_1^{4}\kappa_2^{-3})\}$,
\begin{center}
    $\mathbb{P}\{\wh k \geq k_0+\epsilon T\}\leq C(\epsilon^{-2}\kappa_1^{-2}\kappa_2+\epsilon^{-1}\kappa_1^{-4}\kappa_2^3)pT^{-1/2}$.
\end{center}
This completes the proof of Theorem~\ref{tm1}. The proof for $\mathbb{P}\{\wh k < k_0-\epsilon T\}$ is analogous. 

\textbf{Proof} \textbf{of} \textbf{Theorem~\ref{tm2}.} By Lemma 3 of \cite{LamYaoBathia_Biometrika_2011} and Lemma~\ref{lemma_MhatM} above, if $p=o\{\min(T^{1/2},\kappa_2^{-1}\kappa_1^2T^{1/2})\}$, 

\begin{center}
    $\mathbb{E}\|\wh\bL_1^{(i)}(k_0)-\bL_1^{(i)}\|_2\leq\dfrac{\mathbb{E}\|\wh\bM_1(k_0)-\bM_1(k_0)\|_2}{\lambda_{r_1}^{(1)}(k_0)}\leq C\kappa_1^{-2}\kappa_2 pT^{-1/2},\quad i=1,2.$
\end{center}
Then we have $\|\wh\bL_1^{(i)}(k_0)-\bL_1^{(i)}\|_2=O_p(\kappa_1^{-2}\kappa_2 pT^{-1/2})$. The proof for $\|\wh\bL_2^{(i)}(k_0)-\bL_2^{(i)}\|_2$ is analogous.
When $p$ is fixed, we similarly obtain $\|\wh\bL_2^{(i)}(k_0)-\bL_2^{(i)}\|_2=O_p(T^{-1/2})$, for any $i=1,2$. This completes the proof.

\textbf{Proof} \textbf{of} \textbf{Theorem~\ref{tm3}.} Let $\widehat{\bf L}_{1}^{(i)}=\widehat{\bf L}_{1}^{(i)}(k_0)$, then for any $i=1,2$ 
\begin{align*}
    \widetilde{D}^2(\mathcal{M}(\widehat{\bf L}_{1}^{(i)}),\mathcal{M}({\bf L}_{1}^{(i)}))&=\dfrac{1}{r_i}\tr\left\{\wh\bL_1^{(i)^T}\left (\wh\bL_1^{(i)}\wh\bL_1^{(i)^T}-\bL_1^{(i)}\bL_1^{(i)^T}\right )\wh\bL_1^{(i)}\right\}\\
    &\leq \|\wh\bL_1^{(i)^T}\left (\wh\bL_1^{(i)}\wh\bL_1^{(i)^T}-\bL_1^{(i)}\bL_1^{(i)^T}\right )\wh\bL_1^{(i)}\|_{2}.
\end{align*}
Note that
\begin{align*}
   \wh\bL_1^{(i)^T}\left (\wh\bL_1^{(i)}\wh\bL_1^{(i)^T}-\bL_1^{(i)}\bL_1^{(i)^T}\right )\wh\bL_1^{(i)}(k_0)&=-\wh\bL_1^{(i)^T}\left (\wh\bL_1^{(i)}-\bL_1^{(i)}\right) \left(\wh\bL_1^{(i)}-\bL_1^{(i)}\right)^T \wh\bL_1^{(i)}\\
   &\quad + \left(\wh\bL_1^{(i)}-\bL_1^{(i)}\right)^T \left(\wh\bL_1^{(i)}-\bL_1^{(i)}\right),
\end{align*}
which implies
\begin{center}
    $\widetilde{D}(\mathcal{M}(\widehat{\bf L}_{1}^{(i)}),\mathcal{M}({\bf L}_{1}^{(i)})) \le \sqrt{2}\|\wh\bL_1^{(i)}(k_0)-\bL_1^{(i)}\|_2.$
\end{center}
The conclusion follows from the above inequality and Theorem~\ref{tm2}. This completes the proof.



\begin{thebibliography}{2}

    \bibitem[Bai, 2003]{bai2003}
Bai, J. (2003). Inferential theory for factor models of large dimensions. {\em Econometrica}, {\bf 71(1)}: 135--171.
    \bibitem[Bai { et al.}(2020)]{BaiHanShi_2020}
Bai, J., Han, X., Shi, Y. (2020). Estimation and Inference of Change Points in High Dimensional Factor Models. {\em Journal of Econometrics} {\bf 219(1)}: 66--100.
    \bibitem[{Bai and Ng(2002)}]{BaiNg_Econometrica_2002}
Bai, J., Ng, S. (2002). Determining the number of factors in approximate factor models. {\em Econometrica} {\bf 70}: 191--221.
    \bibitem[{Box and Tiao(1977)}]{BoxTiao_1977}
Box, G. E. P., Tiao, G. C. (1977). A canonical analysis of multiple time series. {\em Biometrika} {\bf 64}: 355--365.
    \bibitem[Breitung and Eickmeier(2011)]{BreEick_2011}
Breitung, J., Eickmeier, S. (2011). Testing for structural changes in dynamic factor models. {\em Journal of Econometrics} {\bf 163(1)}: 71--84.
    \bibitem[Chamberlain and Rothschild (1983)]{ChamberainRothschild_1983}
Chamberlain, G.,  Rothschild, M. (1983). Arbitrage, Factor Structure, and Mean-Variance Analysis on Large Asset Markets. {\em Econometrica} {\bf 51(5)}: 1281--1304.
    \bibitem[Chang~et~al., 2015]{changguoyao2015}
Chang, J., Guo, B., Yao, Q. (2015).
\newblock High dimensional stochastic regression with latent factors, endogeneity and nonlinearity. \newblock {\em Journal of Econometrics} {\bf 189}(2): 297--312.
    \bibitem[Chen {et al.}(2014)]{Chenetal_2014}
Chen, L., Dolado, J. J., Gonzalo, J. (2014). Detecting big structural changes in large factor models. {\em Journal of Econometrics} {\bf 180}: 30--48.
    \bibitem[{Davis {et al.}(2016)}]{Davis2016}
Davis, R. A., Zang, P., Zheng, T. (2016). Sparse vector autoregressive modeling. {\em Journal of Computational
and Graphical Statistics} {\bf25(4)}: 1077--1096.
    \bibitem[Davydov (1968)]{Davydov_1968} Davydov, Yu, A. (1968). Convergence of Distributions Generated by Stationary Stochastic Processes. {\em Theory of Probability and Its Applications} {\bf 13(4)}: 691--696. 

    \bibitem[{Gao {et al.}(2018)}]{gaoetal2017}
Gao, Z., Ma, Y., Wang, H., Yao, Q. (2019). 
\newblock Banded spatio-temporal autoregressions. 
\newblock {\em Journal of Econometrics} {\bf 208}: 211--230. 
    \bibitem[Gao and Tsay(2019)]{gaotsay_2019}
Gao, Z., Tsay, R. S. (2019). A Structural-Factor Approach to Modeling High-Dimensional Time Series and Space-Time Data. {\em Journal of Time Series Analysis} {\bf 40(3)}: 343–-362. 
    \bibitem[{Han { et al.}(2015)}]{HanLuLiu_2015}
Han, F., Lu, H. and Liu, H. (2015). A direct estimation of high dimensional stationary vector autoregressions. {\em Journal of Machine Learning Research} {\bf 16(1)}: 3115--3150.
    \bibitem[{Lam {et al.}(2011)}]{LamYaoBathia_Biometrika_2011}
Lam, C., Yao, Q., Bathia, N. (2011). Estimation of latent factors for high-dimensional time series. {\em Biometrika} {\bf98}: 901--918.
    \bibitem[Lam and Yao, 2012]{lamyao2012}
Lam, C., Yao, Q. (2012).
\newblock Factor modeling for high-dimensional time series: inference for the number of factors.
\newblock {\em The Annals of Statistics} {\bf 40}(2): 694--726.
    \bibitem[{Liu and Chen(2016)}]{LiuChen_D_2016}
Liu, X., Chen, R., (2016). Regime-switching factor models for high-dimensional time series. {\em Statistica Sinica} {\bf 26}: 1427--1451.
    \bibitem[{Liu and Chen(2020)}]{LiuChen_D_2020}
Liu, X., Chen, R. (2020). Threshold factor models for high-dimensional time series. {\em Journal of Econometrics} {\bf216}: 53--70.     
    \bibitem[Liu and Zhang(2022)]{LiuZhang_2022}
Liu, X., Zhang, T. (2022). Estimating Change-Point Latent Factor Models for High-Dimensional Time Series. {\em Journal of Statistical Planning and Inference} {\bf 217}: 69--91.
    \bibitem[Ma and Su (2018)]{MaSu_2018}
Ma, S., Su, L. (2018). Estimation of large dimensional factor models with an unknown number of change points. {\em Journal of Econometrics} {\bf 207}: 1--29.
    \bibitem[Onatski, 2010]{Onatski_2010}
Onatski, A. (2010). Determining the Number of Factors from Empirical Distribution of Eigenvalues. {\em The Review of Economics and Statistics} {\bf 92(4)}: 1004--1016.
    \bibitem[Onatski, 2012]{Onatski_2012}
Onatski, A. (2012). Asymptotics of the Principal Components Estimator of Large Factor Models with Weakly Influential Factors. {\em Journal of Econometrics} {\bf 168(2)}: 244--258.    
    \bibitem[Pan and Yao, 2008]{panyao2008}
Pan, J.,  Yao, Q. (2008).
\newblock Modelling multiple time series via common factors.
\newblock {\em Biometrika} {\bf 95}(2): 365--379.
    \bibitem[Pelger, 2020]{Pelger_2020}
Pelger, M. (2020). Understanding Systematic Risk: A High-Frequency Approach. {\em Journal of Finance} {\bf 75(4)}: 2179--2220.

    \bibitem[Schmitt, 1992]{schmitt1992}
Schmitt, B. A. (1992). Perturbation bounds for matrix square roots and Pythagorean sums. {\em Linear Algebra and its Applications}, {\bf 174}, 215--227.
    \bibitem[{Shojaie and Michailidis(2010)}]{ShojaieMichailidis_2010}
Shojaie, A., Michailidis, G. (2010). Discovering graphical Granger causality using the truncated lasso penalty. {\em Bioinformatics} {\bf 26}: 517--523.
    \bibitem[{Stock and Watson(2002)}]{StockWatson_2002}
Stock, J. H., Watson, M. W. (2002). Forecasting using principal components from a large number of predictors. {\em Journal of the American Statistical Association}  {\bf 97}: 1167--1179.
    \bibitem[{Stock and Watson(2005)}]{StockWatson_2005}
Stock, J. H., Watson, M. W. (2005). Implications of dynamic factor models for VAR analysis. NBER Working Paper No. 11467. Available at {\em www.nber.org/papers/w11467}.
    \bibitem[{Tiao and Tsay(1989)}]{TiaoTsay_1989}
Tiao, G. C.,  Tsay, R. S. (1989). Model specification in multivariate time series (with discussion). 
\newblock {\em Journal of the Royal Statistical Society: Series B (Statistical Methodology)} {\bf51}: 157--213.
    \bibitem[Wu, 2016]{Wu_2016}
Wu, J. (2016). Robust Determination for the Number of Common Factors in the Approximate Factor Models. {\em Economics Letters} {\bf 144}: 102--106.
    \bibitem[{Xia {et al.}(2017)}]{Xia_TCR_2017}
Xia, Q., Liang, R., Wu, J. (2017). Transformed Contribution Ratio Test for the Number of Factors in Static Approximate Factor Models. {\em Computational Statistics and Data Analysis} {\bf112}: 235--241.


\end{thebibliography}
\end{document}